\newcommand{\cH}{\mathcal{H}}
\newcommand{\I}{\mathcal{I}}
\newcommand{\X}{\mathcal{X}}
\begin{document}
\title{Node Connectivity Augmentation of Highly Connected Graphs}
%
%\titlerunning{Abbreviated paper title}
% If the paper title is too long for the running head, you can set
% an abbreviated paper title here
%
\author{
% Authors Anonymous}
Waldo Galvez\inst{1}
\and Dylan Hyatt-Denesik\inst{2} 
\and Afrouz Jabal Ameli\inst{2}
\and Laura Sanit\`a\inst{3}}

% \author[1]{Dylan Hyatt-Denesik}
% \author[1]{Afrouz Jabal Ameli}
% \author[2]{Laura Sanit\`a}
% %\affil[1]{University of Bremen, Germany, garg@uni-bremen.de}
% \affil[1]{Eindhoven University of Technology, Eindhoven, The Netherlands, \{d.v.p.hyatt-denesik,a.jabal.ameli\}@tue.nl}
% \affil[2] {Bocconi University, Milan, Italy, laura.sanita@unibocconi.it}
%\author[]{}
% \date{}

\authorrunning{W. Galvez et al.}
% \authorrunning{A. Anonymous et al.}

% First names are abbreviated in the running head.
% If there are more than two authors, 'et al.' is used.
% %Institute of Engineering Sciences, Universidad de O'Higgins, Chile}{waldo.galvez@uoh.cl}{https://orcid.org/0000-0002-6395-3322}{[funding]
\institute{Institute of Engineering Sciences, Universidad de O'Higgins, Rancagua, Chile, \email{waldo.galvez@uoh.cl}
\and
Eindhoven University of Technology, Eindhoven, Netherlands, \email{\{d.v.p.hyatt-denesik,a.jabal.ameli\}@tue.nl}
\and
Bocconi University, Milan, Italy, \email{laura.sanita@unibocconi.it}}
\maketitle              % typeset the header of the contribution
\begin{abstract}
Node-connectivity augmentation is a fundamental network design problem. We are given a $k$-node connected graph $G$ together with an additional set of  links, and the goal is to add a cheap subset of links to $G$ to make it $(k+1)$-node connected. 

In this work, we characterize completely the computational complexity status of the problem, by showing hardness for all values of $k$ which were not addressed previously in the literature. 

We then focus on $k$-node connectivity augmentation for $k=n-4$, which corresponds to the highest value of $k$ for which the problem is NP-hard. We improve over the previously best known approximation bounds for this problem, by developing a $\frac{3}{2}$-approximation algorithm for the weighted setting, and a $\frac{4}{3}$-approximation algorithm for the unweighted setting.    
\end{abstract}

\section{Introduction}

Connectivity augmentation is among the most fundamental problems in network design. Here we are given a $k$-(edge- or node-)connected graph and a set of additional edges (called \emph{links}). The goal is to select the cheapest subset of links to add to the graph, in order to increase its connectivity to $k+1$. 

When dealing with \emph{edge-connectivity}, the problem is quite well understood: $k$-edge-connectivity augmentation is APX-hard for any $k \geq 1$, and in fact, for any given $k$ the problem can be reduced to the case $k=2$ (see~\cite{Dinic76}). In the past years, several constant-factor approximation results have been developed in the literature for edge-connectivity augmentation~\cite{DBLP:journals/talg/Adjiashvili19,DBLP:conf/stoc/Byrka0A20,DBLP:journals/algorithmica/CheriyanG18,DBLP:journals/algorithmica/CheriyanG18a,DBLP:conf/soda/Fiorini0KS18,DBLP:conf/stoc/0001KZ18,nutov20202nodeconnectivity,DBLP:journals/corr/abs-2009-13257,traub2022better,traub2022local}, culminating with the recent breakthroughs in~\cite{DBLP:journals/corr/abs-2012-00086,traub20231}.

When dealing with \emph{node-connectivity}, the problem is instead much less understood, both from a computational complexity and an approximation point of view.
From a complexity perspective, there is no known reduction that allows us to focus on a particular value of $k$ (like e.g. $k=2$ for the edge-connectivity case). 
%In fact, it is not known exactly for which values of $k$ the problem is NP-hard. 
The problem is known to be APX-hard for $ k \geq 1, k \le n-n^c$ (with $n$ being the number of nodes of the graph and $c$ being any fixed constant)~\cite{KortsarzKL2004}\footnote{Although not stated explicitly in the reference, the reduction requires that $k \le n-n^c$ in order to have polynomial running time. A more detailed discussion in Appendix~\ref{sec:appendix_prevhard}} even if all the link weights are $1$ (known as the \emph{unweighted setting}), but for high values of $k$ the complexity is not completely settled. Bérczi et al.~\cite{BERCZI2012565} observe that the problem is polynomial-time solvable for $k=n-2$ and $k=n-3$, and it is NP-hard for $k=n-4$ when links allow general weights (known as the \emph{weighted setting}). Still, the complexity status is open for $k=n-4$ in the unweighted setting, as well as for any value of $k$ somewhat between $n-4$ and the previously mentioned bound of $n-n^c$.

Also from an approximation perspective, the results are fewer compared to the edge-connectivity case, and mostly concentrate on small values of $k$.
For $k=1$, Frederickson and JáJá~\cite{DBLP:journals/siamcomp/FredericksonJ81} give a 2-approximation. Recently, this bound has been improved in the unweighted case~\cite{angelidakis2022node,Jabal2022,nutov20202nodeconnectivity}.
A $2$-approximation is also known for $k=2$~\cite{auletta199921}, which can be improved in the unweighted setting when the input graph is a cycle~\cite{10.1007/978-3-030-92702-8_1}. For any fixed $k$, the weighted version of the problem admits a $(4+\varepsilon)$-approximation if $n$ is large enough~\cite{DBLP:conf/soda/Nutov20} (see also~\cite{DBLP:journals/siamcomp/CheriyanV14}). For arbitrary values of $k$, it is still open whether a constant approximation is achievable or not, and the best known factor is $O\left(\log\left(\min\left\{\frac{n}{n-k},n-k\right\}\right)\right)$~\cite{Nutov14}. Note that this implies a $O(1)$-approximation for most values of $k$, except for 
%intermediate things like 
{ ``intermediate'' ones like, e.g.,  $n-n^c$ for $0<c<1$.}

%For any $k \le n-n^c$ for some $c>0$ constant, the $k$-Vertex-Connectivity Augmentation problem is APX-hard~\cite{KortsarzKL2004}\footnote{Although not stated explicitly in the reference, the reduction requires that $k \le n-n^c$ in order to have polynomial running time.}.

\subsection{Our results and techniques}
As a first result, we completely settle the computational complexity status of the problem. In Section~\ref{sec:hardness}, we prove that $(n-d)$-node connectivity augmentation is NP-hard (in fact, APX-hard) for $d \geq 4, d=O(n^c)$, with $c$ being any positive constant, even in the unweighted setting. Our result complements the result of Kortsarz et al.~\cite{KortsarzKL2004}, and hence completes the picture regarding the complexity status of this classical and fundamental problem.

% \pagenumbering{arabic}

We prove this result by first showing APX-hardness for the case $d=4$ (i.e. $(n-4)$-node-connectivity augmentation), reducing from a variant of SAT denoted as 3-SAT-4, whose optimization version is known to be APX-hard~\cite{berman2003approximation}. 
We then employ an approximation preserving reduction from $(n-d)$-node connectivity augmentation to $(n-(d+1))$-node connectivity augmentation, that holds whenever
$d=O(n^c)$. We remark here that our reduction is different from the hardness construction of~\cite{BERCZI2012565}, as their result heavily relies on the fact that the links have different weights. 

We then focus on the approximability of the $k$-node-connectivity augmentation problem for $k=n-4$ which, as we just discussed, corresponds to the highest value of $k$ for which the problem is NP-hard.   
The best approximation factor known for this problem is a $2$-approximation (which follows e.g. from~\cite{Nutov14}). In Section~\ref{sec:3/2approx}, we improve over this result by giving a $\frac{3}{2}$-approximation for the problem. In Section~\ref{sec:4/3approx}, we improve further the approximability to $\frac{4}{3}$ in the unweighted setting. 

Our approximation algorithms are combinatorial, and rely on the following ingredients. First, we reformulate the problem as a particular \emph{covering} problem in the complement of the graph, as done also in~\cite{BERCZI2012565} (see also~\cite{BercziV10}). In particular, one easily observes that an $(n-d)$-node-connected graph $G$ is \emph{not} $(n-(d-1))$-node-connected if and only if the complement of $G$ has some complete bipartite graphs (which we call \emph{obstructions}) as edge-induced subgraphs. Therefore, the augmentation problem can be reformulated as the problem of selecting links in the complement of $G$ in order to cover all such obstructions (see Definition~\ref{def:dobstruction}). For $d=4$, this covering problem becomes a generalization of the \emph{edge cover} problem, in which one wants to select  links, but in addition to covering all vertices, one also wants to cover all cycles of length $4$ in the graph. For the weighted version of this problem, 
we first compute a suitable partial (infeasible) solution that has at most half of the cost of the optimal one, and then show that we can solve the remaining instance in polynomial time via a reduction to edge cover. For the unweighted version, the approach is somewhat opposite: our algorithm starts by computing a particular edge cover of the instance, and then adjusts the solution to cover the additional obstructions represented by the 4-cycles, while carefully bounding the cost increase. 

As our results show, better constant approximation factors can be obtained (not surprisingly) using ad-hoc techniques for very large values of $k$. We believe that studying the problem for such values can be instrumental to the development of useful tools and techniques, which seems to be needed to understand the approximability of the general node-connectivity augmentation problem.

\subsection{Organization of the paper}

In Section~\ref{sec:prelims}, we reformulate the $(n-d)$-node-connectivity augmentation problem as a \emph{covering} problem that we refer to as $d$-Obstruction Covering, and further investigate the properties of $d$-Obstruction Covering instances for the case $d=4$. In Section~\ref{sec:3/2approx}, we provide our $3/2$-approximation algorithm for the $(n-4)$-Node-Connectivity Augmentation problem, and then in Section~\ref{sec:4/3approx} we provide an improved $4/3$-approximation for the unweighted version of the problem. In Section~\ref{sec:hardness} we prove that the $(n-d)$-Node-Connectivity Augmentation problem is APX-hard for any $d\ge 4$, with $d=O(n^c)$ for any fixed constant $c$. Due to space constraints, several proofs are deferred to the Appendix.

\section{From connectivity augmentation to obstruction covering}
\label{sec:prelims}

As a first step, we reformulate our problem as a covering problem, called the $d$-Obstruction Covering problem, which we define next.
In the following, we denote by $K_{i,j}$ the complete bipartite graph that has $i$ nodes in one bipartition and $j$ nodes in the other bipartition.

\begin{definition}[$d$-Obstruction Covering]\label{def:dobstruction}%how I define this problem can and probably should be changed. particularly the name isn't very good.
    We are given a value $d \le n$, a graph $G=(V,E)$ of $n$ nodes that has no $K_{i,j}$ edge-induced subgraph for any $i,j : i+j >d$, and a subset of links $L \subseteq E$ with associated costs $c(\ell) \geq 0, \forall \ell \in L$. We want to compute a minimum-cost subset of links $F\subseteq L$ such that every $K_{i,j}$ edge-induced subgraph of $G$ with $i+j = d$ contains a link of $F$.
\end{definition}

For a given $d$-Obstruction Covering instance, we call $K_{i,j}$ a \emph{forbidden} subgraph if $i+j > d$, and an \emph{obstruction} if $i+j = d$. An obstruction $K_{i,j}$ is \emph{covered} by link $\ell \in L$ if $\ell$ is an edge of $K_{i,j}$.

As already observed in~\cite{BERCZI2012565,BercziV10}, the $(n-d)$-node-connectivity augmentation problem is equivalent to solving $d$-Obstruction Covering in the complement of the original graph, hence being an equivalent reformulation. See Appendix~\ref{sec:from_connectivity_to_obstructions} for a detailed proof. 
\begin{theorem}
\label{thm:from_connectivity_to_obstructions}
    There is an approximation-preserving reduction from the $(n-d$)-node-connectivity augmentation problem to the $d$-Obstruction Covering problem.
\end{theorem}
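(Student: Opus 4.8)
The plan is to establish a direct correspondence between the two problems via graph complementation, preserving the set of feasible solutions and their costs exactly, which is even stronger than an approximation-preserving reduction. Given an instance of $(n-d)$-node-connectivity augmentation — a $(n-d)$-node-connected graph $H=(V,E_H)$ with a set of links $\Lambda$ with costs — I would form the complement graph $G=(V,\binom{V}{2}\setminus E_H)$ and take as the link set $L := \Lambda$ (viewed now as edges of $G$, which is legitimate since each link is a non-edge of $H$). The cost function carries over unchanged. I must then check that $G$ together with $L$ is a legal $d$-Obstruction Covering instance, i.e.\ that $G$ has no $K_{i,j}$ edge-induced subgraph with $i+j>d$; this is exactly the statement that $H=\overline{G}$ is $(n-d)$-node-connected, translated through the complement.

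The crux is the following structural characterization, which I would prove as the main lemma: for a graph $H$ on $n$ vertices and any $t$, $H$ fails to be $t$-node-connected if and only if $\overline{H}$ contains a $K_{i,j}$ (with $i,j\ge 1$) as an edge-induced subgraph with $i+j = n-t$. The forward direction uses a minimum vertex cut $S$ with $|S| = \kappa(H) < t$: the vertex set $V\setminus S$ splits into two nonempty parts $A,B$ with no $H$-edges between them, so in $\overline{H}$ every vertex of $A$ is adjacent to every vertex of $B$, yielding a $K_{|A|,|B|}$; one then argues $|A|+|B| = n - |S| = n - \kappa(H) \ge n-t+1$, and by discarding vertices (shrinking $A$ and $B$) one obtains an edge-induced $K_{i,j}$ with exactly $i+j = n-t$. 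Conversely, a $K_{i,j}$ edge-induced subgraph of $\overline{H}$ on vertex parts $A,B$ means there are no $H$-edges between $A$ and $B$, so $V\setminus(A\cup B)$ is a vertex cut of $H$ (after checking the standard non-degeneracy: if $A\cup B = V$ then $H$ is disconnected, handled separately) of size $n-(i+j)$, whence $\kappa(H)\le n-(i+j)$. Applying this with $t = n-d$ gives the claim that $H$ is $(n-d)$-connected iff $\overline{H}=G$ has no forbidden subgraph; and applying it with $t = n-d+1$ shows that adding $F\subseteq L$ to $H$ makes it $(n-d+1)$-connected iff $G - F$ (equivalently, $\overline{H\cup F}$) has no $K_{i,j}$ with $i+j = d$, i.e.\ iff $F$ covers every obstruction of $G$.

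Combining these, $F$ is a feasible augmentation for $H$ exactly when $F$ is a feasible obstruction cover for $G$, and the costs agree, so optimal solutions correspond and any $\alpha$-approximate solution to one instance is an $\alpha$-approximate solution to the other. The reduction is clearly polynomial-time (complementing a graph is linear in $n^2$). I expect the main obstacle to be the careful handling of the edge-\emph{induced} qualifier and the degenerate cases — in particular ensuring one can always trim a large complete bipartite subgraph down to one with $i+j$ \emph{exactly} $d$ while keeping both sides nonempty, and dealing separately with the case where the relevant part spans all of $V$ (disconnected $H$, or $d$ small). These are routine but need to be stated precisely; the rest is bookkeeping through the complement.
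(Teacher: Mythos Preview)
Your proposal is correct and follows essentially the same approach as the paper: complement the input graph, prove the structural lemma that $H$ has a vertex cut of size $r$ if and only if $\overline{H}$ contains a $K_{i,j}$ with $i+j=n-r$, and conclude that feasible solutions (and their costs) for the two problems coincide exactly. The paper's Lemma~\ref{lem:CutsInTheComplement} is precisely your ``crux'' lemma, and the remainder of the argument is the same bookkeeping you outline.
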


Using Theorem~\ref{thm:from_connectivity_to_obstructions}, we can focus on showing a $3/2$-approximation algorithm for the $4$-Obstruction Covering problem.
{ We denote by $(G=(V,E),d,L,c)$ a $d$-Obstruction Covering instance}. {If $c(\ell)=1$ for all $\ell\in L$, then we call this instance \emph{unweighted} and refer to it as $(G,d,L)$ instead. Note that an unweighted instance $(G,d,L)$ is equivalent to instance $(G,d,E,c)$ where $c(e) =1$ for $e\in L$, and $c(e)=\infty$ otherwise. }
We refer to $K_{2,2}$ edge-induced subgraphs of $G$ as \emph{squares}, and to nodes that do not belong to any square as \emph{lonely nodes}.  
{ For any subgraph $C$ of $G$, we denote by $V(C)$ the set of its vertices, and by $E(C)$ the corresponding set of edges}.

%To show this we will instead consider the following $(n-4)$-Obstruction Covering instance. Given a $(n-4)$-Node-Augmentation instance $G'$ with links $L'$, and edge costs $c'$, the corresponding $(n-4)$-Obstruction Covering instance is the complementary graph $G=(V,E)$ with links $L = E$ and costs $c: E \rightarrow \mathbb{R}$, note that $L'\subseteq E$,  and for $\ell\in L'$, $c(\ell) = c'(\ell)$, and for $\ell \in L \backslash L'$ $c(\ell)=\infty$. 

\begin{definition}
%$e=uv$
    A graph is called  a \textit{ladder} if it has maximum degree $3$, and one can label its nodes as $v_1, v_2, \ldots, v_{r}, u_1, u_2, \ldots, u_r$ ($r\ge2$) such that, for every $1\le i < r$, the subgraph induced by the nodes $u_i,$ $v_i,$ $v_{i+1},$ and $u_{i+1}$ is a square with edges $u_iv_i,v_iv_{i+1},v_{i+1}u_{i+1},u_{i+1}u_i$. We refer to the number of squares in this sequence ($r-1$) as the length of the ladder. 
    A graph is called a \textit{hexagon} if it has a spanning edge-induced subgraph that is isomorphic to the graph in Figure~\ref{fig:hexagon}.(b).
\end{definition}
\begin{figure}[t!]
     \centering
     (a)
         \centering
         \begin{tikzpicture}[scale=0.7]
            % Node styles
            \tikzset{black dot/.style={draw=black, very thick, circle,minimum size=0pt, inner sep=1pt, outer sep=1pt,fill=black}}
            \tikzset{terminal/.style={draw=black,  thick,minimum size=0pt, inner sep=2.5pt, outer sep=1pt}}
            \tikzset{P node/.style={fill={rgb,255: red,20; green,154; blue,0}, draw={rgb,255: red,20; green,154; blue,0}, circle, minimum size=0pt,inner sep=1pt, outer sep=1pt}}
            \tikzset{Writing/.style={shape=circle} }
        
            % Edge styles
            \tikzstyle{witness edge}=[-, draw={rgb,255: red,195; green,0; blue,3}, very thick]
            \tikzstyle{T edges}=[-, very thick]
            \tikzstyle{new witness}=[-, draw={rgb,255: red,195; green,0; blue,3}, dashed, very thick]
            \tikzstyle{connected terminals}=[-, draw=black, dashed, very thick]
            \tikzstyle{P}=[-, draw={rgb,255: red,20; green,154; blue,0}, very thick]
                
            \node [style=black dot] (0) at (-2, 1) {};
            \node [style=black dot] (1) at (-2, -0.5) {};
            \node [style=black dot] (2) at (1, -0.5) {};
            \node [style=black dot] (3) at (1, 1) {};
            \node [style=black dot] (4) at (-5, 1) {};
            \node [style=black dot] (5) at (-3.5, 1) {};
            \node [style=black dot] (6) at (-3.5, -0.5) {};
            \node [style=black dot] (7) at (-0.5, -0.5) {};
            \node [style=black dot] (8) at (-0.5, 1) {};
            \node [style=black dot] (9) at (-5, -0.5) {};
            \node [style=black dot] (10) at (2.5, -0.5) {};
            \node [style=black dot] (11) at (2.5, 1) {};
            \node (12) at (-4.75, 1.25) {$v_1$};
            \node (13) at (-3.25, 1.25) {$v_2$};
            \node (14) at (-1.75, 1.25) {$v_3$};
            \node (15) at (-0.25, 1.25) {$v_4$};
            \node (16) at (1.25, 1.25) {$v_5$};
            \node (17) at (2.75, 1.25) {$v_6$};
            \node (18) at (-4.75, -0.25) {$u_1$};
            \node (19) at (-3.25, -0.25) {$u_2$};
            \node (20) at (-1.75, -0.25) {$u_3$};
            \node (21) at (-0.25, -0.25) {$u_4$};
            \node (22) at (1.25, -0.25) {$u_5$};
            \node (23) at (2.75, -0.25) {$u_6$};
        
            \draw [style=T edges] (4) to (9);
            \draw [style=T edges] (9) to (10);
            \draw [style=T edges] (10) to (11);
            \draw [style=T edges] (11) to (4);
            \draw [style=T edges] (5) to (6);
            \draw [style=T edges] (1) to (0);
            \draw [style=T edges] (8) to (7);
            \draw [style=T edges] (2) to (3);
            \draw [style=T edges, bend right=15] (9) to (10);
        \end{tikzpicture} 
     % \begin{subfigure}[b]{0.3\textwidth}
     % \end{subfigure}
     % \hfill
         (b)
        \centering
        \begin{tikzpicture}[scale=0.7]
            % Node styles
            \tikzset{black dot/.style={draw=black, very thick, circle,minimum size=0pt, inner sep=1pt, outer sep=1pt,fill=black}}
            \tikzset{terminal/.style={draw=black,  thick,minimum size=0pt, inner sep=2.5pt, outer sep=1pt}}
            \tikzset{P node/.style={fill={rgb,255: red,20; green,154; blue,0}, draw={rgb,255: red,20; green,154; blue,0}, circle, minimum size=0pt,inner sep=1pt, outer sep=1pt}}
        
            % Edge styles
            \tikzstyle{witness edge}=[-, draw={rgb,255: red,195; green,0; blue,3}, very thick]
            \tikzstyle{T edges}=[-, very thick]
            \tikzstyle{new witness}=[-, draw={rgb,255: red,195; green,0; blue,3}, dashed, very thick]
            \tikzstyle{connected terminals}=[-, draw=black, dashed, very thick]
            \tikzstyle{P}=[-, draw={rgb,255: red,20; green,154; blue,0}, very thick]
            
            \node [style=black dot] (0) at (0, 0) {};
            \node [style=black dot] (1) at (0, 1.25) {};
            \node [style=black dot] (2) at (-1.25, -0.75) {};
            \node [style=black dot] (3) at (1.25, -0.75) {};
            \node [style=black dot] (4) at (1.25, 0.75) {};
            \node [style=black dot] (5) at (-1.25, 0.75) {};
            \node [style=black dot] (6) at (0, -1.25) {};
            \node (7) at (0.25, 1.5) {$v_1$};
            \node (8) at (1.5, 0.75) {$v_2$};
            \node (9) at (0.25, 0.2) {$v_0$};
            \node (10) at (-1.5, 0.75) {$v_6$};
            \node (11) at (-1.5, -0.75) {$v_5$};
            \node (12) at (0.25, -1.5) {$v_4$};
            \node (13) at (1.5, -0.75) {$v_3$};
            \draw [style=T edges] (1) to (0);
            \draw [style=T edges] (0) to (2);
            \draw [style=T edges] (0) to (3);
            \draw [style=T edges] (4) to (3);
            \draw [style=T edges] (4) to (1);
            \draw [style=T edges] (1) to (5);
            \draw [style=T edges] (5) to (2);
            \draw [style=T edges] (2) to (6);
            \draw [style=T edges] (6) to (3);
        \end{tikzpicture}
     % \begin{subfigure}[b]{0.35\textwidth}
     %     % \caption{ }
     % \end{subfigure}
     % \hfill
     \caption{(a) An example of a ladder $C$, with $r=6$.  %Note, if this example contained the edge $v_1v_6$ it would still satisfy the definition of a ladder, even though $v_1u_1u_6v_6$ forms another square.
     (b) An example of a hexagon.}%, with corners $v_2$, $v_4$, and $v_6$.}
     \label{fig:hexagon}
\end{figure}
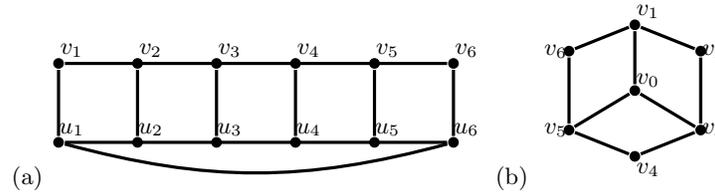

For a given ladder, a labelling of the nodes according to the above definition is called a \emph{consistent} labelling. 

Let $C$ be a hexagon or a ladder, that is an edge-induced subgraph of a given graph $G=(V,E)$. We will let $\delta(V(C))$ denote the set of edges between $V(C)$ and $V\setminus V(C)$, and to simplify notation, we will let  $\delta(C)\coloneqq \delta(V(C))$.
We say that the degree of $C$ is $|\delta(C)|$ (i.e., the number of edges between $V(C)$ and $V\setminus V(C)$). 
We say that a node $v$ is a \emph{corner} of $C$, if $v$ is incident to an edge in $\delta(C)$.
% \todo{: In this paragraph $\delta(V(C))$ is treated like $\delta(C)$. To reduce inconsistency, please stick to one notation.
% \\
% D: added note in red to say that $\delta(C)$ is simplified form of the notation for $\delta(V(C))$} 
Observe that if $G$ is a graph of a $4$-Obstruction Covering instance, then a corner $v$ has degree $3$ because $K_{1,4}$ is a forbidden subgraph. In this case, note that a hexagon has at most three corners and a ladder has at most four corners. 
From now on, whenever we use the word subgraph, we refer to edge-induced subgraph (unless specified otherwise).

Let $(G=(V,E),4,L,c)$ be a $4$-Obstruction Covering instance.
The following theorem is a useful structural result, which states that we can decompose $G$ into node-disjoint hexagons, lonely nodes, and ladders of maximal length. Its proof can be found in Appendix~\ref{sec:ChainDecomposition}.
\begin{theorem}\label{thm:ChainDecomposition}
     We can find in polynomial time node-disjoint subgraphs  $R, G_1, G_2,$ $ \ldots, G_k$ of $G$ such that:
    \begin{itemize}
        \item Each $G_i$ is a hexagon or a ladder of maximal length. If $G_i$ is a ladder, then we also have a consistent labelling. 
        \item $V(R) = V\backslash \left( V(G_1)\cup V(G_2) \ldots \cup V(G_k) \right)$ is a set of lonely nodes.
        \item For every square subgraph $S$ of $G$, there exists an index $i$ such that $V(S)\subseteq V(G_i)$.
    \end{itemize}
\end{theorem}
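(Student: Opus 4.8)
The plan is to build the decomposition greedily by repeatedly extracting a maximal ladder or a hexagon, and then argue that whatever remains (the set $R$) consists only of lonely nodes. First I would observe that any square $S$ of $G$ cannot share exactly one or two nodes with another square in a "crossing" way without creating a forbidden $K_{i,j}$ with $i+j>4$: in a $4$-Obstruction instance, if a node $v$ lies in two squares $S_1,S_2$ sharing a common neighbour structure, the union tends to contain a $K_{1,4}$ or $K_{2,3}$, which is forbidden. So the key local claim to establish is that the squares containing a fixed node $v$ fit together only in the "ladder" pattern (or form a hexagon), i.e. the "square graph" on $V$ (where two squares are adjacent if they share an edge) has a very restricted structure. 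This is the combinatorial heart of the argument and I expect it to be the main obstacle.

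Concretely, I would proceed as follows. (1) Fix a node $v$ contained in at least one square, and let $\Sscr_v$ be the set of all squares through $v$. Using the forbidden-subgraph hypothesis (no $K_{i,j}$ with $i+j>4$, in particular no $K_{1,4}$ and no $K_{2,3}$), show that $v$ has degree at most $3$, and carefully enumerate how two or three squares through $v$ can overlap; conclude that the squares through $v$, together with the squares they chain to, form either a ladder or a hexagon, and that this maximal object is unique. (2) Define $G_i$ to be such a maximal ladder (with a consistent labelling read off from the chaining order) or hexagon; by the uniqueness in (1), distinct maximal objects obtained from different starting nodes are node-disjoint, and every square is contained in exactly one of them — this gives the third bullet. (3) Set $R$ to be the induced subgraph on the leftover nodes; by construction no leftover node is in any square, so $V(R)$ is a set of lonely nodes, giving the second bullet. (4) Maximality of each ladder is immediate from the extraction procedure (we keep extending a ladder as long as the next square is available), and polynomial running time follows since there are $O(n^4)$ candidate squares and each extension step is a local check; this gives the first bullet.

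For the algorithmic wrapper I would phrase it as: enumerate all squares of $G$ in polynomial time; build an auxiliary graph $H$ whose vertices are the squares and whose edges connect two squares sharing an edge of $G$; by the structural claim in step (1) each connected component of $H$ corresponds to a ladder or a hexagon, and the labelling/length can be computed by a linear scan of the component. The correctness of "each component is a ladder or hexagon" is exactly the local analysis, so the whole proof reduces to that case analysis.

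The step I expect to be the real work is step (1): ruling out all the "bad" ways squares can overlap. The danger cases are (i) three squares pairwise sharing a common node, which should force a $K_{1,3}$ extended to a $K_{1,4}$ or a $K_{2,3}$; (ii) two squares sharing a single node but no edge, which should again produce a high-degree vertex or a $K_{2,3}$; and (iii) a ladder "closing up" into something other than the hexagon of Figure~\ref{fig:hexagon}(b), which one excludes by checking that any other closure creates a forbidden subgraph. Each of these is a finite check against the forbidden $K_{i,j}$ list, but getting the enumeration complete and airtight is where the care is needed; once that is done, the disjointness, covering, maximality, and running-time claims all follow routinely.
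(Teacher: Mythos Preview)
Your proposal is correct and essentially the same approach as the paper's. The paper formalizes your step~(1) case analysis as two concrete extension lemmas---one showing that a square overlapping a ladder either lengthens it or closes it into a hexagon, and one showing that a square overlapping a hexagon yields a ladder of length three---and then iteratively extracts a square and extends it to maximality, maintaining the invariant that every square touching an extracted $G_j$ is already contained in it; your auxiliary-graph-on-squares framing is just a repackaging of the same argument.
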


The decomposition provided by Theorem~\ref{thm:ChainDecomposition} is useful for our purposes, because it is possible to develop a Dynamic Programming based algorithm that solves the $4$-Obstruction Covering problem when restricted to a ladder or a hexagon (in fact, something slightly more general than that). The following theorem, whose proof is deferred to Appendix~\ref{sec:DP}, provides a precise statement of which instances can be solved.
\begin{restatable}{theorem}{DP}
    \label{alg:DP}
    Let $C=(V,E)$ be a ladder with a given consistent labelling or a hexagon.  
    Given edge weights $w:E\rightarrow \mathbb{R}$ and node covering requirements $h: V \rightarrow \{0,1\}$, we can find in polynomial time a minimum cost set $F\subseteq E$ such that, for every square $S$ in $C$, $F\cap E[S] \neq \emptyset$, and for every $v\in V$ with $h(v)=1$, at least one edge of $F$ is incident to $v$. %$v\in V[F]$.
\end{restatable}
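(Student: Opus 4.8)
The plan is to treat the two cases of the statement separately; the hexagon case is essentially trivial, and the ladder case is where the dynamic programming happens. If $C$ is a hexagon, then $|V(C)|=7$, hence $|E(C)|=O(1)$, so we can simply enumerate all subsets $F\subseteq E(C)$, discard the ones that fail to cover some square of $C$ or some node $v$ with $h(v)=1$, and return the cheapest surviving subset. This runs in constant time and is insensitive to the sign of the weights, so from now on I focus on the case where $C$ is a ladder with a given consistent labelling $v_1,\dots,v_r,u_1,\dots,u_r$.

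Besides the \emph{rung} edges $v_iu_i$ and the \emph{rail} edges $v_iv_{i+1}$ and $u_iu_{i+1}$, a ladder can contain only a constant number of additional edges: every non-end node already has degree $3$ in this rung/rail skeleton, so any extra edge joins two of the four end nodes $v_1,u_1,v_r,u_r$, and there are at most two of them. I would first branch over the $O(1)$ possibilities for which of these end edges lie in $F$; in each branch I know their contribution to the objective and to the coverage of $v_1,u_1,v_r,u_r$, and I add to the list of covering requirements the (at most two) extra squares they may create. It then remains, in each branch, to solve the problem on the rung/rail skeleton, which I would do by a left-to-right dynamic program that exploits the path-like structure of the ladder.

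The dynamic program scans the skeleton in the order rung $1$, rails $v_1v_2$ and $u_1u_2$, rung $2$, rails $v_2v_3$ and $u_2u_3$, $\dots$, rung $r$. Immediately before deciding rung $i$ (with the obvious adjustments at $i=1$), the only already-decided edge incident to $v_i$ is the rail $v_{i-1}v_i$, likewise for $u_i$, and the square $S_{i-1}$ on $\{v_{i-1},u_{i-1},v_i,u_i\}$ has exactly one undecided edge, namely rung $i$. Accordingly, the state consists of three bits: whether $v_i$ is already covered, whether $u_i$ is already covered, and whether $S_{i-1}$ is already covered by one of its first three edges. A transition decides rung $i$ and then the two rails $v_iv_{i+1}$ and $u_iu_{i+1}$; it is discarded if it leaves $S_{i-1}$ uncovered, or if it leaves a node still carrying requirement $h=1$ with no incident edge left to decide (which, for $v_i$ and $u_i$, happens exactly once the two rails out of layer $i$ are decided, and for $v_r$ and $u_r$ happens at the final step). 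Each transition adds the weights of the newly chosen edges, and the table keeps, for every (position, state) pair, the minimum total weight reaching it; negative weights need no special treatment. There are $O(1)$ states and $O(r)$ positions with $O(1)$-time transitions, so together with the $O(1)$-way branching over end edges this yields an $O(r)$-time algorithm, proving the claim.

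The step I expect to be the crux is verifying that these three state bits are exactly the right amount of information, i.e. the correctness of the recursion. Concretely, one has to check that fixing rung $i$ finalizes square $S_{i-1}$, that the only edges capable of covering a node $v_i$ or $u_i$ are its incident rung and its (at most two) incident rails, and that the skeleton contains no $4$-cycle other than $S_1,\dots,S_{r-1}$ — the remaining $4$-cycles, which must involve an end edge, having already been accounted for in the branching phase. Granting these facts, correctness follows by a straightforward induction on $i$: the value stored at position $i$ in state $(\alpha,\beta,\gamma)$ equals the minimum weight of a set of rung edges $1,\dots,i-1$ and rail edges $1,\dots,i-1$ that covers $S_1,\dots,S_{i-2}$, satisfies the $h$-requirements at $v_1,u_1,\dots,v_{i-1},u_{i-1}$, and is compatible with $(\alpha,\beta,\gamma)$; the final answer of the branch is then read off from the terminal states after rung $r$.
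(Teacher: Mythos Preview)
Your approach is essentially the paper's: enumerate for the hexagon (the paper does this for any ladder on at most $10$ nodes as well), branch over the $O(1)$ edges among the four end vertices, then run a left-to-right DP on the rung/rail skeleton with a constant-size state; the paper's state records which of the three boundary edges $L_{i+1},t_i,b_i$ lie in $F$, which is equivalent to your three coverage bits. One small point to tighten: the extra square on $\{v_1,u_1,v_r,u_r\}$, when it exists, also uses the two end \emph{rungs}, so it is not fully resolved by branching on the non-skeleton end edges alone---the paper handles this by including the end rungs in its branching set $D$, and you should do the same (or carry one more bit through the DP) rather than asserting it is ``accounted for in the branching phase.''
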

In particular, we can use Theorem~\ref{alg:DP} to solve each ladder or hexagon of degree zero optimally.
\begin{corollary}
\label{cor:degree0}
    Consider a $4$-Obstruction Covering instance $(G,4,L,c)$, and a given decomposition of $G$ into lonely nodes $R$, and disjoint hexagon or ladder subgraphs $G_1, \dots, G_k$ (as in Theorem~\ref{thm:ChainDecomposition}). For any $G_i$ of degree $0$,  one can find in polynomial time an optimal covering of every obstruction in $G_i$.
\end{corollary}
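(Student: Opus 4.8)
The plan is to reduce, for a single subgraph $G_i$ of the decomposition having degree $0$, the task of optimally covering all obstructions contained in $G_i$ to a single call of the dynamic program of Theorem~\ref{alg:DP}, applied with $C:=G_i$.

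First I would pin down which obstructions are ``in $G_i$''. Since $G_i$ has degree $0$, i.e.\ $\delta(G_i)=\emptyset$, the set $V(G_i)$ is a union of connected components of $G$. The only complete bipartite graphs $K_{i,j}$ with $i+j=4$ and at least one edge are $K_{2,2}$ (a square) and $K_{1,3}$ (a claw), and both are connected; hence every obstruction of $G$ has either all of its vertices in $V(G_i)$ or none of them, and covering the obstructions in $G_i$ amounts to choosing $F\subseteq L\cap E(G_i)$ that meets every square with vertex set inside $V(G_i)$ and every claw centered at a vertex of $V(G_i)$. One checks — using the third property of Theorem~\ref{thm:ChainDecomposition}, the node-disjointness of the decomposition, and the degree bound — that the squares of $G$ whose vertex set lies in $V(G_i)$ are exactly the squares of the ladder or hexagon $G_i$, i.e.\ precisely the ``squares $S$ in $C$'' referred to in Theorem~\ref{alg:DP}. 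As for claws: since $K_{1,4}$ is a forbidden subgraph, a claw centered at $v$ exists if and only if $\deg_G(v)=3$, in which case its three edges are forced, so such a claw is met by $F$ if and only if $F$ contains an edge incident to $v$.

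Next I would instantiate Theorem~\ref{alg:DP} with $C:=G_i$ (together with a consistent labelling if $G_i$ is a ladder), node covering requirements $h(v):=1$ when $\deg_G(v)=3$ and $h(v):=0$ otherwise, and edge weights $w(e):=c(e)$ for $e\in L\cap E(G_i)$ and $w(e):=\infty$ for $e\in E(G_i)\setminus L$ — the infinite weight being used, as in Section~\ref{sec:prelims}, to forbid non-link edges (one may instead take any finite value exceeding $\sum_{e\in L\cap E(G_i)}c(e)$ if real weights are preferred). By the previous paragraph, for $F\subseteq E(G_i)$ the two requirements of Theorem~\ref{alg:DP} — meeting $E(S)$ for every square $S$ of $C$, and being incident to every $v$ with $h(v)=1$ — together state exactly that $F$ covers every obstruction in $G_i$; and a feasible such $F$ has finite total weight if and only if $F\subseteq L$, in which case $w(F)=c(F)$. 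Hence the polynomial-time algorithm of Theorem~\ref{alg:DP} returns a minimum-cost covering of all obstructions in $G_i$, or (if the optimum is infinite) certifies that no covering by links exists.

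The step that requires genuine care is the structural identification above — that no square of $G$ with vertex set in $V(G_i)$ uses an edge outside $E(G_i)$, so that the squares of $G$ inside $V(G_i)$ really do coincide with the squares of the subgraph $G_i$ on which the DP runs. This is where I would invoke the structural guarantees of Theorem~\ref{thm:ChainDecomposition} (and of a degree-$0$ ladder/hexagon being node-induced). Everything else — the classification of obstruction types, the claw-to-degree-$3$ correspondence, and the encoding of $L$ via weights — is routine.
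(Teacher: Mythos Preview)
Your proposal is correct and takes essentially the same approach as the paper: the paper treats this as an immediate corollary of Theorem~\ref{alg:DP} without giving a separate proof, and your instantiation (setting $h(v)=1$ precisely for degree-$3$ vertices and encoding $L$ via $w(e)=c(e)$ for $e\in L$, $w(e)=\infty$ otherwise) is exactly the device the paper uses later, e.g.\ in the proof of Lemma~\ref{lem:degree12edgecover}. Your extra care about identifying the obstructions in $G_i$ with the squares and degree-$3$ vertices handled by the DP is justified and matches the structural facts the paper relies on implicitly.
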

Thanks to Corollary~\ref{cor:degree0}, we can assume that $G$ contains no hexagons or ladders of degree $0$. Moreover,  we assume that $G$ is connected, as we can apply our algorithm to each connected component individually.

\section{ A \texorpdfstring{$\frac{3}{2}$}{3/2}-approximation for Weighted \texorpdfstring{$(n-4)$}{(n-4)}-node connectivity augmentation}
\label{sec:3/2approx}
In this section, we will prove the following theorem.
\begin{theorem}
\label{thm:3/2-apx}
    There is a polynomial time $3/2$-approximation algorithm for the $(n-4)$-Node-Connectivity Augmentation problem. 
\end{theorem}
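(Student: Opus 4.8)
The plan is to work in the $4$-Obstruction Covering reformulation and combine the chain decomposition with the per-component dynamic program.

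By Theorem~\ref{thm:from_connectivity_to_obstructions} it suffices to give a polynomial-time $\frac{3}{2}$-approximation for $4$-Obstruction Covering on an instance $(G,4,L,c)$. Apply Theorem~\ref{thm:ChainDecomposition} to obtain the lonely nodes $R$ and node-disjoint hexagons and maximal ladders $G_1,\dots,G_k$; by Corollary~\ref{cor:degree0} we solve every degree-$0$ component optimally and discard it, so henceforth every $G_i$ has degree at least $1$, and we may assume $G$ is connected. For $d=4$ the obstructions are exactly the squares $K_{2,2}$ and the stars $K_{1,3}$. Since every square has all its edges inside a single $G_i$ (third bullet of Theorem~\ref{thm:ChainDecomposition}), a square of $G_i$ can be covered only by an edge of $E(G_i)$; moreover an edge internal to a ladder or a hexagon lies in at most two squares. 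Covering a $K_{1,3}$ obstruction, on the other hand, only asks that some vertex be covered by a link, which is an edge-cover-type constraint.

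The algorithm proceeds in two phases. \textbf{Phase 1.} We compute a partial, infeasible link set $F_1$ that handles the ``square part'' of the instance: for each $G_i$ we invoke the dynamic program of Theorem~\ref{alg:DP} with suitable edge weights and node covering requirements to obtain a cheap set of internal links covering the relevant obstructions of $G_i$, and we prove, via a charging argument over the decomposition, that $c(F_1)\le\frac{1}{2}\OPT$. \textbf{Phase 2.} We consider the residual instance: the obstructions of $G$ not covered by $F_1$. We show this residual instance is equivalent to a minimum-weight edge-cover problem -- pick a cheapest link set covering a prescribed vertex set $T$ (the still-uncovered degree-$3$ vertices), after a routine auxiliary-graph construction if needed -- which is polynomial-time solvable via matching; let $F_2$ be an optimal solution of it. Because $\OPT$ covers all $K_{1,3}$ obstructions of $G$, $\OPT$ is feasible for the residual instance, hence $c(F_2)\le\OPT$. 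We output $F:=F_1\cup F_2$. By construction $F$ covers every square (through $F_1$) and every $K_{1,3}$ obstruction (through $F_2$), so $F$ is a feasible $4$-Obstruction Covering solution, and $c(F)\le c(F_1)+c(F_2)\le\frac{1}{2}\OPT+\OPT=\frac{3}{2}\OPT$. All steps run in polynomial time, proving Theorem~\ref{thm:3/2-apx}.

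The main obstacle is Phase~1: one must design $F_1$ so that simultaneously (i) after removing the obstructions it covers, the residual instance is a pure (minimum-weight) edge-cover instance, and (ii) $c(F_1)\le\frac{1}{2}\OPT$. For (ii) the charging argument must exploit the structure of maximal ladders and hexagons -- that each internal edge lies in at most two squares, the behaviour of the at most four corners of a ladder and at most three corners of a hexagon, and the fact that inside each $G_i$ the optimum must pay not only to cover the squares but also to cover the many degree-$3$ vertices a ladder or hexagon carries -- in order to conclude that the optimum spends at least $2\,c(F_1)$ on links incident to the components. Reconciling the handling of the corners with both the half-cost bound and the edge-cover reduction is the delicate point.
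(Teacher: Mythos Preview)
Your two-phase plan has the roles of $F_1$ and $F_2$ reversed compared to the paper, and in this orientation the half-cost bound for $F_1$ is simply false. Take a single length-$1$ ladder (one square $S$) of degree~$4$: all four vertices are corners, each incident to one external link of cost $0$, and suppose exactly one internal edge $e$ of $S$ is a link, with $c(e)=1$. This is a valid $4$-Obstruction Covering instance (maximum degree $3$, and no $K_{2,3}$). The optimum takes $e$ together with the four free external links, so $\OPT=1$. But any $F_1$ that covers the square must contain an internal link of $S$, hence must contain $e$, so $c(F_1)\ge 1=\OPT$. No charging argument can rescue $c(F_1)\le\frac{1}{2}\OPT$ here: the optimum pays nothing extra for the degree-$3$ vertices because the external links are free. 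The same obstruction arises whenever the cheapest internal square-cover of a component already accounts for essentially all of what $\OPT$ spends there, so your ``the optimum must also pay for the many degree-$3$ vertices'' intuition breaks down precisely on short ladders.

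The paper avoids this by doing the opposite split. Its $F_1$ does \emph{not} cover squares; it covers the \emph{corners} of components of degree at least $3$. A minimum $N$-edge cover of the corner set $N$ costs at most $\OPT$ (since $\OPT$ covers every degree-$3$ vertex), and because each such component has three or four corners one can assemble a $4$-regular auxiliary multigraph on the components and apply Petersen's theorem (Corollary~\ref{cor:4regularFactorization}) to split that cover into two $2$-factors; the cheaper one yields $F_1$ with $c(F_1)\le\frac{1}{2}\OPT$. After modifying the instance using $F_1$, every remaining ladder or hexagon has degree at most $2$, and \emph{that} residual instance is solved optimally---not by a plain edge cover, but via the degree-$1$/degree-$2$ gadgets of Lemma~\ref{lem:degree12edgecover} layered on top of the DP of Theorem~\ref{alg:DP}---to obtain $F_2$ with $c(F_2)\le\OPT$. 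Your instinct that edge cover is the right endgame is correct, but it applies after the corners have been tamed, not after the squares have been killed.
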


We first give a high-level description of the algorithm. By Theorem~\ref{thm:from_connectivity_to_obstructions} we focus on $4$-Obstruction Covering instances. For a $4$-Obstruction Covering instance $(G,4,L,c)$, we consider a fixed optimal solution $opt$, with cost $c(opt)$.
% Denote by $opt$ a fixed optimal solution to a $4$-Obstruction Covering instance $(G,4,L,c)$, with cost $c(opt)$.
We apply Theorem~\ref{thm:ChainDecomposition} to $G$ to find node-disjoint hexagons and ladders $G_1,\dots, G_k$, and lonely nodes $R$. 
Our algorithm first finds a partial solution $F_1$, with cost $c(F_1)\leq \frac{1}{2}c(opt)$, that covers at least the corners of  hexagons and ladders $G_1,\dots, G_k$ whose degree is at least three. We then remove $F_1$ from $E$, leaving an instance with only ladders and hexagons of degree $1$ or $2$. We then compute a special edge cover $F_2$, to take care of all the remaining obstructions, with cost $c(F_2) \leq c(opt)$. Putting $F_1$ and $F_2$ together then defines a feasible solution of cost at most $\frac{3}{2}c(opt)$. We now give more details about the computation of these partial solutions.
%We will then use $F_1$ to find a reduction from $G$ to an $S$-edge cover instance $G'$ such that a solution to this instance $F_2$, provides a cover of the obstructions that $F_1$ does not cover. Putting $F_1$ and $F_2$ together covers every obstruction for a cost at most $\frac{3}{2}c(opt)$.

To compute the partial solution $F_1$, we make use of the following useful Corollary of Petersen’s Theorem\cite{Pet1981}.
It is proved in Appendix~\ref{sec:4regularFactorization}.
\begin{corollary}\label{cor:4regularFactorization}
    Let $G$ be a $4$-regular graph. Then, we can decompose $G$ into two $2$-regular spanning subgraphs in polynomial time.
\end{corollary}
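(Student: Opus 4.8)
The plan is to obtain this as the $k=2$ instance of Petersen's $2$-factorization theorem~\cite{Pet1981}, which states that every $2k$-regular graph decomposes into $k$ edge-disjoint $2$-factors; for $k=2$ this is precisely a partition of the edge set of a $4$-regular graph into two $2$-regular spanning subgraphs. Since we additionally need the decomposition to be computable efficiently, I would not invoke the theorem as a black box but instead reprove this special case by an explicit polynomial-time construction.

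First I would reduce to the connected case, since the decomposition can be built independently on each connected component. A connected $4$-regular graph has all degrees even and therefore admits an Eulerian circuit, which can be found in linear time. Fixing such a circuit, orient every edge of $G$ in the direction in which the circuit traverses it. In the resulting digraph every vertex has in-degree $2$ and out-degree $2$, because the circuit enters and leaves each vertex the same number of times and each vertex has total degree $4$.

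Next I would form an auxiliary bipartite graph $H$ on vertex set $V' \cup V''$, two disjoint copies of $V(G)$, with an edge joining $u' \in V'$ and $v'' \in V''$ for every arc $u \to v$ of the oriented $G$. By the in/out-degree count, $H$ is $2$-regular, hence a disjoint union of cycles, all of even length since $H$ is bipartite. Each even cycle is properly $2$-edge-colorable, and doing this cycle by cycle yields in polynomial time a partition $E(H) = M_1 \cup M_2$ into two perfect matchings. Translating back, $M_1$ and $M_2$ induce a partition $E(G) = E_1 \cup E_2$ in which, within each $E_j$, every vertex is the tail of exactly one arc and the head of exactly one arc; as $G$ is loopless, this means every vertex has (undirected) degree exactly $2$ in $E_j$. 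Thus $E_1$ and $E_2$ are the two desired $2$-regular spanning subgraphs.

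I do not expect a genuine obstacle here. The only points needing a line of verification are that the Eulerian orientation indeed makes the digraph $2$-in/$2$-out — so that $H$ is $2$-regular and its color classes are perfect matchings — and that a perfect matching of $H$ corresponds exactly to a subgraph of $G$ in which each vertex has one in-arc and one out-arc, which for a loopless graph coincides with a $2$-regular spanning subgraph. Every ingredient (computing the Eulerian circuit, building $H$, and $2$-coloring its even cycles) runs in polynomial time, so the overall procedure is polynomial.
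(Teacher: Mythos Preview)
Your argument is correct: this is precisely the classical constructive proof of Petersen's $2$-factor theorem via an Eulerian orientation and the bipartite ``split'' graph, and every step runs in polynomial time. One small remark: your clause ``as $G$ is loopless'' is not actually needed (and the paper applies the corollary to a multigraph that may contain loops and parallel edges). If the unique in-arc and out-arc at a vertex happen to coincide in a loop, that loop still contributes $2$ to the undirected degree, so the conclusion holds verbatim for multigraphs.

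The paper takes a different, shorter route: it invokes Petersen's theorem only to guarantee \emph{existence} of a $2$-regular spanning subgraph, and then obtains one algorithmically by computing a maximum $2$-matching (i.e., a maximum-size simple $b$-matching with $b\equiv 2$), which is known to be polynomial; the complement of this $2$-factor in the $4$-regular graph is then automatically the second $2$-factor. Your approach is more self-contained---it simultaneously proves existence and gives the algorithm without appealing to $b$-matching machinery---whereas the paper's approach is a two-line reduction to a standard subroutine. Both are perfectly valid here.
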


We also make use of a slight generalization of the Edge Cover problem, denoted as Minimum $N$-Edge Cover problem.
% \todo{"Usage of N-Edge-Cover: the way this term is written changes over the paper as N-edge cover, N-edge-cover, and N-Edge-Cover. It would be nice to keep it consistent."
% \\
% D: consistency established. N-edge cover is the notation for a given cover. N-Edge Cover to describe the problem proper}
Given a graph $G=(V,E)$, a subset $N\subseteq V$, and an edge cost function $c:E \rightarrow \mathbb{R}$, we want to find a minimum cost subset $E' \subseteq E$, such that the endpoints of $E'$ contain $N$. %Note that the Edge Cover problem is the special case of this problem when $N=V$. 
The following lemma follows easily from the fact that the Edge Cover problem is polynomial-time solvable, but for the sake of completeness its proof can be found in Appendix~\ref{sec:SEdgeCover}.
\begin{restatable}{lemma}{SEdgeCover}
\label{lem:SEdgeCover}
The Minimum $N$-Edge Cover problem can be solved efficiently.
\end{restatable}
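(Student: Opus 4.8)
The plan is to reduce the Minimum $N$-Edge Cover problem to the classical Edge Cover problem, which is polynomial-time solvable (e.g. via a reduction to maximum weight matching). The only subtlety is handling the nodes outside $N$: these vertices do \emph{not} need to be covered, but edges incident to them may still be useful (and cheap) for covering vertices in $N$. A naive deletion of vertices in $V\setminus N$ would therefore be incorrect, since it could forbid an optimal solution.

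First I would construct an auxiliary graph $G'$ from $G=(V,E)$ as follows. Keep all vertices of $V$ and all edges of $E$ with their original costs. Then, for every vertex $v\in V\setminus N$, add a new "dummy" vertex $v'$ together with a new edge $vv'$ of cost $0$ (a pendant edge). Let $N' = N \cup \{\, v' : v\in V\setminus N \,\}$ be the set of all vertices that must be covered in $G'$. I would then compute a minimum cost edge cover $E''$ of $(G', N')$; note $(G',N')$ always admits an edge cover because every vertex of $G'$ has positive degree (the dummies via their pendant edge, and a vertex in $N$ that happens to be isolated in $G$ would make the original instance infeasible as well, which we can detect trivially). Finally I would output $E' := E'' \cap E$, i.e. discard the zero-cost pendant edges.

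The correctness argument has two directions. For one direction, given any feasible $N$-edge cover $E'$ of $G$, the set $E' \cup \{\, vv' : v\in V\setminus N \,\}$ is an edge cover of $(G',N')$ of the same cost (the pendant edges cost $0$ and cover exactly the dummies, while $E'$ covers $N$), so $\mathrm{opt}(G',N') \le \mathrm{opt}(G,N)$. For the other direction, let $E''$ be a minimum edge cover of $(G',N')$. Every vertex of $N$ is covered by $E''$, and the only edges of $G'$ incident to a vertex of $N$ are edges of $E$ (the pendant edges are incident only to $V\setminus N$ and the dummies); hence the endpoints of $E'' \cap E$ contain $N$, so $E'' \cap E$ is a feasible $N$-edge cover of $G$. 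Its cost is at most $c(E'')$ since we only removed nonnegative-cost edges, giving $c(E''\cap E) \le \mathrm{opt}(G',N')$. Combining the two inequalities shows $E'=E''\cap E$ is optimal for $(G,N)$, and the whole procedure runs in polynomial time since $|V(G')| = O(|V|)$ and we invoke a polynomial-time Edge Cover routine once.

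I do not expect a genuine obstacle here; the only point requiring a little care is the direction of the cost inequalities and the observation that edges incident to $N$ in $G'$ are precisely the original edges of $E$, which is what guarantees that restricting an edge cover of $G'$ back to $G$ does not lose coverage of $N$. (If edge costs can be negative this still works verbatim as long as we are careful that dropping a pendant edge of cost $0$ never increases the cost; if the paper intends arbitrary real costs one may instead assume, without loss of generality by standard arguments, that costs are nonnegative for a covering problem, or simply note the pendant edges have cost exactly $0$.)
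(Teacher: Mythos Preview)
Your reduction to ordinary Edge Cover via zero-cost auxiliary edges is correct and is essentially the paper's approach: the paper adds a \emph{single} hub vertex with zero-cost edges to every $u\in V\setminus N$, while you add one pendant per such vertex, but the idea and the analysis are the same.

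One small point to tighten: as written you ask to compute a minimum edge cover of $(G',N')$, which is literally another instance of the $N$-Edge Cover problem and hence circular. What you want is a minimum \emph{ordinary} edge cover of $G'$ (covering all vertices), which is polynomial. This is harmless in your construction because each dummy $v'$ has degree~$1$, so its pendant $vv'$ is forced in any cover of $N'$, and these pendants already cover every vertex of $V\setminus N$; thus the $N'$-edge cover and the full edge cover of $G'$ coincide. Stating it as a full edge cover makes the reduction explicitly non-circular.
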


We now consider node-disjoint hexagons and ladders $G_1,$ $\dots, G_k$, and lonely nodes $R$ obtained from Theorem~\ref{thm:ChainDecomposition}.
Denote the corners of ladders and hexagons of degree at least 3 by $N$, that is, each node of a ladder or hexagon that is adjacent to a node outside the ladder or hexagon containing it is in $N$.
% Let $N$ denote the corners of ladders and hexagons of degree at least $3$. 
By Lemma~\ref{lem:SEdgeCover}, we compute a minimum cost $N$-edge cover, that we denote by $E_C$. As nodes of $N$ (together with their neighbours) induce a subset of the obstructions that must be covered (they are nodes of degree $3$, hence each with its neighbours induces $K_{1,3}$), it is clear that $c(E_C) \leq c(opt)$. We then construct a 4-regular graph $G'$ using $E_C$, and apply Corollary~\ref{cor:4regularFactorization} to find a subset of edges of cost $\leq \frac{1}{2}c(opt)$.

We begin with $G'= (V' = \emptyset,E' = \emptyset)$, and costs $c' : E' \rightarrow \mathbb{R}$. For every $G_i$, $i=1,\dots, k$ of degree at least $3$ in $G$, add corresponding node $g_i$ to $V'$. 

We next add edges to $E'$: consider the links of $E_C$ in an arbitrary but fixed order, say $\ell_1,...,\ell_{|E_C|}$. For any endpoint $u$ of $\ell_i$ that is not incident to any $\ell_j$ for $j<i$, we say that $u$ is \emph{satisfied} by $\ell_i$. Note, each corner is satisfied by exactly one link $\ell_i$.
%\begin{observation}\label{obs:SatisfyingLink}
%    Each corner is satisfied by exactly one link $\ell_i$.
%\end{observation}
For link $\ell_i = uv \in E_C$, suppose  $\ell_i$ has two endpoints in $N$ in (possibly equal) subgraphs $G_a$ and $G_b$, $a,b\in \{1,\dots, k\}$. 
If both $u$ and $v$ are satisfied by $\ell_i$, then we add an edge $g_ag_b$, with label $\ell_i$, to $E'$ with cost $c'(g_ag_b) = c(\ell_i)$ (adding a loop in $G'$ with cost $c(\ell_i)$ if $G_a = G_b$). If exactly one endpoint $u \in \ell_i$ is not satisfied by $\ell_i$, say $u \in G_a$, then we add a dummy node $\bar u$ to $V'$ and add edge $\bar u g_b$, with label $\ell_i$, to $E'$ with cost $c'(\bar u g_b) = c(\ell_i)$. If both endpoints of $\ell_i$ are satisfied already, then we remove $\ell_i$ from $E_C$ for no loss of feasibility and no increase in cost as the edge costs are non-negative.

Now consider a link $\ell_i=uv \in E_C$ that has exactly one endpoint in $N\cap G_a$ for some $a\in \{1,\dots, k\}$, say $v\in G_a$. 
If $v$ is satisfied by $\ell_i$, we add a dummy node $\bar u$ to $V'$ and add an edge $\bar u g_a$, with label $\ell_i$, and cost $c'(\bar u g_a) = c(\ell_i)$ to $E'$. If this $v$ is already satisfied, then we can simply remove $\ell_i$ from $E_C$. %for no loss of feasibility and no increase in cost. 
Notice that, if $G'$ is not $4$-regular, then we can add a polynomial number of dummy nodes to $V'$ and zero cost dummy edges to $E'$ such that $G'$ becomes $4$-regular 
(see Appendix~\ref{sec:deg4} for more details).
%We prove the following lemma in Appendix~\ref{sec:deg4}.
% \begin{lemma}
% \label{lem:deg4}
%     Let $G'=(V',E')$ the graph constructed above. If $G'$ is not $4$-regular, then we can add a polynomial number of dummy nodes to $V'$ and zero cost dummy edges to $E'$ such that $G'$ becomes $4$-regular.
% \end{lemma}
% Thus, by applying Lemma~\ref{lem:deg4}, we can assume $G'$ is $4$-regular.

By applying Corollary~\ref{cor:4regularFactorization}, we find two edge disjoint $2$-regular spanning subgraphs $H_1$ and $H_2$ such that $H_1\cup H_2 = G'$ and $\min \{c(E(H_1)), c(E(H_2)) \} \leq \frac{1}{2}c(opt)$. The minimum cost set $H\in \{ H_1, H_2\}$ induces our choice of $F_1 \subseteq L$. 

In order to compute the second part of the solution, denoted $F_2$, we use $F_1$ to find a $4$-Obstruction Covering instance $(G''=(V'',E''), 4, L'' ,c'')$ with cost $c'': E'' \rightarrow \mathbb{R}$, such that the subgraphs $G_1', \dots, G_k'$ found by Theorem~\ref{thm:ChainDecomposition} have degree at most $2$. We begin with $G'' = G$.    
For every $f \in F_1$, if $f\in \delta(G_i)$, for some $i\in \{1,\dots, k\}$, we remove $f$ from $E''$. If $f\in E[G_i]$ for some $i\in \{1,\dots, k\}$, then $f$ satisfies at least one node in $N\cap V(G_i)$.  For such a link $f\in E[G_i]$ with endpoint $u$, 
% For any endpoint $u$ of such a link $f\in E[G_i]$ with this property,
since $u\in N\cap V(G_i)$, $u$ has degree $3$ and must be incident to an edge $uv\in \delta(G_i)$. We add a dummy node $\bar u$ to $G''$ and remove $uv$ from $G''$ and replace it with $\bar u v$, giving it label $uv$, with cost $c''(\bar u v) = c'(uv)$. Finally, we set $c''(f) = 0$. We prove the following lemma in Appendix~\ref{sec:reducedegrees}
\begin{lemma}
\label{lem:reducedegrees}
    $(G'',4,L'',c'')$ is a $4$-Obstruction Covering instance such that: \begin{enumerate} \item The subgraphs  $G_1' ,\dots, G_k'$ found by Theorem~\ref{thm:ChainDecomposition} each have degree $1$ or $2$; \item The cost of the optimal solution to $(G'',4,L'',c'')$ is at most $c(opt)$; and \item Given a solution $F$ to 
    $(G'',4,L'',c'')$ of cost $c''(F)$, we can find in polynomial time a solution to $(G,4,L,c)$ of cost $c''(F)+c(F_1)$.\end{enumerate}
\end{lemma}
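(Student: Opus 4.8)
The plan is to verify the three claims in order, treating the construction of $(G'',4,L'',c'')$ from $G$ step by step. The key observation is that every modification we perform — deleting a link $f\in\delta(G_i)\cap F_1$, or "splitting off" a boundary edge $uv\in\delta(G_i)$ at a node $u\in N$ satisfied by some $f\in E[G_i]\cap F_1$ — either removes an edge incident to a corner of some $G_i$ or relocates such an edge to a fresh dummy node. Since dummy nodes have degree $1$ in $G''$ and lie outside every square, they are lonely nodes, so the decomposition guaranteed by Theorem~\ref{thm:ChainDecomposition} applied to $G''$ still produces the same ladders/hexagons $G_i'$ (with $V(G_i')\subseteq V(G_i)$), because we never touched any edge of $E[G_i]$ that participates in a square, nor added any new square. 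This also confirms $(G'',4,L'',c'')$ is a legitimate $4$-Obstruction Covering instance: we must check $G''$ has no forbidden $K_{i,j}$ subgraph with $i+j>4$, which holds since we only deleted edges or moved an edge to a degree-$1$ dummy node, neither of which can create a $K_{i,j}$.

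For claim 1, the point is to count the boundary degree of each $G_i'$. Recall $N$ consists of corners of the $G_i$ of degree $\ge 3$. Each corner $u\in N\cap V(G_i)$ is satisfied by exactly one link of $E_C$, and the way $G'$ was built ensures each such $u$ contributes exactly one edge incident to $g_i$ (or a dummy edge $\bar u g_i$) in $G'$. Since $H\in\{H_1,H_2\}$ is $2$-regular, at most two of the edges of $G'$ incident to $g_i$ survive in $F_1$; hence at most two corners of $G_i$ are "handled" via links in $E[G_i]\cap F_1$ and at most two via links in $\delta(G_i)\cap F_1$ — in total, all but at most two corners of $G_i$ are removed from the boundary of $G_i'$. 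Wait — more carefully: every corner of $G_i$ that is a corner because of a link in $E_C$ incident to $g_i$ is either removed (if that link is a boundary link dropped from $E''$) or split off to a dummy node (if that link lies in $E[G_i]$, we split the associated boundary edge $uv$). The corners of $G_i'$ that remain are precisely those whose satisfying $E_C$-link corresponds to an edge of $G'$ incident to $g_i$ that is NOT in $H$; since $G'$ is $4$-regular and $H$ is $2$-regular, exactly $4-2=2$ such edges remain at $g_i$. Combined with the fact that $G_i$ had degree $\ge 3$ only for those in $N$ (degree-$1$,$2$ ones are untouched and already fine), every $G_i'$ has boundary degree $\le 2$; and it is $\ge 1$ because we assumed (via Corollary~\ref{cor:degree0}) that $G$ has no degree-$0$ hexagon/ladder and our operations only decrease degree down to the two surviving $H$-edges, which are still present.

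For claim 2, I would take the fixed optimal solution $opt$ to $(G,4,L,c)$ and exhibit a solution to $(G'',4,L'',c'')$ of cost $\le c(opt)$. The natural candidate is $opt'$ obtained from $opt$ by: deleting any link of $opt$ that was deleted from $E''$, and, for each split-off edge $uv\mapsto \bar u v$, replacing $uv\in opt$ (if present) by $\bar u v$. Since every obstruction (square or $K_{1,3}$) of $G''$ is also an obstruction of $G$ with the same incident links up to the $uv\leftrightarrow\bar uv$ renaming (squares are untouched, and a $K_{1,3}$ at a dummy node $\bar u$ cannot arise since $\bar u$ has degree $1$), feasibility transfers; and $c''(opt')\le c(opt)$ since split edges keep their cost and deletions only help. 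One subtlety: a link of $opt$ may have had cost set to $0$ in $c''$ (those are the $f\in F_1\cap E[G_i]$), which only decreases cost further. For claim 3, given $F$ feasible for $(G'',4,L'',c'')$, map it back by reversing the renaming ($\bar u v\mapsto uv$) and adding $F_1$; the union covers every square and every $K_{1,3}$ of $G$: squares and corner-obstructions inside each $G_i$ are covered either by $F$ (on the part with degree $\le 2$) or by the corner-covering links of $F_1$ (which cover all corners of the degree-$\ge3$ subgraphs by construction), and obstructions outside the $G_i$ involve lonely nodes whose $K_{1,3}$'s are covered exactly as in $G''$. The cost is $c''(F)+c(F_1)$ since the relabelled links cost the same and the $0$-cost links of $c''$ are already paid for inside $c(F_1)$. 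The main obstacle I anticipate is the bookkeeping in claim 1 — precisely pinning down the correspondence between corners of $G_i$, satisfying links in $E_C$, edges at $g_i$ in $G'$, and the two survivors in $H$ — and making sure the dummy-node splittings in the $F_1\cap E[G_i]$ case really do reduce the boundary degree of $G_i'$ rather than merely moving the problem; the earlier claim that "each corner is satisfied by exactly one link" is what keeps this correspondence one-to-one and hence makes the $4\to 2$ count valid.
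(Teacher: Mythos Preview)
Your approach is essentially the same as the paper's: for claim~1 you use the $4$-regularity of $G'$ and the $2$-regularity of $H$ to count surviving corners at each $g_i$; for claim~2 you observe the obstructions of $G''$ embed into those of $G$; and for claim~3 you map $F$ back and add $F_1$. One wording slip to fix: in claim~3 you say $F_1$ ``cover[s] all corners of the degree-$\ge 3$ subgraphs,'' but $F_1$ only covers the \emph{handled} corners (those whose boundary edge was deleted or split off); the remaining corners are still degree-$3$ obstructions in $G''$ and are covered by $F$, which is exactly what makes the argument work.
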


The last ingredient is the following lemma, proven in Appendix~\ref{sec:degree12edgecover}, which essentially shows that the instance  $(G'',4,L'',c'')$ is solvable in polynomial time.
\begin{lemma}
\label{lem:degree12edgecover}
    Given a $4$-Obstruction Covering instance $(G,4,L,c)$, if the subgraphs  $G_1,\dots, G_k$  found as in Theorem~\ref{thm:ChainDecomposition} have degree $1$ or $2$, then we can find an optimal solution in polynomial time.
\end{lemma}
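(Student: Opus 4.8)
\textbf{Proof plan for Lemma~\ref{lem:degree12edgecover}.}
The plan is to reduce the whole instance to a single Minimum $N$-Edge Cover computation plus, on each ladder/hexagon, the Dynamic Program of Theorem~\ref{alg:DP}, and to glue the pieces together without interaction. Using Theorem~\ref{thm:ChainDecomposition}, write $G$ as lonely nodes $R$ together with disjoint ladders/hexagons $G_1,\dots,G_k$, each now of degree $1$ or $2$. By Corollary~\ref{cor:degree0} we may also assume every $G_i$ has degree exactly $1$ or $2$ (degree-$0$ ones are solved optimally and separately). The key structural observation I would establish first is that the obstructions ($K_{1,3}$'s and $K_{2,2}$'s) split cleanly into two types: (i) those whose vertex set lies entirely inside some $G_i$ — by the third bullet of Theorem~\ref{thm:ChainDecomposition} every square is of this form, and a $K_{1,3}$ centered at a corner with all three neighbours inside $G_i$ is too; and (ii) those that use at least one edge of $\delta(G_i)$, which (since corners have degree $3$ and $G_i$ has degree $\le 2$) must be $K_{1,3}$'s centered at a corner $v$ of some $G_i$, using one or two crossing edges of $\delta(C)$. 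The plan is: crossing obstructions are covered by forcing every corner of every $G_i$ to be matched (so that $N$ = set of all such corners, plus any lonely node that participates in some obstruction — but lonely nodes by definition are in no square, and a $K_{1,3}$ centered at a lonely node would require degree $3$ with a square or triangle-like neighbourhood; I need to argue lonely nodes sit in no obstruction, or handle them via $h$), while the internal obstructions of each $G_i$ are handled by running Theorem~\ref{alg:DP} on $C=G_i$ with node requirement $h(v)=1$ exactly on the corners of $G_i$.

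Concretely, the algorithm I propose is: (1) for each $G_i$, run the DP of Theorem~\ref{alg:DP} on $C=G_i$ with edge weights $w=c$ restricted to $E[G_i]$, and $h(v)=1$ iff $v$ is a corner of $G_i$ (every corner has degree $3$, so its three neighbours induce $K_{1,3}$; but if one neighbour is outside $G_i$, that $K_{1,3}$ is a crossing obstruction handled below, whereas if all neighbours are inside we still want to cover it — this is where requiring the corner to be matched \emph{inside} $G_i$ is not sufficient, so I may instead need $h$ to force coverage of internal $K_{1,3}$'s directly, noting the DP statement covers all squares automatically and I must check internal $K_{1,3}$'s are implied or add them; in a ladder/hexagon of bounded structure, a $K_{1,3}$ is always inside some square so covering all squares plus matching the corner suffices). (2) Form a Minimum $N$-Edge Cover instance on the graph $(V, L)$ where $N$ is the set of all corners of all $G_i$ with $h$ forcing them, use Lemma~\ref{lem:SEdgeCover} to solve it, and take the union with the DP solutions. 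I would then prove feasibility: every square is covered by the DP of its containing $G_i$; every internal $K_{1,3}$ is covered because it lies in a square of its $G_i$ (and the DP covers that square) — or, if a $K_{1,3}$ can fail to lie in a square, because the corner is matched inside $G_i$ by a DP edge which is then an edge of that $K_{1,3}$; every crossing $K_{1,3}$ is covered because its center is a corner, hence in $N$, hence incident to an edge of the edge cover, and that edge is one of the three edges of the $K_{1,3}$.

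The cost analysis is the cleaner half: let $opt$ be an optimal solution. Its restriction $opt\cap E[G_i]$ is a feasible solution for the DP instance on $G_i$ (it covers all squares of $G_i$, and it must match every corner of $G_i$ whose $K_{1,3}$ is internal; for corners whose $K_{1,3}$ is crossing, $opt$ covers that $K_{1,3}$ but possibly with a crossing edge, so $opt\cap E[G_i]$ need not match that corner — so I should set $h(v)=1$ in the DP only for corners with an internal $K_{1,3}$, and handle \emph{all} corners via $N$ in the edge-cover step, which is safe since any feasible solution must touch each corner's $K_{1,3}$ with one of its three edges and I can charge appropriately). With $N$ = set of all corners, $opt$ restricted to edges incident to $N$ is an $N$-edge cover? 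Not quite — $opt$ might cover a corner's $K_{1,3}$ with an edge not incident to the corner (one of the two non-center edges of $K_{1,3}$). Here is the fix I expect to need: for corners of degree $3$ whose three incident edges are $va,vb,vc$, covering $K_{1,3}$ means picking one of $va,vb,vc$ \emph{or} one of $ab,ac,bc$; but $K_{1,3}$ has no edges among $a,b,c$ — its only edges are $va,vb,vc$ — so covering it forces $opt$ to contain one of $va,vb,vc$, i.e. $opt$ \emph{is} incident to $v$. Good: so $opt$ restricted to edges touching $N$ is a valid $N$-edge cover, giving $c(E_C)\le c(opt)$, and $\sum_i (\text{DP cost on }G_i)\le \sum_i c(opt\cap E[G_i])$, and these two bounds are on edge-disjoint parts of $opt$ (crossing vs internal), so the total is $\le c(opt)$. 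The main obstacle is exactly this bookkeeping: ensuring the DP node-requirements $h$ and the edge-cover set $N$ are chosen so that (a) the union is always feasible for \emph{every} obstruction — paying special attention to $K_{1,3}$'s that straddle the boundary and to $K_{1,3}$'s centered inside a $G_i$ but not contained in any square — and (b) the charging to $opt$ is disjoint, so the costs add rather than double-count. Once that split of obstructions into "internal, handled by DP" and "boundary, handled by a single edge cover" is pinned down rigorously, the proof is a direct verification.
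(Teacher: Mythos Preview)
Your approach has a genuine gap in the cost analysis, precisely at the step you flag as ``the main obstacle'' but then dismiss as bookkeeping. You propose to run the DP on each $G_i$ independently (with some choice of $h$) and an $N$-Edge Cover independently, take the union, and argue the two costs charge to \emph{edge-disjoint} parts of $opt$. They do not. Every corner $v$ of a degree-$1$ or degree-$2$ ladder has exactly one edge in $\delta(G_i)$ and two edges inside $G_i$, so its $K_{1,3}$ is always ``crossing''; hence the optimal solution may cover $v$ with an \emph{internal} edge $e\in E[G_i]$. That same edge $e$ is then charged once by your DP bound (it lies in $opt\cap E[G_i]$ and may be the edge $opt$ uses to cover a square) and a second time by your edge-cover bound (it is the edge of $opt$ incident to $v\in N$). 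A concrete failure: take $G_1$ a single square with one corner $v$, all four edges links of cost~$1$, and one crossing link $vw$ of cost~$1$. The optimum is a single internal edge at $v$ (cost~$1$); your DP, asked only to cover the square, may pick the opposite edge, and your edge cover may pick $vw$, for total cost~$2$.

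The missing idea is that the decision ``is corner $v$ covered from inside $G_i$ or from outside'' cannot be made independently in two optimizations; it must be made coherently. The paper's proof does exactly this: for each $G_i$ it precomputes, via the DP, the optimal internal cost for \emph{every} subset of corners that is to be covered internally ($c_0^i,c_1^i$ for degree~$1$; $c_0^i,c_1^i,c_2^i,c_{12}^i$ for degree~$2$), then replaces $G_i$ by a small constant-size gadget whose edge costs encode these values. A \emph{single} Minimum $N$-Edge Cover on the gadgetized graph then simultaneously selects, for each $G_i$, which corners are covered externally (via the original crossing edges, which remain) and which internal-coverage pattern to use (via the gadget edges), with total cost equal to the optimum. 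Your two-phase scheme cannot recover this coupling; the gadget reduction is the actual content of the lemma.
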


We now conclude putting everything together to prove Theorem~\ref{thm:3/2-apx}.

\begin{proof}[Proof of Theorem~\ref{thm:3/2-apx}]
    Given an $(n-4)$-Node Connectivity Augmentation instance, we apply Theorem~\ref{thm:from_connectivity_to_obstructions} to obtain an instance $(G,4,L,c)$ of $4$-Obstruction Covering. By Theorem~\ref{thm:ChainDecomposition}, we obtain node-disjoint ladders and hexagons $G_1,\ldots,$ $G_k$, and lonely nodes $R$. Using the discussion above, we can compute in polynomial time a partial solution $F_1$ such that $c(F_1)\le \frac{1}{2}c(opt)$, to cover a subset of the obstructions (namely, corners of ladders and hexagons of degree at least three in this decomposition). We then construct a new instance $(G'',4,L'',c'')$,  in which the node-disjoint ladders and hexagons found by Theorem~\ref{thm:ChainDecomposition} have degree at most two. By applying Lemma~\ref{lem:degree12edgecover} to $(G'',4,L'',c'')$, we can find an optimal solution $F_2$ for this instance. Then, by applying Lemma~\ref{lem:reducedegrees}, we can find a solution of cost $c(F_1)+c(F_2)\le \frac{3}{2}c(opt)$ for $(G,4,L,c)$. 
\end{proof}

\section{ A $\frac{4}{3}$-approximation for unweighted $(n-4)$-node-connectivity augmentation}
\label{sec:4/3approx}
In this section, we will prove the following theorem.
\begin{theorem}
\label{thm:unweighted}
    There is a polynomial time $\frac{4}{3}$-approximation algorithm for the unweighted $(n-4)$-Node Connectivity Augmentation problem.
\end{theorem}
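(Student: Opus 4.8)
The plan is to mirror the weighted algorithm \emph{in reverse}: instead of first building a cheap partial solution and then finishing exactly, we first build a complete edge cover (which already takes care of all the claw obstructions) and then \emph{repair} it so that it also hits every square, arguing that the repair costs at most $\frac13 c(opt)$. As in the weighted case, by Theorem~\ref{thm:from_connectivity_to_obstructions} we may work with an unweighted $4$-Obstruction Covering instance, which we write $(G,4,L,c)$ with $c\equiv 1$ on $L$ and $c\equiv\infty$ off $L$; by Theorem~\ref{thm:ChainDecomposition} we fix a decomposition of $G$ into node-disjoint ladders and hexagons $G_1,\dots,G_k$ and lonely nodes $R$, and by Corollary~\ref{cor:degree0} we assume no piece has degree $0$. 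Recall that the obstructions are exactly the claws centred at degree-$3$ nodes and the squares, and that every square sits inside one of the $G_i$.

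The first step is to compute a \emph{minimum $N$-edge cover $M$}, where $N$ is the set of degree-$3$ nodes, using Lemma~\ref{lem:SEdgeCover}; since $opt$ must pick a link incident to the centre of every claw, $opt$ covers $N$, hence $c(M)\le c(opt)$. The cover $M$ already hits every claw but may miss some squares, all of which lie inside the pieces.

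The second step repairs $M$ piece by piece. For each $G_i$, we keep the links of $M$ leaving $G_i$ (each covers, efficiently, one corner of $G_i$ and one node of another piece or of $R$), and we re-solve inside $G_i$ with the dynamic program of Theorem~\ref{alg:DP}: we run it on $G_i$ (with its consistent labelling, if a ladder) with edge weights $w\equiv 1$ on $L\cap E[G_i]$ and $w\equiv\infty$ otherwise, and covering requirement $h(v)=1$ for every degree-$3$ node of $G_i$ not already covered by a retained leaving link, $h(v)=0$ otherwise. Replacing $M\cap E[G_i]$ by the resulting set, for every $i$, gives a feasible solution $F$: it hits every square because the DP does, and every claw because each degree-$3$ node is covered either by a DP solution or by a retained link of $M$.

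It remains to prove $c(F)\le\frac43 c(opt)$, and the natural tool is a credit argument: assign $4$ credits to each link of $opt$ and spend $3$ per link of $F$, charging the edge-cover part of $F$ and the links of $F$ inside $R$ directly against $c(M)\le c(opt)$, and charging the ``repair overhead'' inside each piece against the links $opt$ is forced to place in and around that piece. The core inequality to establish is, roughly, that for every $G_i$ the re-optimised local cost is at most $\frac43$ times what $opt$ spends on the nodes and squares of $G_i$, counting a link of $opt$ leaving $G_i$ with weight $\frac12$ toward $G_i$; this is precisely where the restriction of the usable links to $L$ enters, since otherwise every ladder could be covered with no overhead at all. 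I expect this amortised accounting to be the main obstacle — especially the short ladders (length $1$ and $2$), the hexagons, and the configurations in which $opt$ economises by covering many corners with links shared between two pieces, which presumably require a case analysis on the local structure combined with the exact optimum returned by the dynamic program.
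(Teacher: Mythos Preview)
Your algorithm --- compute a minimum $N$-edge cover $M$ with $N$ the set of degree-$3$ nodes, keep every link of $M$ that leaves a piece, and re-optimise inside each piece with the DP of Theorem~\ref{alg:DP} --- does not achieve $4/3$; in the worst case it is only a $3/2$-approximation. Take $k$ length-$1$ ladders $G_1,\dots,G_k$ arranged in a cycle, each of degree~$4$, with the two edges between $G_i$ and $G_{i+1}$ joining \emph{opposite} corners of $G_i$ to opposite corners of $G_{i+1}$ (so no extra squares appear and the instance is valid), and let $L=E$. The $2k$ boundary edges then form a perfect matching of the $4k$ degree-$3$ nodes, so $M=\bigcup_i\delta(G_i)$ is a legitimate minimum $N$-edge cover. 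Your repair keeps all of $M$ and, since every corner is already covered from outside, the DP adds exactly one interior link per $G_i$ to hit the square, giving $|F|=2k+k=3k$. But $opt$ takes two opposite interior edges in each $G_i$, covering all four corners and the square, so $|opt|=2k$ and $|F|/|opt|=3/2$.

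The credit scheme you sketch cannot close this gap because it double-counts $opt$: you charge the retained boundary links of $F$ against $c(M)\le c(opt)$ \emph{and} the interior DP links against ``what $opt$ spends locally at each piece''. In the example the overhead is $|F|-|M|=k>\tfrac{2k}{3}=\tfrac13|opt|$, so the split $c(M)+\text{overhead}\le c(opt)+\tfrac13 c(opt)$ already fails. Your proposed per-piece inequality (DP cost at most $\tfrac43$ of $opt$'s half-weighted local spend) may well hold, but summing it bounds only $\sum_i c(\mathrm{DP}_i)$ and says nothing about the boundary links of $M$ you also kept.

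What the paper does, and what your proposal is missing, is to repair $EC_3$ by \emph{swaps that propagate between pieces} rather than by freezing the boundary and re-optimising each interior independently. When an interior link is added to cover the square of $G_s$, a boundary link of $EC_3$ leaving $G_s$ may be \emph{replaced} by an interior link of the neighbouring piece $G_t$, covering $G_t$'s square for free; the single net $+1$ is then charged to at least three links of $EC_3$ in $\delta(G_s)\cup\delta(G_t)$ that are never charged again (Lemma~\ref{lem:simplifiedapproximation}). This cross-piece interaction is exactly what breaks the $3/2$ barrier in the cycle-of-squares example above; once you commit to keeping $M\cap\delta(G_i)$ and optimising each $G_i$ in isolation, that barrier is unavoidable.
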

We first consider an unweighted $4$-Obstruction Covering instance $(G,4,L)$, and apply Theorem~\ref{thm:ChainDecomposition} to decompose $G$ into lonely nodes $R$, and node-disjoint hexagons, ladders $G_1,\dots, G_k$. We fix an optimal solution, denoted $opt$.

Our algorithm will apply some preprocessing operations, and then find a minimum size subset of edges $EC_3\subseteq L$, that covers every degree $3$ node in $G$ and the obstructions of every  $G_i\in\{G_1,\dots, G_k\}$ of degree 1 and 2.
As the obstructions covered by $EC_3$ are a subset of all obstructions,  we can see $c(EC_3) \leq c(opt)$. We then show that we can modify this solution to get a solution $APX$ that covers the remaining obstructions, and by a careful local charging argument we will show that $c(APX) \leq \frac{4}{3}c(opt)$. 

To simplify our arguments and allow us to focus on  discussing the main technical challenges of our approach, %presenting the important details of our algorithm,
we will assume that the subgraphs $G_1,\dots, G_k$ given by Theorem~\ref{thm:ChainDecomposition} consist only of ladders of length $1$. Indeed, it is possible to prove that, if we aim to get a $4/3$-approximation for the general problem, we can restrict ourselves to this special case (see Appendix~\ref{sec:4/3approx-proofs}).

\smallskip
\emph{Preprocessing and Computing $EC_3$}. We call a link $\ell$ \emph{necessary} if there is an obstruction $O$ such that $\ell$ is the only link that can cover $O$. Observe  that any feasible solution must contain all necessary links. Therefore, we initially take these links as part of our solution and then remove them from $E$ and $L$ and consider the remaining instance. Thus, we assume without loss of generality that no link in $L$ is necessary.

The following lemma, similar in spirit to Lemma~\ref{lem:degree12edgecover},  allows us to compute $EC_3$ in polynomial time (Proof found in Appendix~\ref{sec:degree123}).
\begin{lemma}
\label{lem:degree123}
    Given an unweighted $4$-Obstruction Covering instance $(G,4,L)$, let $R, G_1,\dots, G_k$  be the subgraphs  found as in Theorem~\ref{thm:ChainDecomposition}. We can find in polynomial time a minimum cardinality subset of links that:
    \begin{enumerate}
        \item covers every node of degree 3, and
        \item covers every square $S$ with $V(S)\subseteq V(G_i)$, for some $G_i\in \{G_1,\dots, G_k\}$ that has degree 1 or 2.
    \end{enumerate}
\end{lemma}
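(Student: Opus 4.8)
The plan is to reduce the problem to a polynomially-solvable matching-type problem, mirroring the proof strategy of Lemma~\ref{lem:degree12edgecover}, but now with the additional requirement that \emph{all} degree-$3$ nodes of $G$ (not only those inside low-degree ladders/hexagons) be covered. The key observation is that the two families of constraints we must satisfy---covering every degree-$3$ node, and covering every square contained in a degree-$1$ or degree-$2$ subgraph $G_i$---are ``local'' in the following sense: once we have fixed, for each $G_i$ of degree $1$ or $2$, which links incident to $V(G_i)$ (including those in $\delta(G_i)$) we select, the square-covering constraints inside $G_i$ are determined, and there are only constantly many squares and constantly many corners per such $G_i$ (a ladder has at most $4$ corners, a hexagon at most $3$, and after the reduction to ladders of length $1$ each $G_i$ is a single square). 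So I would enumerate, for each low-degree $G_i$, the finitely many ``admissible local patterns'' of selected links incident to $G_i$ that cover all squares of $G_i$, and then glue these patterns together with a global edge-cover constraint for the remaining degree-$3$ nodes.

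Concretely, first I would handle the squares inside each degree-$1$ or degree-$2$ $G_i$. Since $G_i$ is a single square $v_1v_2u_2u_1$ (after the length-$1$ reduction) with at most two corners, for each such $G_i$ I would branch over the constantly many minimal sets of links that cover the square $S(G_i)$: either we pick one of the four square-edges, or we pick a link of $\delta(G_i)$ incident to a corner together with whatever is needed to cover $S$, etc. Because there are only $O(1)$ choices per $G_i$ and the $G_i$ are node-disjoint, this does \emph{not} blow up if done correctly---but to keep it polynomial I would instead model each square-covering requirement as an auxiliary gadget and fold it into the global optimization rather than literally enumerating across all $G_i$ simultaneously. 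The clean way: build an auxiliary graph $H$ in which the requirement ``cover square $S$'' is encoded exactly as in the Dynamic Programming of Theorem~\ref{alg:DP} restricted to a single square, and the requirement ``cover degree-$3$ node $w$'' is encoded as a mandatory node to be matched/covered, exactly as in the Minimum $N$-Edge Cover problem (Lemma~\ref{lem:SEdgeCover}). The degree-$3$ nodes lying outside all low-degree $G_i$ interact with the low-degree $G_i$ only through links in $L$; since such a degree-$3$ node $w$ induces $K_{1,3}$ with its three neighbours, covering $w$ just means selecting one link incident to $w$, which is a pure edge-cover-type constraint.

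The main technical step---and the place I expect the real work to be---is verifying that these two kinds of constraints can be combined into a single instance of a tractable problem without losing optimality: i.e., that the resulting optimization is still reducible to (weighted) edge cover or to $b$-matching, despite the square-covering constraints being genuinely hypergraph-covering constraints (a square is covered by any one of its four edges) rather than vertex-covering constraints. The resolution is that each square sits inside a $G_i$ of bounded size with bounded degree, so ``cover this square'' can be replaced, via the Theorem~\ref{alg:DP} machinery on a single square, by a small disjunction over local configurations, and each configuration is itself an edge-cover constraint; taking a minimum over the $O(1)$ configurations for each of the independent $G_i$ and solving the resulting global $N$-edge-cover instance by Lemma~\ref{lem:SEdgeCover} gives the optimum. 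I would argue correctness by a standard exchange argument: from any optimal feasible $F^\star$, restrict to each $G_i$ to read off which local configuration it uses, note that $F^\star$ restricted to the complement is a feasible $N$-edge-cover for the induced global instance, and conversely that any solution produced by the algorithm is feasible for requirements (1) and (2); minimality then follows since we minimize over all configuration choices and the global step is exact. Finally I would remark that after the preprocessing removing necessary links, every square and every degree-$3$ node indeed has at least two covering options, so no configuration is forced and the enumeration is well-defined, which is what makes the reduction to the polynomially-solvable base problems go through.
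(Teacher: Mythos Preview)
Your approach is essentially the paper's: the proof of Lemma~\ref{lem:degree123} is a one-line invocation of Lemma~\ref{lem:degree12reduction} (the general weighted statement proved en route to Lemma~\ref{lem:degree12edgecover}), which does exactly what you describe---replace each degree-$1$ or degree-$2$ $G_i$ by a small gadget whose edge costs are the DP values from Theorem~\ref{alg:DP} under the $O(1)$ boundary configurations (corner covered internally or not), put all degree-$3$ nodes into $N$, and solve Minimum $N$-Edge Cover via Lemma~\ref{lem:SEdgeCover}. One slip to correct: at this point the length-$1$ reduction has \emph{not} been applied, so a low-degree $G_i$ may be an arbitrarily long ladder or a hexagon, not a single square of bounded size; what is bounded is the number of \emph{corners} (at most two), and the DP is run once per boundary configuration on the full $G_i$ rather than by enumeration over a bounded-size object.
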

\emph{Fixing the solution}.
Given a subset $H\subseteq \{G_1,\dots, G_k\}$, we denote by $G[H]$ the subgraph induced by $\cup_{G_a\in H} V(G_a)$, and denote by $E[H]$ the edges of $G[H]$. The following lemma shows that if every $G_i\in \{G_1,\dots, G_k\}$ has degree $3$ or $4$, then we can use $EC_3$ to find a feasible solution $APX$ such that $|APX|\leq \frac{4}{3}EC_3$.% from $EC_3$ for a fractional increase of at most $4/3$.
%Denote by $H_0\subseteq \{G_1,\dots, G_k\}$ the chains and hexagons that are covered by $EC_3$. 
% To account for space limitations we consider here a special case of the $4$-Obstruction Covering problem where the only ladders that are not covered by $EC_3$ are those of length $1$ and degree at least $3$. Examining this case allows us to briefly examine the more interesting elements of the proof of Theorem~\ref{thm:unweighted}. The completed proof of Theorem~\ref{thm:unweighted} can be found in Appendix~\ref{sec:unweighted}.
\begin{lemma}
\label{lem:simplifiedapproximation}
    Consider an unweighted $4$-Obstruction Covering instance $(G,4,L)$, with decomposition into subgraphs $R, G_1,\dots, G_k$, found by Theorem~\ref{thm:ChainDecomposition},  such that every $G_i\in \{G_1,\dots, G_k\}$ is a ladder of length 1 that has degree $3$ or $4$. 
    Consider a partial solution $EC_3\subseteq L$ that covers every degree $3$ node, but does not (necessarily) cover the squares of $G_i\in \{G_1,\dots, G_k\}$.
    
    We can find in polynomial time, a feasible solution $APX\subseteq L$, such that $|APX| \leq \frac{4}{3}|EC_3|$.
\end{lemma}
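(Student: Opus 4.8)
\textbf{Plan of the proof of Lemma~\ref{lem:simplifiedapproximation}.}
The plan is to take the partial solution $EC_3$, which by Lemma~\ref{lem:degree123} we may assume covers all degree-$3$ nodes and has minimum cardinality subject to that (and to covering the obstructions inside low-degree $G_i$'s, though here every $G_i$ has degree $3$ or $4$), and then patch it locally so that every square of every $G_i$ becomes covered. The obstructions that $EC_3$ can still fail to cover are exactly the squares $S$ with $V(S)=V(G_i)$ for some $G_i$ whose square contains no link of $EC_3$; call such a $G_i$ \emph{uncovered}. Since each $G_i$ is a ladder of length $1$, i.e.\ a single square on four nodes with $3$ or $4$ corners, an uncovered $G_i$ has at least three of its four nodes of degree $3$, each of which is therefore covered by some link of $EC_3$ lying in $\delta(G_i)$ (a link inside the square would cover the square). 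The first step is to record this structure: for each uncovered $G_i$, fix the links of $EC_3$ incident to its corners and note that there are at least three of them (counting multiplicity of corners), so ``on average'' $EC_3$ already spends at least $\tfrac34$ of a link per uncovered $G_i$ that we will need to fix.

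The second step is the fixing itself. For each uncovered $G_i$ we must add (at least) one link inside $E[G_i]$ — any of its four square edges is available unless it was removed as non-existent; crucially, since no link is necessary, the square of $G_i$ is covered by some link of $L$, and because $G_i$ has a corner, one checks that at least one \emph{edge of the square itself} is a link (otherwise that square would have a unique covering link through the corner, making it necessary — this is where the preprocessing is used). So we can always add one square edge $f_i\in E[G_i]\cap L$ to cover $G_i$. Doing this for all uncovered $G_i$ independently yields a feasible solution $APX=EC_3\cup\{f_i: G_i \text{ uncovered}\}$, since the only obstructions not covered by $EC_3$ were these squares (degree-$3$ nodes are covered, and obstructions straddling two $G_i$'s or involving $R$ are ruled out by the last property of Theorem~\ref{thm:ChainDecomposition}). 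The remaining task is the counting: $|APX|=|EC_3|+(\text{number of uncovered }G_i)$, and we must show the number of uncovered $G_i$ is at most $\tfrac13|EC_3|$.

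The third step, the local charging argument, is where the main work lies. Assign to each uncovered $G_i$ a charge of $1$ and distribute it onto the links of $EC_3$ incident to $V(G_i)$: since there are at least three such link-endpoints at corners of $G_i$, each distinct link incident to $G_i$ receives at least... — more carefully, one routes $\tfrac13$ of a unit along each of three chosen corner-incidences of $G_i$ to the link sitting there. A single link $\ell\in EC_3$ has two endpoints, hence can be a ``corner link'' for at most two of the $G_i$'s, and for a fixed $G_i$ it can be charged at most once through a single incidence, so $\ell$ receives total charge at most $2\cdot\tfrac13<1$. Hence $(\text{number of uncovered }G_i)=\sum_i 1 \le \tfrac13\sum_{\ell\in EC_3}(\text{charge to }\ell)\cdot 3$ — rather, $\#\text{uncovered}\le\tfrac13\cdot 2|EC_3|$? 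That would only give $\tfrac23$, so the real argument must be sharper: we must also argue that the links of $EC_3$ charged by an uncovered $G_i$ are not themselves ``wasted'' elsewhere, or we must exploit that an uncovered $G_i$ actually forces its incident links to be spent on $G_i$ in a way that can be reallocated. I expect the genuine obstacle to be exactly this: getting the ratio down from the easy $\tfrac32$-style bound to $\tfrac43$ requires a careful case analysis on how many corners an uncovered $G_i$ has ($3$ versus $4$) and whether its corner links are shared with other uncovered or covered $G_i$'s, possibly combined with a local exchange that re-optimizes $EC_3$ near a bad configuration. The plan is therefore to set up the charging so that each link of $EC_3$ ends up with charge at most $1$ (e.g.\ by charging only $\tfrac13$ per incidence and showing a link cannot collect three full incidences because its second endpoint lies in a different $G_j$ or in $R$), and then to handle the residual tight cases — a $G_i$ with exactly three corners all of whose links are shared — by a direct combinatorial argument or small local modification, bounding the cost increase so that the overall factor stays $\tfrac43$.
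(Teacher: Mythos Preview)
Your plan has a genuine gap that you yourself partly identify: the naive approach of simply adding one square-link $f_i$ for every uncovered $G_i$ and then charging $\tfrac13$ to each of three corner links of $G_i$ does \emph{not} give the needed bound. As you compute, a link of $EC_3$ can receive charge from both of its endpoints, so this only yields $\#\{\text{uncovered }G_i\}\le \tfrac23|EC_3|$, hence $|APX|\le \tfrac53|EC_3|$, not $\tfrac43|EC_3|$. A concrete bad instance is two length-$1$ ladders $G_1,G_2$, each of degree~$3$, with three links of $EC_3$ joining the three corners of $G_1$ to the three corners of $G_2$; here $|EC_3|=3$, both squares are uncovered, and your $APX$ has size~$5>4=\tfrac43\cdot 3$. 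So the bound you want is simply false for your construction of $APX$; no sharper \emph{charging} will rescue it.

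What is missing is the idea of \emph{swapping} links rather than only adding them. The paper does not define $APX$ as $EC_3$ plus one link per uncovered square. Instead it runs an iterative procedure: when processing an uncovered $G_s$, it adds a square-link $e\in E[G_s]$ incident to some $\ell\in\delta(G_s)\cap EC_3$, and then, if the other endpoint of $\ell$ lies in another uncovered $G_t$, it \emph{removes} $\ell$ and adds a square-link $f\in E[G_t]$ adjacent to $\ell$. Thus two squares get covered for a net increase of one link, while three links $\ell,\ell_1,\ell_2$ (all in $EC_3$, all now inside the ``processed region'' $Y_{i+1}\cup E[\cH_{i+1}]$) are available to charge to. The iterative bookkeeping via the growing set $\cH_i$ guarantees that each link of $EC_3$ is charged at most once, because it enters $Y_i\cup E[\cH_i]$ in exactly one step. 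A short case analysis (depending on whether $G_s$ has two neighbours already processed, or a good corner adjacent to a good corner of another unprocessed $G_t$, or neither) shows one can always find three such fresh links per unit of net increase. Your outline has the right instinct about local exchange in the last paragraph, but the actual mechanism---delete a crossing link of $EC_3$ and reroute it into a neighbouring square---is the key step you are missing, and without it the $\tfrac43$ bound is unattainable.
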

\begin{proof}[Sketch of Proof]
    We begin by defining notation. For a square $S$, we call a corner $u\in V(S)$ a \emph{good} node if there is another corner $v\in V(S)$ such that $uv\in L$, and we say $u$ is a bad node otherwise (recall, we say a node of $V(S)$ is a corner if it is incident to an edge in $\delta(S)$). Note that the nodes of  $V(S)$ that are not corners are neither good nor bad. 
    % By Lemma~\ref{lem:squares}, a degree 3 ladder cannot have 3 good nodes. 
    Note that since no links are necessary, all nodes of a degree 4 ladder are good nodes. 
    % \todo{good nodes are defined as the nodes in a square which have a link adjacent to them, and bad nodes are the ones that have no link adjacent to them (both being among the edges in the square). This definition categorizes every node of the square as being good or bad. But on line 361, and further in the proof, the fact that such a square has exactly two good nodes and one bad node is used. Some justification behind this would be helpful.}

    We find $APX$ by performing a sequence of steps. We begin with step 1 where we set $APX_1 = EC_3$, $Y_1= \emptyset$, and $\cH_1$ equal to the subset of $\{G_1,\dots, G_k\}$ whose squares are covered by $APX_1$. For each step $i$, $i \ge 2$, we define
    \begin{itemize}
    \item  $APX_i\subseteq L$ to be a partial solution;
     \item $\cH_i\subseteq \{G_1,\dots, G_k\}$ to be the set of ladders whose squares are covered by $APX_i$; 
     \item $Y_i$ to be the subset of edges  $ \subseteq APX_i \cap \delta(G[\cH_i])$ such that every $y\in Y_i$ has an endpoint in $R$.
    \end{itemize}
    Our goal is to compute $APX_i$ at $i$-th step, $i\ge 2$, such that the following holds: 
    \begin{enumerate}
        \item[(1)] The obstructions covered by $APX_{i-1}$ are a strict subset of those covered by $APX_i$;
        \item[(2)] $|APX_i\cap (Y_i \cup E[\cH_i])| \leq \frac{4}{3}| EC_3\cap (Y_i \cup E[\cH_i])|$, and $APX_i \backslash (Y_i \cup E[\cH_i]) = EC_3 \backslash (Y_i \cup E[\cH_i])$. 
    \end{enumerate}
    Note that since $APX_i$ covers the obstructions covered by $APX_{i-1}$, we must have $\cH_{i-1}\subseteq \cH_i$. If in iteration $i$, we have $\cH_i = \{G_1,\dots, G_k\}$, then we are done as $APX_i$ covers every obstruction. If $\cH_i \subsetneq \{G_1,\dots, G_k\}$, then we will find some $G_s\in \{G_1,\dots, G_k\} \backslash \cH_i$, and using $G_s$, we will find $APX_{i+1}$, $\cH_{i+1}$, and $Y_{i+1}$.
    
    We will consider three cases for $G_s$, for which we now provide a brief description. First, if $G_s$ has two nodes that are adjacent to  nodes in $R$ or in $\cH_i$. Second, if a good node of $G_s$ is adjacent to a good node of some $G_t\notin \cH_i$. And finally, if neither of the previous cases hold, then we show $G_s$ is degree 3 and that a good node of $G_s$ must be adjacent to a bad node of some degree 3 ladder $G_t\notin \cH_i$. In all cases, we will be able to find at least three links in $\delta(G_s)$ (and possibly $\delta(G_t)$), such that, by adding links to cover $G_s$ (and possibly $G_t$), we can charge the addition of these new links to the links in $\delta(G_s)$ (and possibly $\delta(G_t)$) for a $\frac{4}{3}$ fractional increase. 
    To account for space limitations, we only show the details of case 1 here. All cases can be found in Appendix~\ref{sec:simplifiedapproximationproof}. 

    \textit{Case 1:} there exists $G_s \in \{G_1,\dots, G_k\}\backslash \cH_i$, such that $G_s$ has two nodes that are adjacent to either a node in $R$ or in $\cH_i$.
    Since $G_s$ is degree at least $3$, there are links $\{\ell,\ell_1,\ell_2\} \subseteq \delta(G_s)\cap APX_i$. Without loss of generality, let the links $\ell_1$ and $\ell_2$ be the ones incident to nodes in $\cH_i$ or in $R$.
    
    Since no links are necessary, there is a link $e\in E(G_s)$ that shares an endpoint with $\ell$, and we start with $APX_{i+1}\leftarrow APX_{i}\cup \{e\}$. 
    Note that this ensures that $G_s$ is in $\cH_{i+1}$, and so $\cH_i \subsetneq \cH_{i+1}$.
    
    If any of $\ell_1$, $\ell_2$ or $\ell$ are incident to a node in $R$, then that link is in $Y_{i+1}$. Assume $\ell$ is incident to some ladder $G_t$. If $G_t\in \cH_i$, we are done this iteration. If not, we are going to change $APX_{i+1}$ by swapping two edges. Since $\ell$ is not necessary, there is link $f\in E(G_t)$ sharing an endpoint with $\ell$. We replace $\ell$ with $f$ in $APX_{i+1}$, and so also $G_t$ will be in $H_{i+1}$. 
    Since the endpoints of $\ell$ are covered in $APX_{i+1}$ by either $\ell$ or $\{e,f\}$, the obstructions in $G_s$ are covered, and all other obstructions of $APX_i$ are covered by $APX_{i+1}$, $(1)$ holds (see Figure~\ref{fig:case1sketch}).
    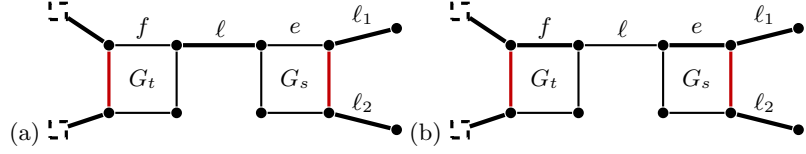
\begin{figure}[t]
    \begin{center}
        \begin{tabular}{c c}
            (a)
            \begin{tikzpicture}[scale=0.9]
                
                % Node styles
                \tikzset{black dot/.style={draw=black, very thick, circle,minimum size=0pt, inner sep=1pt, outer sep=1pt,fill=black}}
                \tikzset{terminal/.style={draw=black,  thick,minimum size=0pt, inner sep=2.5pt, outer sep=1pt}}
                \tikzset{P node/.style={fill={rgb,255: red,20; green,154; blue,0}, draw={rgb,255: red,20; green,154; blue,0}, circle, minimum size=0pt,inner sep=1pt, outer sep=1pt}}
            
                % Edge styles
                \tikzstyle{witness edge}=[-, draw={rgb,255: red,195; green,0; blue,3}, very thick]
                \tikzstyle{T edges}=[-, thick]
                \tikzstyle{Fat edge}=[-, ultra thick]
                \tikzstyle{new witness}=[-, draw={rgb,255: red,195; green,0; blue,3}, dashed, ultra thick]
                \tikzstyle{connected terminals}=[-, draw=black, dashed, very thick]
                \tikzstyle{P}=[-, draw={rgb,255: red,20; green,154; blue,0}, very thick]
                
    		\node [style=black dot] (1) at (0, 0) {};
    		\node [style=black dot] (2) at (0, 1) {};
    		\node [style=black dot] (3) at (1, 1) {};
    		\node [style=black dot] (4) at (1, 0) {};
    		\node [style=black dot] (5) at (2, 1.25) {};
    		\node [style=black dot] (6) at (2, -0.25) {};
    		\node [style=black dot] (7) at (-1.25, 1) {};
    		\node [style=black dot] (8) at (-1.25, 0) {};
    		\node [style=black dot] (9) at (-2.25, 0) {};
    		\node [style=black dot] (10) at (-2.25, 1) {};
    		\node [style=connected terminals] (11) at (-3, 1.5) {};
    		\node [style=connected terminals] (12) at (-3, -0.25) {};
                \node (13) at (0.5, 0.5) {$G_s$};
        		\node (14) at (-1.75, 0.5) {$G_t$};
    		\node (16) at (-0.6, 1.25) {$\ell$};
                \node (16) at (0.5, 1.25) {$e$};
                \node (16) at (-1.75, 1.25) {$f$};
                  \node (16) at (1.5, 1.45) {$\ell_1$};
                  \node (16) at (1.5, 0.15) {$\ell_2$};
    		\node (17) at (-1, 1.25) {};
    		\draw [style=T edges] (4) to (1);
    		\draw [style=T edges] (1) to (2);
    		\draw [style=T edges] (7) to (8);
    		\draw [style=T edges] (8) to (9);
    		\draw [style=T edges] (10) to (7);
    		\draw [style=witness edge] (9) to (10);
    		\draw [style=witness edge] (3) to (4);
    		\draw [style=T edges] (2) to (3);
    		\draw [style=Fat edge] (4) to (6);
    		\draw [style=Fat edge] (3) to (5);
    		\draw [style=Fat edge] (2) to (7);
    		\draw [style=Fat edge] (10) to (11);
    		\draw [style=Fat edge] (9) to (12);
            \end{tikzpicture}
            &
            (b)
            \begin{tikzpicture}[scale=0.9]
                
                % Node styles
                \tikzset{black dot/.style={draw=black, very thick, circle,minimum size=0pt, inner sep=1pt, outer sep=1pt,fill=black}}
                \tikzset{terminal/.style={draw=black,  thick,minimum size=0pt, inner sep=2.5pt, outer sep=1pt}}
                \tikzset{P node/.style={fill={rgb,255: red,20; green,154; blue,0}, draw={rgb,255: red,20; green,154; blue,0}, circle, minimum size=0pt,inner sep=1pt, outer sep=1pt}}
            
                % Edge styles
                \tikzstyle{witness edge}=[-, draw={rgb,255: red,195; green,0; blue,3}, very thick]
                \tikzstyle{T edges}=[-, thick]
                \tikzstyle{Fat edge}=[-, ultra thick]
                \tikzstyle{new witness}=[-, draw={rgb,255: red,195; green,0; blue,3}, dashed, very thick]
                \tikzstyle{connected terminals}=[-, draw=black, dashed, very thick]
                \tikzstyle{P}=[-, draw={rgb,255: red,20; green,154; blue,0}, very thick]
                
    		\node [style=black dot] (1) at (0, 0) {};
    		\node [style=black dot] (2) at (0, 1) {};
    		\node [style=black dot] (3) at (1, 1) {};
    		\node [style=black dot] (4) at (1, 0) {};
    		\node [style=black dot] (5) at (2, 1.25) {};
    		\node [style=black dot] (6) at (2, -0.25) {};
    		\node [style=black dot] (7) at (-1.25, 1) {};
    		\node [style=black dot] (8) at (-1.25, 0) {};
    		\node [style=black dot] (9) at (-2.25, 0) {};
    		\node [style=black dot] (10) at (-2.25, 1) {};
    		\node [style=connected terminals] (11) at (-3, 1.5) {};
    		\node [style=connected terminals] (12) at (-3, -0.25) {};
                \node (13) at (0.5, 0.5) {$G_s$};
        		\node (14) at (-1.75, 0.5) {$G_t$};
    		\node (16) at (-0.6, 1.25) {$\ell$};
                  \node (16) at (0.5, 1.25) {$e$};
                  \node (16) at (-1.75, 1.25) {$f$};
                  \node (16) at (1.5, 1.45) {$\ell_1$};
                  \node (16) at (1.5, 0.15) {$\ell_2$};
    		\node (17) at (-1, 1.25) {};
          
    		\draw [style=T edges] (4) to (1);
    		\draw [style=T edges] (1) to (2);
    		\draw [style=T edges] (7) to (8);
    		\draw [style=T edges] (8) to (9);
      		\draw [style=Fat edge] (7) to (10);
    		\draw [style=witness edge] (9) to (10);
    		\draw [style=witness edge] (3) to (4);
    		\draw [style=T edges] (2) to (7);
    		\draw [style=Fat edge] (10) to (11);
    		\draw [style=Fat edge] (9) to (12);
    		\draw [style=Fat edge] (2) to (3);
    		\draw [style=Fat edge] (3) to (5);
    		\draw [style=Fat edge] (4) to (6);
            \end{tikzpicture}
        \end{tabular}
    \end{center}
    \caption{(a) Ladder $G_s$ adjacent to two lonely nodes and ladder $G_t$ by links $\{\ell,\ell_1,\ell_2\} = \delta(G_s)$. The solution $APX_i$ is shown as bold edges. The red edges are not links. 
    (b)Ladder $G_s$ adjacent to two lonely nodes and ladder $G_t$ by links $\{\ell,\ell_1,\ell_2\} = \delta(G_s)$. The solution $APX_{i+1}$ is shown as bold edges. The addition of $\{e,f\}$ and the removal of $\ell$ is charged to $\{\ell,\ell_1,\ell_2\}$. %Edges in $APX_{i+1}$ but not $APX_i$ shown in green. 
    % We have $APX_{i+1} \leftarrow APX_i\cup \{e,f\}\backslash\{\ell\}$. We charge the addition of $\{e,f\}$ to $\ell,\ell_1,\ell_2$.
    }
    \label{fig:case1sketch}
\end{figure}
    
    By constructions of $APX_i$ and $APX_{i+1}$, we have 
    \begin{align*}
        APX_{i+1}\backslash (Y_{i+1} \cup E[\cH_{i+1}] )  &= APX_{i}\backslash (Y_{i+1} \cup E[\cH_{i+1}])=  EC_3\backslash (Y_{i+1} \cup E[\cH_{i+1}] ) 
    \end{align*}
     
    To see that $|APX_{i+1}\cap (Y_{i+1} \cup E[\cH_{i+1}])| \leq \frac{4}{3}| EC_3\cap (Y_{i+1} \cup E[\cH_{i+1}])|$, observe that 
    \begin{align*}
        |EC_3\cap (Y_{i+1}\cup E[\cH_{i+1}])| - |EC_3\cap (Y_i\cup E[\cH_i])|\coloneqq \Delta \ge 3
    \end{align*}
    where the inequality follows since $\ell,\ell_1,\ell_2\notin EC_3\cap(Y_i\cup E[\cH_i])$, but are in $EC_3\cap(Y_{i+1}\cup E[\cH_{i+1}])$. Similarly, the following inequality follows since $APX_{i+1}\cap(Y_{i+1}\cup E[\cH_{i+1}])$ contains $\ell_1,\ell_2,e$ and exactly one of $f$ or $\ell$.
    \begin{align*}
        |APX_{i+1}\cap (Y_{i+1}\cup E[\cH_{i+1}])| - |APX_i\cap (Y_i\cup E[\cH_i])| = \Delta +1
    \end{align*}
    So as $i$ increases by one, the increase on the right hand side of the inequality of (2) is at most $\frac{4}{3}$ the increase on the left hand side. Thus, (2) holds.
    %To again account for space limitations the argument that (2) also holds can be found in the Appendix~\ref{sec:simplifiedapproximationproof}, though it follows standard lines of argumentation.

    Given that we are able to show (1) and (2) hold for the three cases above, we note that for every $i\geq 1$, by (2) we have
    \begin{align*}
        |APX_i| &= |APX_i\cap (Y_i \cup E[\cH_i])| + |APX_i \backslash (Y_i \cup E[\cH_i])| \\
        &\leq \frac{4}{3}| EC_3\cap (Y_i \cup E[\cH_i])| + |EC_3 \backslash (Y_i \cup E[\cH_i])| \leq \frac{4}{3}|EC_3|
    \end{align*}
    Furthermore, in each iteration we cover a new element of $\{G_1,\dots, G_k\}$, so after a polynomial number of iterations we have covered every obstruction.
\end{proof}
We can now combine these ingredients to prove Theorem~\ref{thm:unweighted} (for this simple case). The proof for the general case can be found in Appendix~\ref{sec:4/3approx-proofs}.
\begin{proof}[Proof of Theorem~\ref{thm:unweighted}]
    Given an unweighted $(n-4)$-Node Connectivity Augmentation instance, we apply Theorem~\ref{thm:from_connectivity_to_obstructions} to obtain an unweighted $4$-Obstruction Covering instance, $(G,4,L)$. We will show how to find a solution $APX$ with $|APX| \leq \frac{4}{3}|opt|$. Using Theorem~\ref{thm:ChainDecomposition}, we decompose $G$ into node-disjoint subgraphs $R, G_1,\dots, G_k$. By applying Lemma~\ref{lem:degree123}, we compute a minimum cardinality set of links $EC_3\subseteq L$ that covers the obstructions contained in every subgraph $G_i$ of degree $1$ and $2$, and covers every degree $3$ node. Let $\cH_0\subseteq \{G_1,\dots, G_k\}$ be the ladders whose obstructions are covered by $EC_3$. We set  $APX_1 = EC_3 \cap E[\cH_0]$. We create a new instance $(G'=(V,E\backslash APX_1), 4, L' = L\backslash APX_1)$. 
    Clearly, if we apply Theorem~\ref{thm:ChainDecomposition} to the instance $(G',4,L')$, we get a decomposition that contains only ladders of degree $3$ or $4$.

    We apply Lemma~\ref{lem:simplifiedapproximation} to $(G',4,L')$ with $EC_3\setminus APX_1$ as our partial solution, to find a feasible solution $APX_2$. We let $APX = APX_1 \cup APX_2$. 
    By applying the inequality from the statement of Lemma~\ref{lem:simplifiedapproximation}, we  upper bound $|APX|$ as
    \[|APX_1| + |APX_2| \leq |EC_3\cap E[\cH_0]| + \frac{4}{3}|EC_3\setminus E[\cH_0]| \leq \frac{4}{3}|EC_3| \leq \frac{4}{3}|opt|.\]
\end{proof}
\section{ Hardness of Unweighted Node Connectivity Augmentation}
\label{sec:hardness}
In this section, we will prove that the $d$-Obstruction Covering problem is APX-hard for any $d\ge 4$ with $d = O(n^{c})$ for any fixed constant $c$. 
To prove this, we introduce a variant of the 3-SAT problem, denoted \emph{3-SAT-(2,2)}.

\begin{definition}[3-SAT-(2,2) problem]
    The 3-SAT-(2,2) problem is the special case of 3-SAT where each variable occurs exactly twice as a positive literal and twice as a negative literal in the boolean formula $\I$. The corresponding optimization version of the problem is denoted as MAX-3-SAT-(2,2).% node: this definition requires that we have introduced 3-sat-4. which will depend on the intro}
    %An instance of the 3-SAT problem where each variable occurs exactly twice as a positive literal and twice as a negative literal is called the 3-SAT-(2,2) problem.
\end{definition}
The following lemma shows that this special case of 3-SAT is as hard as the original problem. Its full proof can be found in Appendix~\ref{sec:3sat22hard}
\begin{lemma}
\label{lem:3sat22hard}
    The 3-SAT-(2,2) problem is NP-hard.
\end{lemma}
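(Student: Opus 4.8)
\textbf{Proof plan for Lemma~\ref{lem:3sat22hard}.}

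The plan is to reduce from the classical NP-hard 3-SAT problem (or, to control occurrences more tightly, from 3-SAT where each literal occurs a bounded number of times, which is known to be NP-hard) and massage an arbitrary instance into one where every variable appears exactly twice positively and twice negatively. The first step is to handle variables that occur too few or too many times. If a variable $x$ occurs fewer than the required number of times, I would pad the formula with trivially satisfiable clauses or introduce a short gadget of auxiliary clauses forcing consistency; the delicate direction is a variable occurring \emph{many} times, which is the standard obstacle in ``bounded occurrence'' reductions.

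\emph{The main technical device} is the usual equality gadget. For a variable $x$ occurring $t$ times, introduce fresh copies $x_1,\dots,x_t$, replace the $j$-th occurrence of $x$ by $x_j$, and add clauses enforcing $x_1 \Rightarrow x_2 \Rightarrow \dots \Rightarrow x_t \Rightarrow x_1$, i.e., a cyclic chain of implications $(\overline{x_j}\vee x_{j+1})$ (indices mod $t$). Each such 2-clause can be padded to a 3-clause by repeating a literal or by introducing a dummy variable; one must then check that the padding variable itself has the right occurrence count, which can be arranged by pairing padding variables up. The point of the cyclic structure is that in any satisfying assignment all copies $x_j$ take the same value, so the gadget faithfully simulates $x$; and one tunes the gadget so that, counting the occurrence inside the original (rewritten) clause plus the occurrences inside the implication chain, each $x_j$ appears exactly twice positively and twice negatively. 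Concretely, $x_j$ appears once in its original clause, once as $x_j$ in $(\overline{x_{j-1}}\vee x_j)$, and once as $\overline{x_j}$ in $(\overline{x_j}\vee x_{j+1})$; with a little care about the polarity of the original occurrence and one extra "balancing" clause per copy, the counts can be forced to exactly $(2,2)$.

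The last step is to verify the reduction is correct and polynomial: the formula size blows up only linearly (each occurrence spawns $O(1)$ new clauses and variables), the construction is clearly computable in polynomial time, and satisfiability is preserved in both directions — a satisfying assignment of the original extends by setting all copies equal, and a satisfying assignment of the new formula projects back by reading off the common value of the copies of each variable. \textbf{The main obstacle} I expect is the bookkeeping needed to make \emph{every} variable — original copies \emph{and} auxiliary/padding variables — land on exactly $(2,2)$ simultaneously, rather than merely being bounded; this typically requires choosing the gadget so that dummy variables come in matched pairs and adding a constant number of filler clauses, and then a careful case check that no variable is left with the wrong count. Since we only need NP-hardness (not APX-hardness) here, I would not worry about preserving approximation gaps in this lemma, which simplifies the analysis considerably.
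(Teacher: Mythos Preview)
Your plan is sound and uses the same core device as the paper --- a cyclic chain of implications to force fresh copies of a variable to take the same value --- so there is no gap. The difference is in the starting point and the resulting bookkeeping. The paper reduces from 3-SAT-4 (every variable occurs \emph{exactly} four times), which means only three imbalance patterns can arise ($4$--$0$, $3$--$1$, $2$--$2$), and each is handled by a tailor-made constant-size gadget: the $4$--$0$ case is eliminated by fixing the variable and patching the removed literals with dummy clauses, and the $3$--$1$ case is fixed by introducing three fresh variables and a four-clause cycle that forces them all equal to $x$ while landing every new variable on exactly $(2,2)$. Your approach starts from general (bounded-occurrence) 3-SAT and uses a uniform copy-and-cycle gadget for every variable regardless of its occurrence count; this is conceptually cleaner but pushes all the work into the ``balancing'' step you flag as the main obstacle, since you must simultaneously control polarities of original occurrences, padding literals, and dummy variables. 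Starting from 3-SAT-4 as the paper does buys you a much shorter case analysis and makes the exact-$(2,2)$ counts easy to verify by inspection; your more generic route would also go through, but would require spelling out the padding scheme in full to be convincing.
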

%\begin{proof}[Sketch of the proof.] Let $\I_4$ be an instance of 3-SAT-4. It is possible to transform it into an equivalent instance $\I_{2,2}$ of 3-SAT-(2,2) as follows:
%
%\begin{itemize}
%\item If a variable $x$ occurs as four positive (resp. four negative) literals, its value can be set to be true (resp. false). Then, it is possible to remove the clauses containing $x$ and incorporate back each other removed literal $y$ by including the extra clauses $(y\lor d_{y,1} \lor d_{y,2}) \wedge (d_{y,1} \lor \bar d_{y,1}\lor \bar d_{y,1}) \wedge (d_{y,2} \lor \bar d_{y,2}\lor \bar d_{y,2})$, where $d_{y,1}$ and $d_{y,2}$ are new variables.
%\item If a variable $x$ occurs as three positive literals (the case of three negative literals is symmetric), we introduce new variables $y_x, z_x$ and $w_x$, replace two literals $x$ in the formula by $y_x$ and $z_x$ respectively, and include the extra clauses $(x\lor \bar y_x\lor  \bar y_x) \wedge (y_x\lor \bar z_x\lor  \bar z_x) \wedge (z_x\lor  \bar w_x\lor  \bar w_x) \wedge (w_x\lor  w_x\lor  \bar x)$. The new clauses enforce the values of $x,y_x,z_x$, and $w_x$ to be equal.
%\end{itemize}
    %
%This way, the number of positive and negative literals in the new formula are balanced, defining a valid 3-SAT-(2,2) equivalent instance.\end{proof}
We will now prove the following preliminary hardness result for our problem.
\begin{theorem}
\label{thm:n-4hard}
    The $4$-Obstruction Covering problem is NP-hard.
\end{theorem}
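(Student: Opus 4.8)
The plan is to give a polynomial-time reduction from MAX-3-SAT-(2,2), which is APX-hard (and hence NP-hard) by Lemma~\ref{lem:3sat22hard} and the standard gap-amplification, to $4$-Obstruction Covering, so that an optimal (or near-optimal) obstruction cover translates back to an optimal (or near-optimal) assignment. Let $\I$ be a 3-SAT-(2,2) formula with variables $x_1,\dots,x_p$ and clauses $C_1,\dots,C_q$; since every variable appears exactly twice positively and twice negatively, $3q = 4p$. I will build an unweighted instance $(G,4,L)$ in which the only obstructions are $K_{1,3}$'s (i.e., degree-$3$ nodes together with their neighbourhoods) and possibly squares, engineered so that $G$ has no $K_{i,j}$ with $i+j>4$ and the cheapest way to cover every $K_{1,3}$ encodes a choice of truth value for each variable.

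First I would describe the variable gadget. For each variable $x_i$ I create a small fixed-size gadget containing a node $v_i$ whose only two candidate covering links are $\ell_i^{+}$ and $\ell_i^{-}$ (all other incident edges being non-links, i.e. of cost $\infty$, or simply absent from $L$), so that any feasible cover must contain exactly one of $\ell_i^{\pm}$; picking $\ell_i^{+}$ will mean ``$x_i=\texttt{true}$''. The two positive-occurrence ports and the two negative-occurrence ports of $x_i$ are attached so that selecting $\ell_i^{+}$ also covers the obstructions sitting at the positive ports (and symmetrically for $\ell_i^{-}$), while leaving the negative ports uncovered. Then for each clause $C_j$ I create a clause gadget with a distinguished degree-$3$ node $w_j$ whose three neighbours are exactly the three literal-ports of the variables occurring in $C_j$; the obstruction $K_{1,3}$ at $w_j$ is covered precisely when at least one of those three literal-ports is ``active'', i.e. when the clause is satisfied. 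The delicate part is to wire these together so that (i) $G$ as a whole still has no forbidden subgraph ($K_{i,j}$ with $i+j\ge 5$), which forces every node to have degree at most $3$ on the ``dense'' side and requires a careful local check that no two gadgets accidentally create a $K_{2,3}$ or $K_{1,4}$, and (ii) each obstruction is covered in the intended way and only in the intended way.

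The cost accounting is the core of the reduction. The ``forced'' links — one of $\ell_i^{\pm}$ per variable and the fixed internal links of each gadget — contribute a fixed amount $M$ to every feasible solution, independent of the assignment. Beyond that, each clause $C_j$ whose three literal-ports are all inactive requires one extra link (a ``repair'' link that covers $w_j$'s obstruction at unit cost); a satisfied clause requires no extra link. Hence the minimum obstruction cover has cost exactly $M + (q - \mathrm{sat}(\I))$, where $\mathrm{sat}(\I)$ is the maximum number of simultaneously satisfiable clauses. Since $\mathrm{sat}(\I)\ge \tfrac{7}{8}q$ always (or more simply, since in 3-SAT-(2,2) a random assignment satisfies a constant fraction of clauses and $q=\Theta(p)$), $M$ is linear in the instance size and the optimum is $M+q-\mathrm{sat}(\I)$; an additive/multiplicative gap in $\mathrm{sat}(\I)$ translates, via the boundedness of $M/q$, into a multiplicative gap in the obstruction-cover optimum, giving APX-hardness — though for Theorem~\ref{thm:n-4hard} it suffices to observe that from any feasible cover of cost $M + (q - s)$ one recovers in polynomial time an assignment satisfying $s$ clauses and vice versa, so deciding the optimum of $4$-Obstruction Covering decides MAX-3-SAT-(2,2), hence $4$-Obstruction Covering is NP-hard. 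By Theorem~\ref{thm:from_connectivity_to_obstructions} this also gives NP-hardness of $(n-4)$-Node-Connectivity Augmentation.

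The main obstacle I anticipate is the gadget design itself: making the clause and variable gadgets small and ``locally sparse'' enough that the global graph genuinely has no $K_{i,j}$ with $i+j>4$ — in particular avoiding stray $K_{2,3}$'s where a clause gadget meets a variable gadget or where two variable gadgets share structure — while still guaranteeing that each variable is forced into exactly one of two states and that an unsatisfied clause costs exactly one more link than a satisfied one. Getting the ``exactly one extra link per unsatisfied clause, and no way to cheat'' property typically requires padding ports with degree-$2$ helper nodes and verifying a short case analysis of which links can cover which obstructions; I expect most of the proof's length to be in that verification, with the cost formula and the reduction-correctness argument following routinely once the gadgets are fixed.
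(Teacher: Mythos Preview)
Your high-level plan is essentially the paper's approach --- clause nodes of degree~$3$, a constant-size gadget per variable forcing a binary choice that simultaneously covers two of its four literal-ports, and a cost argument tying cover size to satisfaction --- but what you have written is a plan, not a proof. The entire content of the reduction is the concrete gadget and the verification that it (a) creates no $K_{1,4}$ or $K_{2,3}$ globally and (b) admits exactly the intended minimum covers. You explicitly flag this as ``the main obstacle I anticipate'' and leave it undone; without it there is nothing to check. The paper's gadget is a specific six-node graph $H_x$ (two squares glued along an edge $x_2x_5$, with exactly four of its edges declared links and the four ``port'' nodes $x_1,x_3,x_4,x_6$ wired to clause nodes), and the proof then verifies by short case analysis that any size-$4k$ cover can be normalized to one of two canonical forms per variable, which reads off a satisfying assignment. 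Your abstract description (``a node $v_i$ whose only two candidate covering links are $\ell_i^\pm$'') does not obviously accommodate four ports under a degree-$3$ constraint, so it is not clear your sketch can be completed as stated.

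Two smaller points. First, Lemma~\ref{lem:3sat22hard} proves only NP-hardness of the decision problem 3-SAT-(2,2); the APX-hardness of MAX-3-SAT-(2,2) is a separate (later) lemma, so your opening sentence misattributes. Second, for Theorem~\ref{thm:n-4hard} you do not need MAX-SAT at all: the paper reduces from the decision problem and shows $\I$ is satisfiable iff the instance has a cover of size exactly~$4k$. Routing through MAX-SAT and the formula $M+(q-\mathrm{sat}(\I))$ is the right move for APX-hardness (and the paper does something similar there), but for plain NP-hardness it only adds work --- in particular it forces you to argue that \emph{every} near-optimal cover decodes to a near-optimal assignment, which is a strictly harder case analysis than the satisfiable/unsatisfiable dichotomy.
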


    In order to prove Theorem~\ref{thm:n-4hard}, we will provide a polynomial-time reduction from 3-SAT-(2,2). Let $\I$ be an instance of 3-SAT-(2,2), we will construct an instance $(G_{\I},4,L_{\I})$ of the $4$-Obstruction Covering problem as follows: 
    \begin{itemize}
        \item For each clause $C$ in $\I$, we add node $C$ to $G_{\I}$. We call these \emph{clause nodes}.
        \item For each variable $x\in I$, we add a subgraph $H_x$ to $G_{\I}$, where $H_x$ is defined by six \emph{variable nodes} $x_1,\dots,x_6$, and edges $\{x_1x_2, x_2x_3, x_3x_4, x_4x_5, x_5x_6,$ $ x_1x_6,x_2x_5\}$. 
        The edges $\{x_1x_2,$ $x_2x_3,$ $x_4x_5,$ $x_5x_6\}$ are included into $L$.
        \item For (possibly equal) clauses $C_1,C_2$ containing the positive literals of $x$, we add edges $\{x_1C_1, x_4C_2\}$ to both $G$ and $L$. For (possibly equal) clauses $C_3,C_4$ containing the negative literals of $x$, we add edges $\{x_3C_3,x_6C_4\}$ to both $G_{\mathcal{I}}$ and $L$.%\todo{$G \rightarrow G_{\mathcal{I}}$\\D: corrected in red}
    \end{itemize}
    
 See Figure~\ref{fig:SAT-obstruction}~(Left) for a depiction of the gadget. By construction, the graph is $3$-regular and no three nodes have two common neighbors, so it is not difficult to show that the defined instance is valid. The proof of the following proposition can be found in Appendix~\ref{sec:GILIfeasible}.

\begin{figure}[!t]
    \centering
    \resizebox{0.9\textwidth}{!}{%\begin{figure}
%    \centering
        \begin{tikzpicture}
            
            % Node styles
            \tikzset{black dot/.style={draw=black, very thick, circle,minimum size=0pt, inner sep=1pt, outer sep=1pt,fill=black}}
            \tikzset{white dot/.style={draw=white, very thick, circle,minimum size=0pt, inner sep=1pt, outer sep=1pt,fill=white}}
            \tikzset{terminal/.style={draw=black,  thick,minimum size=0pt, inner sep=2.5pt, outer sep=1pt}}
            \tikzset{P node/.style={fill={rgb,255: red,20; green,154; blue,0}, draw={rgb,255: red,20; green,154; blue,0}, circle, minimum size=0pt,inner sep=1pt, outer sep=1pt}}
        
            % Edge styles
            \tikzstyle{witness edge}=[-, draw={rgb,255: red,195; green,0; blue,3}, very thick]
            \tikzstyle{T edges}=[-, very thick]
            \tikzstyle{new witness}=[-, draw={rgb,255: red,195; green,0; blue,3}, dashed, very thick]
            \tikzstyle{connected terminals}=[-, draw=black, dashed, very thick]
            \tikzstyle{P}=[-, draw={rgb,255: red,20; green,154; blue,0}, very thick]
            \node [style=black dot] (0) at (-1.75, 1.75) {};
            \node [style=black dot] (1) at (-1.75, 0.25) {};
            \node [style=black dot] (2) at (0, 0.25) {};
            \node [style=black dot] (3) at (0, 1.75) {};
            \node [style=black dot] (4) at (1.75, 1.75) {};
            \node [style=black dot] (5) at (1.75, 0.25) {};
            \node [style=black dot] (6) at (2.75, -0.5) {};
            \node [style=black dot] (7) at (2.75, 2.5) {};
            \node [style=black dot] (8) at (-3, 2.5) {};
            \node [style=black dot] (9) at (-3, -0.5) {};
            % \node (11) at (0, 2.5) {$x$};
            \node (12) at (-2.25, 2.25) {$x$};
            \node (13) at (2, 2.25) {$\bar x$};
            \node (14) at (-2.5, 0) {$\bar x$};
            \node (15) at (2.5, 0) {$x$};
            \node (16) at (-1.45, 2) {$x_1$};
            \node (17) at (0, 2) {$x_2$};
            \node (18) at (1.45, 2) {$x_3$};
            \node (19) at (1.45, 0) {$x_4$};
            \node (20) at (0, 0) {$x_5$};
            \node (21) at (-1.45, 0) {$x_6$};
            \node (22) at (-3.5, 2.75) {$C_1$};
            \node (23) at (3.25, 2.75) {$C_3$};
            \node (24) at (3.25, -0.5) {$C_2$};
            \node (25) at (-3.5, -0.5) {$C_4$};
            \draw [style=T edges] (8) to (0);
            \draw [style=T edges] (1) to (9);
            \draw [style=T edges] (1) to (2);
            \draw [style=T edges] (3) to (4);
            \draw [style=T edges] (5) to (6);
            \draw [style=T edges] (5) to (2);
            \draw [style=T edges] (0) to (3);
            \draw [style=T edges] (7) to (4);
            \draw [style=witness edge] (0) to (1);
            \draw [style=witness edge] (3) to (2);
            \draw [style=witness edge] (4) to (5);
            \node [style=white dot] (blank) at (-2.75, -1) {};
        \end{tikzpicture}    
%    \caption{An instance of $H_x$, for variable $x$ in $\mathcal{I}$ that is in clauses $C_1,C_2,C_3,C_4$. Here the links of $H_x$   are denoted by the black edges.}
%    \label{fig:SAT-obstruction}
%\end{figure}\hspace{40pt}%\begin{figure}
%    \centering
        \begin{tikzpicture}
            
            % Node styles
            \tikzset{black dot/.style={draw=black, very thick, circle,minimum size=0pt, inner sep=1pt, outer sep=1pt,fill=black}}
            \tikzset{terminal/.style={draw=black,  thick,minimum size=0pt, inner sep=2.5pt, outer sep=1pt}}
            \tikzset{P node/.style={fill={rgb,255: red,20; green,154; blue,0}, draw={rgb,255: red,20; green,154; blue,0}, circle, minimum size=0pt,inner sep=1pt, outer sep=1pt}}
        
            % Edge styles
            \tikzstyle{witness edge}=[-, draw={rgb,255: red,195; green,0; blue,3}, very thick]
            \tikzstyle{T edges}=[-, very thick]
            \tikzstyle{new witness}=[-, draw={rgb,255: red,195; green,0; blue,3}, dashed, very thick]
            \tikzstyle{connected terminals}=[-, draw=black, dashed, very thick]
            \tikzstyle{P}=[-, draw={rgb,255: red,20; green,154; blue,0}, very thick]
		\node [style=black dot] (0) at (-1.75, 1.75) {};
		\node [style=black dot] (1) at (-1.75, 0.25) {};
		\node [style=black dot] (2) at (0, 0.25) {};
		\node [style=black dot] (3) at (0, 1.75) {};
		\node [style=black dot] (4) at (1.75, 1.75) {};
		\node [style=black dot] (5) at (1.75, 0.25) {};
		\node [style=black dot] (6) at (2.75, -0.5) {};
		\node [style=black dot] (7) at (2.75, 2.5) {};
		\node [style=black dot] (8) at (-3, 2.5) {};
		\node [style=black dot] (9) at (-3, -0.5) {};
		\node (12) at (-2.25, 2.25) {$x$};
		\node (13) at (2, 2.25) {$\bar x$};
		\node (14) at (-2.5, 0) {$\bar x$};
		\node (15) at (2.5, 0) {$x$};
		\node (16) at (-1.45, 2) {$x_1$};
		\node (17) at (0, 2) {$x_2$};
		\node (18) at (1.45, 2) {$x_3$};
		\node (19) at (1.45, 0) {$x_4$};
		\node (20) at (0, 0) {$x_5$};
		\node (21) at (-1.45, 0) {$x_6$};
		\node (22) at (-3.5, 2.75) {$C_1$};
		\node (23) at (3.25, 2.75) {$C_3$};
		\node (24) at (3.25, -0.5) {$C_2$};
		\node (25) at (-3.5, -0.5) {$C_4$};
		\node [style=black dot] (26) at (-1.125, 1) {};
		\node [style=black dot] (27) at (0.625, 1) {};
		\node [style=black dot] (28) at (2.25, 1.25) {};
		\node [style=black dot] (29) at (-1.75, -0.5) {};
		\node [style=black dot] (30) at (-3, 2) {};
		\node [style=black dot] (31) at (2.75, 2) {};
		\node [style=black dot] (32) at (2.75, -1) {};
		\node [style=black dot] (33) at (-3, -1) {};
		\node [style=black dot] (34) at (1.125, 1) {};
		\node [style=black dot] (35) at (-0.625, 1) {};
		\node [style=black dot] (36) at (2, 1.25) {};
		\node [style=black dot] (37) at (-2.75, 2) {};
		\node [style=black dot] (38) at (-2.75, -1) {};
		\node [style=black dot] (39) at (2.5, -1) {};
		\node [style=black dot] (40) at (2.5, 2) {};
		\node [style=black dot] (41) at (-1.5, -0.5) {};
            \node (42) at (-1.25, 0.75) {$y_1$};
            \node (43) at (-0.5, 1.25) {$y_2$};
            \node (44) at (0.5, 0.75) {$z_1$};
            \node (45) at (1.25, 1.25) {$z_2$};
		\draw [style=T edges] (8) to (0);
		\draw [style=witness edge] (0) to (1);
		\draw [style=T edges] (1) to (9);
		\draw [style=T edges] (1) to (2);
		\draw [style=witness edge] (2) to (3);
		\draw [style=T edges] (3) to (4);
		\draw [style=witness edge] (4) to (5);
		\draw [style=T edges] (5) to (6);
		\draw [style=T edges] (5) to (2);
		\draw [style=T edges] (0) to (3);
		\draw [style=T edges] (7) to (4);
		\draw [style=witness edge] (0) to (26);
		\draw [style=witness edge] (26) to (2);
		\draw [style=witness edge] (3) to (27);
		\draw [style=witness edge] (27) to (5);
		\draw [style=witness edge] (4) to (28);
		\draw [style=witness edge] (1) to (29);
		\draw [style=witness edge] (7) to (31);
		\draw [style=witness edge] (6) to (32);
		\draw [style=witness edge] (9) to (33);
		\draw [style=witness edge] (8) to (30);
		\draw [style=witness edge] (3) to (34);
		\draw [style=witness edge] (34) to (5);
		\draw [style=witness edge] (0) to (35);
		\draw [style=witness edge] (35) to (2);
		\draw [style=witness edge] (8) to (37);
		\draw [style=witness edge] (9) to (38);
		\draw [style=witness edge] (4) to (36);
		\draw [style=witness edge] (7) to (40);
		\draw [style=witness edge] (6) to (39);
		\draw [style=witness edge] (41) to (1);
        \end{tikzpicture}    
    %\caption{An depiction of $H_x$ in the reduction for the case of $d=6$, for variable $x$ in $\mathcal{I}$ that is in clauses $C_1,C_2,C_3,C_4$. The links of $H_x$   are denoted by the black edges}
    %\label{fig:SAT- n-d}
%\end{figure}}
    \caption{Left: Subgraph $H_x$ and clause nodes $C_1,C_2,C_3,C_4$ containing $x$, from the reduction in Theorem~\ref{thm:n-4hard}, for variable $x$ in $\mathcal{I}$. Here the links of $H_x$ are denoted by the black edges. Right: The extension of $\mathcal{I}$ to larger values of $d$ (in this case, $d=6$). It can be seen that the set of links is the same, but the obstructions have a more of nodes.}
    \label{fig:SAT-obstruction}
\end{figure}
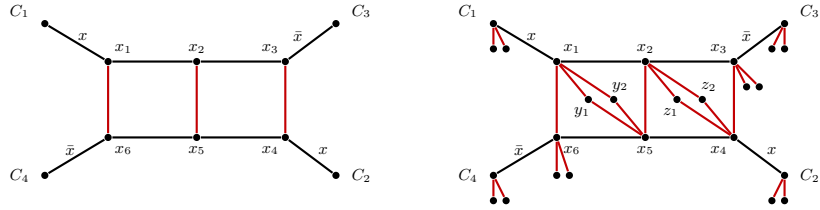
\begin{proposition}
\label{prop:GILIfeasible}
    The graph $G_{\I}$ defined above does not contain neither $K_{1,4}$ nor $K_{2,3}$ as subgraph. Thus, $(G_{\I},4,L_{\I})$ is a valid instance of the $4$-Obstruction Covering problem.
\end{proposition}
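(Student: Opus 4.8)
The plan is to prove Proposition~\ref{prop:GILIfeasible} by directly inspecting the structure of $G_{\I}$. Recall that a $4$-Obstruction Covering instance requires $G$ to contain no $K_{i,j}$ edge-induced subgraph with $i+j>4$; since every node of $G_{\I}$ has degree exactly $3$, the only relevant forbidden subgraphs are $K_{1,4}$ (which would require a node of degree $4$) and $K_{2,3}$ (which would require two nodes with three common neighbours). The absence of $K_{1,4}$ is immediate: $G_{\I}$ is $3$-regular, so no node has four neighbours, and hence no $K_{1,4}$ subgraph can exist. This takes one line.

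The substance of the argument is ruling out $K_{2,3}$, i.e.\ showing that no two nodes $a,b$ of $G_{\I}$ have three common neighbours. First I would observe that if $a$ and $b$ had three common neighbours, then since every node has degree $3$, we would have $N(a)=N(b)$, i.e.\ $a$ and $b$ are \emph{twins} with identical neighbourhoods of size $3$; moreover this forces $ab\notin E$ and the six edges between $\{a,b\}$ and their common neighbourhood to be present. So it suffices to show no two nodes of $G_{\I}$ have the same neighbourhood. I would then do this by a case analysis on the types of the two candidate nodes, of which there are only three: clause nodes, and variable nodes (which split into the two symmetry classes $\{x_1,x_3,x_4,x_6\}$ and $\{x_2,x_5\}$ within each gadget $H_x$). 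The key local facts to extract from the construction are: each clause node $C$ is adjacent only to variable nodes (at most one from each variable gadget appearing in $C$, and to at most three gadgets total since clauses have three literals), while within $H_x$ the adjacencies are $x_1\sim x_2,x_6$ and a clause; $x_2\sim x_1,x_3,x_5$; $x_3\sim x_2,x_4$ and a clause; $x_4\sim x_3,x_5$ and a clause; $x_5\sim x_2,x_4,x_6$; $x_6\sim x_5,x_1$ and a clause.

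Carrying out the cases: two nodes in different gadgets $H_x$, $H_y$ can share at most the clause nodes as common neighbours, and a clause contains at most three literals hence links at most three gadgets with one node each, so two nodes in distinct gadgets cannot have two internal-gadget common neighbours and in fact their neighbourhoods differ because each contains an internal node of its own gadget that the other lacks. Two nodes inside the same $H_x$: one checks all $\binom{6}{2}=15$ pairs, but symmetry ($x_i\leftrightarrow x_{7-i}$ together with positive$\leftrightarrow$negative) cuts this down; the only pairs with two common neighbours turn out to be, e.g., impossible because, say, $x_1$ and $x_3$ share only $x_2$, $x_2$ and $x_6$ share only $x_5$, and no pair shares all three neighbours since each of $x_1,x_3,x_4,x_6$ has a clause neighbour outside $H_x$ while $x_2,x_5$ do not, and among $\{x_2,x_5\}$ their neighbourhoods $\{x_1,x_3,x_5\}\ne\{x_2,x_4,x_6\}$. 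A clause node and a variable node cannot be twins since a clause node's neighbourhood consists entirely of degree-$3$ variable nodes lying in distinct gadgets while a variable node always has at least one neighbour inside its own gadget. Finally two distinct clause nodes $C,C'$: a common neighbour would be a variable node $x_i$ appearing (with a fixed sign) in both $C$ and $C'$; but each literal occurs in the formula exactly twice, so a given variable contributes at most one shared neighbour per sign — to get three shared neighbours we would need three variables each appearing in both $C$ and $C'$, forcing $C=C'$ as multisets of literals, contradiction (and even two shared neighbours already forces $|C\cap C'|\ge 2$, handled the same way). The main obstacle is simply being careful and exhaustive with this bookkeeping; there is no conceptual difficulty, only the need to confirm that the $(2,2)$-occurrence restriction and the $3$-regularity jointly forbid every $K_{2,3}$.
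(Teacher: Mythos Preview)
Your proposal is correct and follows essentially the same route as the paper: both arguments dispatch $K_{1,4}$ immediately from $3$-regularity and then rule out $K_{2,3}$ by a case analysis on node types (clause node versus the two symmetry classes of variable nodes). Your reformulation via ``twins'' (two nodes with identical neighbourhoods) is equivalent to the paper's choice of a degree-$3$ node of the hypothetical $K_{2,3}$ and inspection of its neighbours' common neighbours; the case breakdown is the same in substance. Two small points to tighten: a clause node's neighbours need not lie in \emph{distinct} gadgets (a clause may contain two literals of the same variable), but your conclusion still holds since every variable node has at least one neighbour in $\{x_2,x_5\}$, which no clause node has; and for two distinct clause nodes the argument is simpler than you state, since each variable node $x_i$ with $i\in\{1,3,4,6\}$ is adjacent to exactly one clause node, so distinct clause nodes share no neighbours at all.
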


%\begin{proof}
    
        %First of all, since $G_{\I}$ is a $3$-regular graph, it does not contain $K_{1,4}$ as subgraph. 

        %Assume, for the sake of contradiction, that $F$ is a $K_{2,3}$ subgraph of $G_{\I}$. Let $v\in F$ have degree three in $F$, we will distinguish two cases:

        %\begin{itemize}
        %    \item If $v$ is a clause node in $G_{\I}$, say corresponding to clause $C$, then its neighbors are three variable nodes $x_i,y_j,z_k$ for $i,j,k\in \{1,3,4,6\}$, for some (possibly equal) variables $x,y,z$. If any two such variables are different, then their only common neighbor is $v$, and hence it cannot be part of a $K_{2,3}$. On the other hand, if the three neighbors of $v$ belong to the same subgraph $H_x$, by construction there are always two such variable nodes without common neighbors other than $v$ (either $x_1$ and $x_4$, or $x_3$ and $x_6$), hence again $v$ cannot be part of a $K_{2,3}$
            
            %\item If $v$ is a variable node in some subgraph $H_x$, we consider two cases: if $v=x_1$ (the cases $x_3, x_4$ or $x_6$ are symmetric), its neighbors in $G_{\I}$ are $x_2$, $x_6$ and some clause node, but the only node connected to $x_2$ and $x_6$ is $x_5$, which is not connected to any clause node; hence, $x_1$ cannot be part of a $K_{2,3}$. If $v=x_2$ (the case of $x_5$ is symmetric), its neighbors in $G_{\I}$ are $x_1$, $x_5$ and $x_3$ but they do not have two neighbors in common. Hence, no variable node can be part of a $K_{2,3}$.
        %\end{itemize}
        
    %Thus, $(G_{\I},L_{\I})$ is a valid instance of the $(n-4)$-Obstruction Covering problem.
    %\end{proof}

    In the following two lemmas we show how to translate a feasible assignment of values for instance $\I$ into an optimal solution of $(G_{\I},4,L_{\I})$ and back.  %Lemma~\ref{lem:3sat22ton-4} is proven in Appendix~\ref{sec:3sat22ton-4}
    
\begin{lemma}
\label{lem:n-4to3sat22}
    Let $k$ be the number of variables in the 3-SAT-(2,2) instance $\I$. If $\I$ is satisfiable, then there is a feasible solution to the $4$-Obstruction Covering instance $(G_{\I},4,L_{\I})$ of size $4k$.
\end{lemma}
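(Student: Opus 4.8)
The goal is to show that a satisfying assignment for $\I$ yields a feasible solution to $(G_{\I},4,L_{\I})$ of size exactly $4k$, where $k$ is the number of variables. Since each variable gadget $H_x$ contributes $4$ links in $L$ of the form $\{x_1x_2, x_2x_3, x_4x_5, x_5x_6\}$ and $4$ connector links $\{x_1C_1, x_4C_2, x_3C_3, x_6C_4\}$, the natural budget of $4k$ corresponds to picking, for each variable, exactly $4$ links in its local neighborhood. The plan is: (i) describe the selection $F$ of links driven by the assignment; (ii) argue $|F| = 4k$; (iii) verify $F$ covers every obstruction, i.e., every $K_{1,3}$ (since $G_{\I}$ is $3$-regular, the obstructions with $i+j=4$ are exactly the stars $K_{1,3}$ centered at each node, plus possibly $K_{2,2}$'s, but by Proposition~\ref{prop:GILIfeasible} there are no $K_{2,3}$'s; I would double-check whether $K_{2,2}$ squares occur and confirm from the gadget structure that they do not, so that obstructions $=$ all $K_{1,3}$'s $=$ ``cover every node'').

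\textbf{The selection.} Given a satisfying assignment, for each variable $x$: if $x$ is set to \textbf{true}, include the two ``negative-side'' internal links $\{x_4x_5, x_5x_6\}$ together with the two ``positive-side'' connector links $\{x_1C_1, x_4C_2\}$; if $x$ is \textbf{false}, include $\{x_1x_2, x_2x_3\}$ together with $\{x_3C_3, x_6C_4\}$. (The exact pairing may need adjustment after checking the figure; the principle is that the internal links cover the variable nodes not touched by the chosen connectors, and the connectors we choose are the ones whose clause is satisfied by this literal.) This gives exactly $4$ links per variable, so $|F| = 4k$. Every node $x_i$ of $H_x$ is incident to at least one chosen link: the two internal links cover $\{x_1,x_2,x_3\}$ or $\{x_4,x_5,x_6\}$, and the two connectors cover the remaining three variable nodes on the other side ($x_1,x_4$ or $x_3,x_6$) — here one must check $x_2$ (resp.\ $x_5$) carefully, since $x_2x_5$ is not a link; the internal links must be chosen on the side that contains $x_2$ or $x_5$ whichever is not otherwise covered, so I would set this up so that the two internal links always form a path through the ``middle'' node. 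Each clause node $C$ is covered because the assignment satisfies $C$, hence some literal in $C$ is true, hence the corresponding connector link incident to $C$ was selected.

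\textbf{Main obstacle.} The delicate point is the simultaneous covering of the two ``degree-2-in-$L$'' middle nodes $x_2$ and $x_5$ of each gadget: only two of the four internal links are chosen, and they must be chosen so that whichever of $x_2, x_5$ is not covered by a connector gets covered by an internal link — but neither $x_2$ nor $x_5$ is incident to any connector link at all, so \emph{both} must be covered purely internally, which forces the two chosen internal links to be $\{x_1x_2,x_2x_3\}$ or $\{x_4x_5,x_5x_6\}$ — i.e., one of $x_2, x_5$ remains uncovered by internal links alone. This means the clean ``$4$ links per gadget'' count must be re-examined: likely the intended solution uses a different split (e.g., $\{x_1x_2, x_5x_6\}$ internally plus two connectors, covering $x_1,x_2,x_5,x_6$ internally and $x_3,x_4$ via... no, $x_3,x_4$ have connectors only on certain sides). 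I expect the resolution is that the correct internal pair is $\{x_1x_2, x_4x_5\}$ or similar ``one from each side,'' and that $x_3$ (resp.\ $x_6$) is then covered by its connector $x_3C_3$ (resp.\ $x_6C_4$) — so the chosen connectors are forced to be $\{x_3C_3, x_6C_4\}$ or $\{x_1C_1, x_4C_2\}$ as a matched pair, which is exactly the ``true/false'' dichotomy. Verifying that this forced pattern (a) always has size $4$, (b) covers all six variable nodes, and (c) still lets us pick connectors at satisfied clauses is the heart of the argument; I would lay out a small table of the six nodes $\times$ two cases to make the covering transparent, and then the clause-covering follows immediately from satisfiability.

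\textbf{Conclusion.} Once the per-gadget covering is verified and the clause nodes are handled, $F$ is feasible with $|F| = 4k$, proving the lemma. The converse direction (a feasible solution of size $4k$ gives a satisfying assignment) is the subject of the next lemma and is not needed here.
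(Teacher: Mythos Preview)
Your self-correction in the ``Main obstacle'' paragraph lands you on exactly the paper's selection: for $x$ true take $\{x_1C_1, x_4C_2, x_2x_3, x_5x_6\}$, and for $x$ false take $\{x_3C_3, x_6C_4, x_1x_2, x_4x_5\}$. That is the right construction, and the size count and clause-node coverage are as you say.

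There is, however, a genuine gap in your verification plan. You write that you would ``confirm from the gadget structure that [$K_{2,2}$ squares] do not [occur]'', reducing feasibility to covering every node. This prediction is false: each gadget $H_x$ contains \emph{two} $K_{2,2}$ obstructions, namely the $4$-cycles $x_1x_2x_5x_6$ (via edges $x_1x_2$, $x_2x_5$, $x_5x_6$, $x_6x_1$) and $x_2x_3x_4x_5$ (via $x_2x_3$, $x_3x_4$, $x_4x_5$, $x_5x_2$). These must be covered by a link lying inside them, not merely by covering their vertices. Fortunately your final selection does cover both squares in either case (e.g.\ for $x$ true, $x_5x_6$ lies in the first square and $x_2x_3$ in the second), so the proof still goes through --- but this check is not vacuous and must be carried out explicitly. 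Indeed, the presence of these two squares is precisely what forces the internal links to be the ``one from each side'' pair you eventually identified, rather than the consecutive pair you first wrote down; a pair like $\{x_1x_2, x_2x_3\}$ would cover all six variable nodes (together with the connectors) yet leave the square $x_2x_3x_4x_5$ uncovered since none of $x_2x_3$'s companions in that square is a chosen link --- wait, $x_2x_3$ itself is in that square, so actually that particular pair would cover it, but would miss $x_5$ as a vertex. The point stands: you must separately verify vertex-coverage \emph{and} square-coverage, and the paper's proof does both.
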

\begin{proof}
    Consider a satisfying assignment for instance $\I$. For each variable $x$ in $\I$, if $x$ is set to true (resp. false), then for the (possibly equal) clauses $C_1$ and $C_2$ containing the positive (resp. negative) literals of $x$, we take the links $x_1C_1$ and $x_4C_2$ (resp. $x_3C_3,x_6C_4$), as well as the links $\{x_2x_3,x_5x_6\}$ (resp. $\{x_1x_2, x_4x_5\}$). This way, since the assignment satisfies $\I$, all the obstructions $K_{1,3}$ defined by clause nodes are covered. The $K_{1,3}$ obstructions defined by variable nodes and the $K_{2,2}$ obstructions of each $H_x$ are also covered by the links $\{x_2x_3,x_5x_6\}$ or $\{x_1x_2, x_4x_5\}$ depending on the case. Therefore, we have a feasible solution for the $4$-Obstruction Covering instance $(G_{\I},4,L_{\I})$ of size $4k$. 
\end{proof}
    %Therefore, for every variable $x$, we select $4$ links. Since we take a satisfying assignment for $\I$, every clause is satisfied, and thus every clause node is covered. 
    %\end{proof}
\begin{lemma}\label{lem:3sat22ton-4}
    Let $k$ be the number of variables in the 3-SAT-(2,2) instance $\I$. If there is a feasible solution to the $4$-Obstruction Covering instance $(G_{\I},4,L_{\I})$ of size $4k$, then $\I$ is satisfiable.
\end{lemma}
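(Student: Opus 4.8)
The plan is to prove the converse direction: given a feasible solution $F$ to $(G_{\I},4,L_{\I})$ of size $4k$, recover a satisfying assignment. The first step is a counting argument to show the solution must have a very rigid structure. For each variable $x$, consider the subgraph $H_x$ together with the four links incident to its outer nodes $x_1,x_4$ (positive) and $x_3,x_6$ (negative). The key local observation: inside $H_x$ there are two squares ($K_{2,2}$ obstructions) — namely $x_1x_2x_5x_6$ and $x_2x_3x_4x_5$ — plus the $K_{1,3}$ obstructions centred at the degree-$3$ variable nodes, and a link can only cover obstructions touching its endpoints. I would argue that covering all obstructions internal to or incident to $H_x$ (the squares, and the $K_{1,3}$'s at $x_1,\dots,x_6$) requires at least $4$ links that are either edges of $H_x$ or links incident to $\{x_1,x_3,x_4,x_6\}$, and that these "budgets" for distinct variables are essentially disjoint except that a clause link $x_iC$ is shared between variable $x$ and clause $C$. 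Since the total budget is exactly $4k$, every variable gadget must be covered in the cheapest possible way, which forces the solution restricted to $H_x$ to look exactly like one of the two configurations from Lemma~\ref{lem:n-4to3sat22}: either $\{x_1C_1,x_4C_2,x_2x_3,x_5x_6\}$ (the "true" pattern) or $\{x_3C_3,x_6C_4,x_1x_2,x_4x_5\}$ (the "false" pattern).

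The second step is to verify that these are indeed the only two minimum ways. Here I would enumerate: to cover both squares of $H_x$ and the $K_{1,3}$'s at all six variable nodes using only $4$ links from the allowed set, one checks by case analysis on which internal edges $\{x_1x_2,x_2x_3,x_4x_5,x_5x_6\}$ are selected. The square $x_1x_2x_5x_6$ must contain a selected link, and among links that can cover it the candidates are $x_1x_2$, $x_5x_6$, or $x_1x_6,x_2x_5$ (but $x_1x_6$ and $x_2x_5$ are not in $L$), so one of $x_1x_2,x_5x_6$ is chosen; symmetrically for the square $x_2x_3x_4x_5$ one of $x_2x_3,x_4x_5$ is chosen. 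Then the degree-$3$ nodes among $\{x_1,x_3,x_4,x_6\}$ that are not yet covered by an internal edge must be covered by their unique external (clause) link, and a short accounting shows exactly $4$ links suffice precisely in the two stated patterns (and any "mixed" choice either fails to cover a square or a node, or exceeds the budget when summed globally). Define the assignment: set $x$ true if $H_x$ is covered by the "true" pattern, false otherwise (breaking ties arbitrarily if somehow both patterns appeared, though the global count forbids overlap).

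The third step is to check this assignment satisfies $\I$. Each clause node $C$ has degree $3$ in $G_{\I}$ (it is adjacent to the three literal-nodes of its three literals), so $C$ together with its neighbours forms a $K_{1,3}$ obstruction, which $F$ must cover; hence $F$ contains at least one link of the form $x_iC$ incident to $C$. But by the structure forced in step one, the link $x_iC$ belongs to $F$ only if $H_x$ was covered by the pattern matching the sign of that occurrence of $x$ in $C$ — i.e. a positive occurrence's link $x_1C$ or $x_4C$ is in $F$ only in the "true" pattern, a negative occurrence's link $x_3C$ or $x_6C$ only in the "false" pattern. Therefore the literal corresponding to that link is satisfied by our assignment, so $C$ is satisfied. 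Since $C$ was arbitrary, $\I$ is satisfiable.

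The main obstacle I anticipate is the global disjointness/exact-count bookkeeping in step one: I need to be careful that the per-gadget lower bound of $4$ truly sums to $4k$ without double counting, given that clause links $x_iC$ are shared between a variable gadget and a clause, and that clause $K_{1,3}$ obstructions impose their own covering demands. The cleanest route is probably to assign each of the $4k$ links of $F$ to a unique variable: a link internal to $H_x$ or of the form $x_iC$ is charged to variable $x$; then show each variable is charged at least $4$, forcing equality everywhere, and in the equality case deriving the two patterns. One must also double-check that in the "true"/"false" pattern the two chosen clause links are exactly the ones covering that variable's own $K_{1,3}$ obstructions at $x_1,x_4$ (resp.\ $x_3,x_6$) — which they are, since those outer nodes have their third edge going to a clause — and that no obstruction is left uncovered globally, which follows because the clause $K_{1,3}$'s are handled by the satisfiability argument of step three.
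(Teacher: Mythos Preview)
Your overall strategy---charge each link in $F$ to the unique variable $x$ whose gadget it touches, deduce that each gadget uses exactly four links, then read off an assignment---matches the paper's. The charging is fine (every link is either inside some $H_x$ or of the form $x_iC$, and these touch exactly one gadget), and the lower bound of four per gadget follows as you say. The gap is in step~2: the claim that ``exactly $4$ links suffice precisely in the two stated patterns'' is false. For example, $\{x_1x_2,\, x_2x_3,\, x_4x_5,\, x_6C_4\}$ covers both squares and all six $K_{1,3}$'s of $H_x$ with four links, as does the all-internal choice $\{x_1x_2,\, x_2x_3,\, x_4x_5,\, x_5x_6\}$. In fact, for \emph{every} choice of internal links satisfying your square constraints, the number of internal links plus the number of clause links needed for the remaining uncovered outer nodes equals $4$; there are seven such local patterns, not two. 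Your fallback ``exceeds the budget when summed globally'' does not rescue the argument either: one variable in a mixed pattern with the rest in true/false patterns can still yield a globally feasible size-$4k$ solution (whenever the clauses containing that variable happen to be covered by other variables). And with a mixed pattern step~3 can fail---e.g.\ with $\{x_1x_2,\, x_2x_3,\, x_5x_6,\, x_4C_2\}$ you set $x$ to false (``not the true pattern''), yet the only clause link is $x_4C_2$, a positive-literal link, so your assignment does not guarantee $C_2$ is satisfied via $x$.

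The paper closes exactly this hole by \emph{modifying} the size-$4k$ solution before extracting the assignment: whenever $H_x$ uses three or four internal links, it replaces one or two of them by the corresponding clause links (e.g.\ swap $x_2x_3$ for $x_3C_3$). This keeps the per-gadget count at four, still covers all obstructions in $H_x$, and can only add coverage at clause nodes, so global feasibility is preserved. After this normalization every gadget is genuinely in the true or false pattern, and then your step~3 goes through verbatim. An alternative repair that avoids modification: verify that in every four-link local pattern all clause links incident to $H_x$ carry the same sign (positive at $x_1,x_4$, negative at $x_3,x_6$), and assign $x$ that sign (arbitrarily if there are no clause links); then the argument of step~3 works directly.
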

\begin{proof}
    Note, for a given variable $x$ in $\I$, any links taken to cover obstructions in subgraph $H_x$ do not cover obstructions from subgraph $H_y$ for variable $y\neq x$. Hence, the size of any feasible solution $F \subseteq L_{\I}$ to $(G_{\I},L_{\I})$ is equal to $\sum_{x\in \I} : \{\ell \in F | \ell \text{ incident to } v\in V(H_x) \}|$. 
    Observe that every feasible solution must contain at least four links incident to a nodes $V(H_x)$ as we must have at least one link incident to every vertex in $\{x_1,x_2,\ldots,x_6\}$. Therefore in a feasible solution of size $4k$, for each variable $x\in \I$ we have exactly four links with an endpoint in $V(H_x)$. Furthermore for each variable $x$, such a solution has exactly one link incident on each $x_i$ for every $i\in \{1,3,4,6\}$.
   % Since we assume there is a feasible solution of size $4k$, there are exactly four links incident to every node in $H_x$ for each variable $x\in \I$ . 

    We will prove we can assume the feasible solution of size $4k$ to $(G_{\I},L_{\I})$ has the following structure: for each subgraph $H_x$, either we have the links between $x_1$ and $x_4$ and their corresponding adjacent clause nodes plus $x_2x_3$ and $x_5x_6$, or the links between $x_3$ and $x_6$ and their corresponding adjacent clause nodes plus $x_1x_2$ and $x_4x_5$. 
    
    Suppose that for some variable $x\in \I$, we cover the obstructions of $H_x$ with a different set of links.

   % Case 1, if we choose two links 
    \textbf{Case 1, we choose at least three links from $H_x$:} Let us assume without loss of generality that these links are $\{x_1x_2,x_2x_3,x_4x_5\}$. Since the solution is feasible, it must contain either the link between $x_6$ and its adjacent clause node or the link $x_5x_6$. In the first case, we can replace $x_2x_3$ with the link joining $x_3$ and its adjacent clause node. In the second case, we can replace both $x_2x_3$ and $x_5x_6$ with the corresponding links joining them to their adjacent clause nodes, and in both cases every obstruction in $H_x$ is still covered.

     \textbf{Case 2, case 1 does not apply:} We need to cover both of the $K_{2,2}$ obstructions plus $x_2$ and $x_5$. In order to do it with two links only, one of them must be incident to $x_2$ and the other to $x_5$. Without loss of generality we assume our solution contains $x_1x_2$ and $x_4x_5$ and it does not contain $x_2x_3$ and $x_5x_6$. Therefore as we need to cover $x_3$ and $x_6$ we must have the links from $x_3$ and $x_6$ to the clause nodes in our solution. Thus our solution forms the desired configuration.
     
     %In order to do it with two links only, one of them must be incident to $x_2$ and the other to $x_5$ (since $x_2x_5$ is not a link), and they must belong to different $K_{2,2}$ obstructions, which is not possible since the uncovered variable nodes are not in opposite positions in $H_x$.
    % \begin{figure}[t!]
    %     \centering(a)
    %     \resizebox{0.85\textwidth}{!}{\input{Figures/reduction_case1}\hspace{-40pt}\input{Figures/reduction_case4}} \\ (b)\vspace{5pt}\resizebox{0.85\textwidth}{!}
    %     {\input{Figures/reduction_case3}\hspace{-40pt}\input{Figures/reduction_case4}}
    %     \caption{Cases arising in the proof of Lemma~\ref{lem:3sat22ton-4} when at least three links from $H_x$ are chosen. In (a), all four links are chosen, we replace $x_2x_3$ and $x_5x_6$ with $x_3C_3$ and $x_6C_4$ respectively. In (b), the links $\{x_1x_2,x_2x_3,x_4x_5\}$ are chosen, we replace $x_2x_3$ with $x_3C_3$.}}
    %     % If links connecting to clauses are not adjacent to opposite nodes in $H_x$, then it is not possible to cover every obstruction with two more links. Below: If two links in $H_x$ cover the same $K_{2,2}$ obstruction, we can replace one of such links and obtain an equivalent solution.}
    %     \label{fig:reduction_cases}
    % \end{figure}
    
     Notice that, since the solution for $(G_{\I},4,L_{\I})$ is feasible, every clause node is covered. It is possible to find a satisfying assignment to the variables of $\I$ in the following way: For each variable $x$, if the links incident to $x_1$ and $x_4$ and their adjacent clause nodes are taken, then we set $x$ to be true. If the links $x_3$ and $x_6$ and their adjacent clause nodes are taken instead, then we set $x$ to be false.

    To see that this assignment satisfies $\I$, suppose for the sake of contradiction that some clause $C = (a\lor b\lor c)$ is not satisfied. Then, none of the links between $H_a,H_b,H_c$ and $C$ are taken, and thus $C$ is uncovered. This means that the solution is not covering every obstruction in $(G_{\I},L_{\I})$, which is a contradiction.

\end{proof}
We can now proceed with the proof of Theorem~\ref{thm:n-4hard}.

\begin{proof}[Proof of Theorem~\ref{thm:n-4hard}.] 
    Consider an instance $\I$ of 3-SAT-(2,2) defined by $k$ variables. We construct the $(n-4)$-Obstruction Covering instance $(G_{\I},4,L_{\I})$ as described before. Thanks to Lemmas~\ref{lem:n-4to3sat22} and \ref{lem:3sat22ton-4}, $\I$ is satisfiable if and only if $(G_{\I},4,L_{\I})$ admits a feasible solution of size $4k$. Since 3-SAT-(2,2) is NP-hard (Lemma~\ref{lem:3sat22hard}), the $4$-Obstruction Covering problem is NP-hard as well. 
\end{proof}

It is possible to extend the previous hardness results to further values of $d$, as the following theorem states (see Figure~\ref{fig:SAT-obstruction}~(Right)). The proof of this theorem can be found in Appendix~\ref{sec:n-dobscovhard}

\begin{theorem}\label{thm:n-dobscovhard}
    For any given $d\ge 4$, $d\in O(n^{c})$ for any fixed constant $c>0$, the $d$-Obstruction Covering problem is NP-hard.
\end{theorem}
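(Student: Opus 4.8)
The plan is to reduce from the already-established hardness of $4$-Obstruction Covering (Theorem~\ref{thm:n-4hard}), rather than re-deriving everything from 3-SAT-(2,2). Concretely, I want an approximation-preserving reduction that transforms a $4$-Obstruction Covering instance $(G_{\I},4,L_{\I})$ into a $d$-Obstruction Covering instance for arbitrary $d \ge 4$ with $d = O(n^c)$, so that the NP-hardness (and in fact APX-hardness) transfers. The idea, suggested by Figure~\ref{fig:SAT-obstruction}(Right), is to keep exactly the same link set, but to ``pad'' the graph so that the obstructions which used to be copies of $K_{1,3}$ and $K_{2,2}$ become copies of $K_{i,j}$ with $i+j=d$ while the combinatorial structure of which links cover which obstruction is unchanged.

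First I would describe the padding gadget. For each obstruction $K_{i,j}$ of the original instance (with $i+j=4$), I add $d-4$ new ``filler'' nodes and join them completely to one side of the bipartition, so that a $K_{1,3}$ becomes a $K_{d-3,3}$ and a $K_{2,2}$ becomes a $K_{d-2,2}$; alternatively, and more uniformly, I attach a fixed set of $d-4$ universal-to-a-side dummy nodes per gadget. The filler edges are \emph{not} placed in $L$ (equivalently, they are given cost $\infty$), so they can never be used to cover anything; thus a link $\ell\in L$ covers the enlarged obstruction if and only if it covered the original one. The key checks are: (i) the enlarged graph $G'$ still contains no forbidden subgraph $K_{i,j}$ with $i+j>d$ — here I use that in $G_{\I}$ the degrees and co-degrees are small (it is $3$-regular with no three vertices sharing two neighbours, exactly as in Proposition~\ref{prop:GILIfeasible}), so adding $d-4$ dummies per gadget in a controlled way cannot create a $K_{i,j}$ with $i+j\ge d+1$; (ii) every obstruction $K_{i,j}$ with $i+j=d$ of $G'$ is exactly one of the padded copies of an original obstruction — no ``spurious'' obstructions are created among the new nodes. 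Once (i) and (ii) hold, a set $F\subseteq L$ covers all $d$-obstructions of $G'$ iff it covers all $4$-obstructions of $G_{\I}$, so the optima (and all feasible solutions, cost-for-cost) coincide, giving an approximation-preserving reduction.

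Next I would verify the size bound that makes this polynomial. The original instance has $n_0 = O(k)$ nodes and $O(k)$ obstructions, so $G'$ has $n = n_0 + O(k)\cdot(d-4)$ nodes. We need $d = O(n^c)$, i.e. we must be able to realize a graph on $n$ nodes whose target connectivity deficiency is $d = n - (n-d)$; since $n = \Theta(k\,d)$ we have $d = \Theta(n/k) \le n$, and for any prescribed growth $d = O(n^c)$ we can choose the number of padding nodes accordingly and, if necessary, add isolated-in-$L$ dummy structure to inflate $n$ so that the relation $d=O(n^c)$ is met while keeping the construction polynomial-time. Then Theorem~\ref{thm:from_connectivity_to_obstructions} translates $d$-Obstruction Covering back to $(n-d)$-Node-Connectivity Augmentation, and APX-hardness of MAX-3-SAT-(2,2)~\cite{berman2003approximation} together with the gap preservation through both reductions yields APX-hardness (in particular NP-hardness) of $d$-Obstruction Covering for all $d\ge 4$ with $d=O(n^c)$.

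The main obstacle I anticipate is step (ii): ensuring that padding with $d-4$ dummy nodes per gadget does not accidentally create a new $K_{i,j}$ with $i+j=d$ (which would have to be covered, but might have no available link) or, worse, a forbidden $K_{i,j}$ with $i+j>d$. This requires choosing the attachment pattern of the filler nodes carefully — for instance, making all filler nodes of a gadget pairwise non-adjacent and adjacent only to a single designated side of the single obstruction they pad, and arguing via the bounded co-degree of $G_{\I}$ that two different gadgets' filler sets cannot combine with shared original nodes to form a large complete bipartite subgraph. I expect this to be a somewhat delicate but ultimately routine case analysis on how a hypothetical large $K_{i,j}$ could intersect the original nodes versus the filler nodes, closely mirroring the argument already used for $d=4$ in Proposition~\ref{prop:GILIfeasible}; the details are carried out in Appendix~\ref{sec:n-dobscovhard}.
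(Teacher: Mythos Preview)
Your overall strategy matches the paper's: start from the specific instance $(G_{\I},4,L_{\I})$ built from 3-SAT-(2,2), add dummy vertices so that each $4$-obstruction becomes a $d$-obstruction while keeping $L$ unchanged, and then verify that no forbidden or spurious complete bipartite subgraphs appear. However, the \emph{per-obstruction} padding you describe will not work as stated. In $G_{\I}$ every vertex has degree $3$, so every vertex is simultaneously the centre of a $K_{1,3}$, lies on the ``$3$'' side of three other $K_{1,3}$'s, and (for four of the six $H_x$-vertices) lies in a square. If you attach $d-4$ fresh fillers to one side of each obstruction independently --- in particular your proposed $K_{1,3}\to K_{d-3,3}$, which makes each filler adjacent to all three neighbours of the centre --- then a single original vertex receives new neighbours from several overlapping obstructions and its degree exceeds $d-1$, immediately creating a forbidden $K_{1,j}$ with $j\ge d$. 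Your conditions (i) and (ii) therefore fail before any case analysis can start; this is not a routine detail but the heart of the construction.

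The paper resolves exactly this point by \emph{sharing} the padding across overlapping obstructions, exploiting the concrete structure of $H_x$. Each original vertex gets \emph{exactly one} batch of $d-4$ dummies, and the batch is chosen so that it simultaneously (a) raises that vertex to degree $d-1$, producing its $K_{1,d-1}$ obstruction, and (b) for the pairs $\{x_1,x_5\}$ and $\{x_2,x_4\}$, makes the dummies adjacent to \emph{both} vertices of one side of the incident square, thereby padding that square to a $K_{2,d-2}$. Clause nodes and $x_3,x_6$ (which need no square-padding on a free side) receive degree-$1$ dummies. With this coordinated design one checks directly that every original vertex has degree $d-1$, every dummy has degree $1$ or $2$, and no three vertices have two common neighbours, so the only obstructions in $G_{\I}'$ are the intended $K_{1,d-1}$'s and $K_{2,d-2}$'s. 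Once this is in place, the remainder of your outline --- identical link set, $1$-to-$1$ obstruction correspondence, and the $n=\Theta(kd)$ size bound giving the polynomial-time constraint $d=O(n^c)$ --- is precisely the paper's argument.
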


It is possible to improve the previous results to prove that the $d$-Obstruction Covering problem is indeed APX-hard. Our results follow from the fact that a closely related problem to MAX-3-SAT-(2,2), known as MAX-3-SAT-4, is APX-hard~\cite{berman2003approximation}. A detailed proof can be found in Appendix~\ref{sec:obscovapxhard}.

\begin{theorem}\label{cor:n-dobscovapxhard} 
    For any given $d\ge 4$, $d \in O(n^{c})$ for any fixed constant $c>0$, the $d$-Obstruction Covering problem is APX-hard.
\end{theorem}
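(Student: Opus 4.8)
The plan is to strengthen Theorems~\ref{thm:n-4hard} and~\ref{thm:n-dobscovcovhard} from NP-hardness to APX-hardness by replacing the reduction from (exact) 3-SAT-(2,2) with an approximation-preserving reduction from MAX-3-SAT-4, which is known to be APX-hard~\cite{berman2003approximation}. The point is that the reduction constructing $(G_{\I}, 4, L_{\I})$ from $\I$ already has the right shape: each variable gadget $H_x$ needs exactly $4$ links in any feasible solution, for a baseline of $4k$ links, and the "extra" links beyond this baseline correspond precisely to clauses that are left unsatisfied. I would first verify that the MAX-3-SAT-4 instances (where each variable appears in exactly $4$ clauses) can, via the same argument as in Lemma~\ref{lem:3sat22hard} and the padding trick used there, be brought into the (2,2) form (each variable twice positive, twice negative) while preserving approximation hardness; call this instance $\I$ with $k$ variables and $m$ clauses. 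The construction of $(G_{\I},4,L_{\I})$ is then exactly as in Theorem~\ref{thm:n-4hard}.

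The key step is a quantitative version of Lemmas~\ref{lem:n-4to3sat22} and~\ref{lem:3sat22ton-4}. First I would show: given any assignment $\sigma$ to $\I$ satisfying all but $t$ clauses, one can build a feasible solution of $(G_{\I},4,L_{\I})$ of size at most $4k + t$. Indeed, follow the proof of Lemma~\ref{lem:n-4to3sat22} to get the $4k$ links from the variable gadgets set consistently with $\sigma$; this covers every obstruction except possibly the $K_{1,3}$ obstruction at a clause node $C$ that $\sigma$ fails to satisfy, and each such $C$ can be covered by adding one extra link incident to $C$ (any of its three gadget links). Conversely, I would show that from a feasible solution $F$ of size $4k + s$ one can extract an assignment satisfying all but at most $s$ clauses: by the structural normalization argument in Lemma~\ref{lem:3sat22ton-4}, we may assume each gadget $H_x$ either is in "true configuration" or "false configuration" using its mandatory $4$ links, plus possibly some surplus links; since $|F| = 4k + s$, at most $s$ gadgets carry surplus, and — more to the point — the induced true/false assignment leaves a clause $C$ unsatisfied only if none of the three canonical gadget-to-$C$ links is selected, which forces $C$'s obstruction to be covered by a "surplus" link, and there are at most $s$ of those. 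A little care is needed because a single surplus link could be charged to only one clause, but since each clause node has degree $3$ and its obstruction is a single $K_{1,3}$, covering it needs a dedicated link, so the charging is one-to-one and the bound "at most $s$ unsatisfied clauses" holds.

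Combining the two directions: $\OPT_{\text{OC}}(G_{\I}) = 4k + (m - \mathrm{maxsat}(\I))$, where $\mathrm{maxsat}(\I)$ is the maximum number of simultaneously satisfiable clauses. Since in MAX-3-SAT-4-type instances $m = \Theta(k)$ (each variable in exactly $4$ clauses, each clause of size $3$, so $m = 4k/3$), and since a random assignment satisfies a constant fraction (at least $7/8$) of the clauses so $\mathrm{maxsat}(\I) = \Theta(m) = \Theta(k)$, we get $\OPT_{\text{OC}}(G_{\I}) = \Theta(k)$ as well. Hence a gap of the form "$\mathrm{maxsat} \ge (1-\eps)m$ vs.\ $\mathrm{maxsat} \le (1-\eps')m$" translates into a gap "$\OPT_{\text{OC}} \le 4k + \eps m$ vs.\ $\OPT_{\text{OC}} \ge 4k + \eps' m$", and because $4k$, $m$, and the hardness gap are all $\Theta(k)$, this is a genuine multiplicative gap bounded away from $1$ for $\OPT_{\text{OC}}$. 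This gives APX-hardness of $4$-Obstruction Covering, i.e.\ Theorem~\ref{cor:n-dobscovapxhard} for $d=4$. For general $d \ge 4$ with $d = O(n^c)$, I would apply the approximation-preserving reduction from $(n-d)$- to $(n-(d+1))$-node connectivity augmentation already used to prove Theorem~\ref{thm:n-dobscovhard} (equivalently, the extension of the gadget depicted in Figure~\ref{fig:SAT-obstruction}~(Right) that enlarges the obstructions while keeping the link set fixed); since that reduction is approximation-preserving and polynomial for $d = O(n^c)$, the APX-hardness propagates to all such $d$, completing the proof.

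\textbf{Main obstacle.} The delicate point is the converse charging argument — making sure that the "$s$ surplus links $\Rightarrow$ at most $s$ unsatisfied clauses" bound survives the structural normalization of Lemma~\ref{lem:3sat22ton-4}, in particular that the normalization steps (which swap links within a gadget) never destroy surplus needed to cover a clause, and that no surplus link is double-counted across clauses. Equivalently, one must check that covering the clause nodes is "independent" of the variable-gadget obstructions in the same way feasibility was shown to decompose per-variable in Lemma~\ref{lem:3sat22ton-4}, now tracking the surplus exactly rather than just its existence.
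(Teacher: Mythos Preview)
Your proposal is correct and follows essentially the same route as the paper: first show that MAX-3-SAT-(2,2) is APX-hard (the paper does this as a separate lemma via the same padding reduction from MAX-3-SAT-4 you describe), then use the gadget construction $(G_{\I},4,L_{\I})$ together with the forward bound ``assignment missing $t$ clauses $\Rightarrow$ solution of size $4k+t$'' and a converse bound to translate the gap, and finally invoke the approximation-preserving lift to general $d$ exactly as in Theorem~\ref{thm:n-dobscovhard}.

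The one place where your write-up diverges from the paper is the converse step. You propose to \emph{normalize} each $H_x$ into a standard four-link core plus surplus and then charge each unsatisfied clause to a surplus link. As stated this does not quite go through: there are feasible configurations with $a_x=5$ links on $H_x$ (e.g.\ $\{x_1C_1,\,x_3C_3,\,x_2x_3,\,x_4x_5,\,x_5x_6\}$) that contain neither standard core as a subset, and normalizing by swapping out a clause link can uncover a clause elsewhere. The paper sidesteps this: it does \emph{not} normalize, but sets $x$ by majority of its clause links and uses the two structural facts you implicitly need anyway, namely that a gadget with clause links on both sides uses $\ge 5$ links, and one with all four clause links uses $\ge 6$. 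These give $\min(p_x,n_x)\le a_x-4$ for every gadget, so the total number of ``wrong-side'' clause links is at most the total surplus $s$; since every unsatisfied clause must be covered by some wrong-side link and each such link touches exactly one clause, at most $s$ clauses are unsatisfied. This is precisely the charging you want, obtained without normalization. With this fix your clean equality $\OPT_{\text{OC}}(G_\I)=4k+(m-\mathrm{maxsat}(\I))$ holds, and the gap translation (and hence the whole argument) goes through; the paper phrases the same computation as a PTAS-to-PTAS contradiction rather than an explicit equality, but the content is identical.
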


\bibliographystyle{plainurl}% the mandatory bibstyle
% \bibliography{refs}
\bibliography{refsNOFIRSTNAMES}

\newpage
\begin{appendix}
    \section{ Previous hardness results for Vertex-Connectivity Augmentation}
\label{sec:appendix_prevhard}

Kortsarz et al.~\cite{KortsarzKL2004} prove the following hardness result for $1$-node connectivity augmentation.

\begin{theorem}[\cite{KortsarzKL2004}] There exists $\varepsilon_0>0$ fixed, such that it is NP-hard to approximate $1$-Node Connectivity Augmentation within a factor better than $1+\varepsilon_0$. 
\end{theorem}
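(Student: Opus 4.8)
The plan is to exhibit a gap-preserving reduction to $1$-Node Connectivity Augmentation from an APX-hard problem with \emph{bounded occurrences}, for concreteness Minimum Vertex Cover on cubic graphs (one could equally start from MAX-3-SAT-(2,2), as used elsewhere in this paper). Bounded occurrence is the whole point: it guarantees that the optimum of the constructed $1$-NCA instance is $\Theta$ of its size, so that the fixed additive overhead introduced by auxiliary gadgets does not dilute the multiplicative gap.

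Concretely, from a cubic graph $H=(V_H,E_H)$ — for which $\mathrm{vc}(H)=\Theta(|V_H|)=\Theta(|E_H|)$ and distinguishing $\mathrm{vc}(H)\le\tau$ from $\mathrm{vc}(H)\ge(1+\varepsilon_1)\tau$ is NP-hard for some fixed $\varepsilon_1>0$ — I would build a connected graph $G$ with a link set $L$ and costs $c$ as follows. Start from a fixed $2$-connected ``spine'' $B$ (say a long cycle) equipped with $\Theta(|V_H|+|E_H|)$ attachment vertices. For each variable-vertex $v\in V_H$ attach a small \emph{vertex gadget} that hangs a pendant part off $B$ behind a cut vertex, together with one \emph{activation link} $\ell_v$ of cost $1$ whose inclusion simultaneously (i) eliminates that cut vertex and (ii) turns a designated node $p_v$ of the gadget into a node that is $2$-connected into $B$. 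For each edge $e=\{u,v\}\in E_H$ attach an \emph{edge gadget}: a pendant vertex $d_e$ whose only way to stop being separated from $B$ is to use one of two provided cost-$0$ links $d_ep_u$ or $d_ep_v$, and the gadget is arranged so that such a link repairs the gadget \emph{only when} its endpoint $p_u$ (resp.\ $p_v$) has already been ``activated'', i.e.\ made $2$-connected into $B$ by $\ell_u$ (resp.\ $\ell_v$). Finally, all remaining cut vertices of $G$ are made reparable for free by a fixed set of cost-$0$ links. The effect is that every feasible augmentation $F$ decomposes gadget-by-gadget into a set $S_F=\{v:\ell_v\in F\}$ (each contributing cost $1$) together with cost-$0$ links, and feasibility of the edge gadgets forces $S_F$ to be a vertex cover of $H$; conversely any vertex cover $S$ yields a feasible augmentation of cost $|S|+O(1)$ (activate $S$, route each edge gadget through an activated endpoint, add the fixed cost-$0$ links). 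Hence $\OPT_{1\text{-NCA}}(G,L,c)=\mathrm{vc}(H)+O(1)$, and since $\mathrm{vc}(H)=\Theta(|V_H|)$ while the additive term is an absolute constant, the gap $1+\varepsilon_1$ for cubic vertex cover survives as a gap $1+\varepsilon_0$ for $1$-NCA, for a suitable fixed $\varepsilon_0>0$.

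The hard part will be the gadget design — specifically, making the edge gadget ``tight'': the only way to stop $d_e$ from witnessing a cut vertex must be to connect it to an \emph{already activated} endpoint, with no shortcut that repairs the gadget while avoiding a payment (through the spine, through an inactive vertex gadget, or through a chain of several edge gadgets). One must also verify that the gadgets are genuinely independent, so that augmenting choices inside one gadget neither create nor destroy cut vertices in another, and check that the construction keeps $G$ connected throughout. If one additionally wants the hardness for the \emph{unit-cost} version of $1$-NCA, the cost-$0$ links should be simulated by short parallel-path sub-gadgets while the activation links are kept as single links, after which one checks that the optimum changes only by a $1+o(1)$ factor, so the gap is preserved.
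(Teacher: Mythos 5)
This statement is not proved in the paper at all: it is quoted verbatim from Kortsarz, Krauthgamer and Lee~\cite{KortsarzKL2004}, and the appendix only records that their proof goes via a reduction from a bounded-degree version of 3-Dimensional Matching. So the honest comparison is between your sketch and that cited reduction, and your sketch, while starting from a reasonable bounded-occurrence source problem (cubic Vertex Cover, which plays the same role as bounded-degree 3DM in controlling the additive overhead), has a genuine gap: the entire argument rests on an edge gadget with the property that a cost-$0$ link $d_ep_u$ ``repairs the gadget only when $p_u$ has already been activated,'' and this is exactly the part you defer. In $1$-node connectivity augmentation this conditionality is the whole difficulty, not a detail. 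Since costs are nonnegative, every cost-$0$ link may be assumed to be bought, i.e.\ it is equivalent to an edge of the base graph; so whatever the free links $d_ep_u$ and $d_ep_v$ repair, they repair unconditionally. Concretely, with both free links present the path $p_u\,d_e\,p_v$ together with the two attachment paths to the spine closes a cycle through both vertex gadgets, which places $p_u$, $d_e$ and $p_v$ in one block with the spine without ever buying $\ell_u$ or $\ell_v$; more generally, in the block-cut-tree view a link between two pendant parts covers the entire tree path between them, so links out of $d_e$ tend to biconnect the \emph{inactive} vertex gadgets for free. Trying to block this by giving the vertex gadget an internal pendant reachable only via $\ell_v$ makes $\ell_v$ a necessary link and collapses the optimum to ``one unit per variable,'' destroying the correspondence with vertex cover. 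So the reduction as described does not yet establish the lower bound $\OPT_{1\text{-NCA}}\ge \mathrm{vc}(H)+O(1)$, which is the direction that carries the hardness.

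The same concern applies, amplified, to your final remark about the unweighted case: cost-$0$ links cannot simply be ``simulated by short parallel-path sub-gadgets,'' because adding such paths to the base graph changes the cut structure (it may already biconnect the parts you intended to keep separated), and the paper's use of~\cite{KortsarzKL2004} specifically needs hardness with unit link weights. None of this says a vertex-cover-based reduction is impossible — it may well be doable with a more careful gadget, and it would then be a genuinely different route from the 3DM-based proof of~\cite{KortsarzKL2004} — but as written the decisive step (a gadget whose free repairs are provably conditional on a paid activation, with no shortcuts through the spine, through inactive gadgets, or through chains of edge gadgets) is missing rather than merely unpolished.
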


This result is proved via a reduction from a bounded degree version of 3D Matching. In order to obtain analogous hardness results for $k$-Node Connectivity Augmentation, it is argued that we can iteratively add $k-1$ nodes to the construction which are connected to all the other nodes in the graph (this way, the graph becomes $k$ connected and the cuts that must be covered are exactly the same). Thus, in general, if we add $t$ dummy nodes to the instance, which has $n$ nodes, we get a graph with $n'=n+t$ nodes which is $(t+1)$-node connected.

In order for this procedure to be a polynomial-time reduction, it is required that the value of $t$ is polynomially bounded with respect to $n$, let us say at most $n^c$ for some constant $c>0$, which implies that the new instance we obtain is a $(n'-n'^{1/c})$-connected graph and we wish to augment its connectivity by one. 

In particular, it is not possible for this procedure to reach a $(n'-d)$-connected instance where $d$ is constant or logarithmic, and to have polynomial running time at the same time. This is also consistent with the fact that $(n-3)$-node connectivity augmentation is polynomial-time solvable~\cite{BERCZI2012565}.
    \section{ Dynamic Program for independent ladders and hexagons}
\label{sec:DP}

In this section we prove Theorem~\ref{alg:DP} which we restate:
\DP*

%We are given a graph $C=(V,E)$ with edge weights, $w : E \rightarrow \mathbb{R}_{\infty}$, that is either a ladder or a hexagon as defined in Section~\ref{sec:3/2approx}. 
%We are also  given  a node requirement $h: V \rightarrow \{0,1\}$. Our goal is to find a minimum cost $F\subseteq E$, such that, for every square $S$ in $C$, $F\cap E[S] \neq \emptyset$, and for every $v\in V$ with $h(v)=1$, $v\in V(F)$.
Assume $u_1,\ldots,u_{r+1},v_1,\ldots,v_{r+1}$ is the given consistent labelling of $C$ given by Theorem~\ref{thm:ChainDecomposition}.
If $C$ has at most $10$ nodes, then we can enumerate over the possible edge covers to find the minimum cost covering $F$. Notice that a hexagon has 7 nodes, thus, we can assume that $C$ is a ladder of length at least $6$.

%By definition, there is some $k\in \mathbb{Z}_{\geq 0}$, by which we can enumerate the vertices of $C$ as $u_1,\dots, u_{r+1}, v_1, \dots, v_{r+1}$, such that $S_i = v_iu_iu_{i+1}v_{i+1}$ is a square for all $i=1, \dots, r$. 

We denote by $D$ the possible edges in $C$ with both endpoints in $\{v_1, u_1,$ $v_{r+1},u_{r+1}\}$. We will compute a different solution for every possible $2^{|D|}$ combination of edges from $D$ in $F$. If we have $e = xy \in D\cap F$, then we set $h(x)=h(y)=0$, as these endpoints are covered by $e\in F$. Furthermore, if $v_1u_1 \in D\cap F$, then we set $w(v_1u_1) = 0$ to ensure we have $v_1u_1$ in the solution and we avoid double counting. The case where $v_{r+1}u_{r+1} \in F\cap D$, is handled similarly. Lastly, if the edges of $D$ form a square $B$, then we require that $|F\cap D| \geq 1$. Observe that since $r\ge5$, other than the squares $S_i$, the only possible square of $C$ is $B$.

Before we define our dynamic program, in order to simplify notation, for each $S_i$, $i=1,\dots,k$, we define $t_i \coloneqq v_iv_{i+1}, b_i \coloneqq u_iu_{i+1}, L_i \coloneqq v_iu_i,$ and, $L_{i+1} \coloneqq v_{i+1}u_{i+1}$. We are now ready to define our dynamic program.

% See Figure~\ref{fig:DP instance}.

% \input{./Figures/DPinstance}
We compute from $S_1$ to $S_i$ the minimum number of links needed to cover obstructions $S_1, \dots, S_{i}$ and satisfy the requirements $h(v_j)$ and $h(u_j)$ for $j<i$, given whether $t_{i},b_i,L_{i+1}$ are in $F$ or not. We let  $\X_{L_i}, \X_{t_i}, \X_{b_i} \in \{0,1\}$ be the indicator variable for whether our solution includes $L_i,t_i,$ and $b_i$ respectively or not. We store our solution in the table $A[i, \X_{L_{i+1}}, \X_{t_{i}}, \X_{b_{i}}]$. 

For $i=1$, $A[1, \X_{L_2}, \X_{t_1}, \X_{b_1}]$ is equal to 
\begin{align*}
    \min_{ \X_{L_{1}}, \X_{t_1}, \X_{b_1}\in \{0,1\}} 
    \left\{w(L_1)\X_{L_1} + w(L_2)\X_{L_{2}} + w(t_1)\X_{t_1} + w(b_1)\X_{b_1}\right\}
\end{align*}
% \begin{align*}
%     \min_{ \X_{L_{2}}, \X_{t_1}, \X_{b_1}\in \{0,1\}} 
%     \left\{w(L_1)\X_{L_1} + w(L_2)\X_{L_{2}} + w(t_1)\X_{t_1} + w(b_1)\X_{b_1}\right\}
% \end{align*}
Subject to the requirements that (1) $\X_{t_{1}} \lor \X_{L_{2}}=1$ if $h(v_{1})=1$ to ensure that the coverage requirement of $v_{1}$ is met, (2) $\X_{b_1}\lor \X_{L_{2}} =1$ if  $h(u_{1})=1$ to ensure that the coverage requirement of $u_{1}$ is met, and (3) $\X_{t_{1}}\lor \X_{b_{1}} \lor \X_{L_{2}} = 1 $ or $u_1v_1\in F\cap D$ to ensure that $S_{1}$ is covered.
Since we are considering all the possible cases by enumeration it should be clear that $A[1, \X_{L_2}, \X_{t_1}, \X_{b_1}]$ is computed correctly.

For $i\in \{1,\dots, r-1\}$ we set
 $A[i+1, \X_{L_{i+2}}, \X_{t_{i+1}}, \X_{b_{i+1}}]$ to 
\begin{align*}
    \min_{\X_{L_{i+1}}, \X_{t_i}, \X_{b_i}\in \{0,1\}}
    \{&w(L_{i+1})\X_{L_{i+1}} + w(t_{i+1})\X_{t_{i+1}} + w(b_{i+1})\X_{b_{i+1}} \\ +& A[i, \X_{L_{i+1}}, \X_{t_i}, \X_{b_i}] \}
\end{align*}
Subject to the following requirements: (1) $ \X_{t_i}\lor \X_{t_{i+1}} \lor \X_{L_{i+1}}= 1$ if  $h(v_{i+1})=1$ to ensure that the coverage requirement of $v_{i+1}$ is met, (2) similarly, $\X_{b_i}\lor \X_{b_{i+1}} \lor \X_{L_{i+1}} =1$ if  $h(u_{i+1})=1$ to ensure that the coverage requirement of $u_{i+1}$ is met, and (3) $\X_{t_{i+1}}\lor \X_{b_{i+1}} \lor \X_{L_{i+1}}\lor \X_{L_{i+2}} = 1 $ to ensure that $S_{i+1}$ is covered.

Now by induction we can show that $A[i, \X_{L_{i+1}}, \X_{t_{i}}, \X_{b_{i}}]$ is computed correctly. We know that this is true for $i=1$. Assume we have computed all the cells $A[i-1, \X_{L_{i+2}}, \X_{t_{i+1}}, \X_{b_{i+1}}]$ for some $i$. For each configuration of $\X_{L_{i+1}}, \X_{t_{i}}, \X_{b_{i}} \in \{0,1\}$, we can apply the inductive hypothesis to $A[i, \X_{L_{i+1}}, \X_{t_i}, \X_{b_i}]$, to find the minimum cost number of links needed to cover obstructions $S_1,\dots, S_i$, and satisfy requirements $h(v_j)$ and $h(u_j)$ for $j<i$. We take the minimum of the configurations of $\X_{L_{i+1}}, \X_{t_{i}}, \X_{b_{i}} \in \{0,1\}$ and the claim holds.

To find the minimum solution, we find 
$$\min_{\X_{L_{r+1}},\X_{t_{r}}, \X_{b_r} \in \{0,1\}} A[r,  \X_{L_{r+1}}, \X_{t_{r}}, \X_{b_r}],$$ subject to the requirements
\begin{itemize}
    \item  $\X_{t_r} \lor \X_{L_{r+1}} =1$ if $h(v_{r+1})=1$,
    \item $\X_{b_r} \lor \X_{L_{r+1}} = 1$ if $h(u_{r+1})=1$, and
    \item $x_{L_{r+1}}=0$ if $u_{r+1}v_{r+1}\notin F\cap D$.
\end{itemize}

    \section{ Subset-Edge Cover}
\label{sec:SEdgeCover}
\SEdgeCover*
\begin{proof}
    If $N=V(G)$, then we can solve the problem by finding the minimum edge cover of $G$, which is known to be solvable in polynomial time.
    Now assume $N\subsetneq V(G)$. In this case we construct a graph $G'$, that contains $G$ and has an additional vertex $v$. Now for every $u\in V(G)\setminus N$, we add an edge of weight zero from $v$ to $u$ in $G'$. We compute the minimum weight edge cover $E'$ of $G'$. Assume $E_v$ is the set of all edges incident to $v$ in $G'$. We argue that $E'':=E\setminus E_v$ is the minimum weight $N$-edge cover of $G$. For this purpose it is sufficient to observe that if $E_1$ is an $N$-edge cover for $G$ then $E_1\cup E_v$ is an edge cover of $G'$. Similarly if $E_2$ is an edge cover for $G'$ then $E_2\setminus E_v$ is an edge cover of $G$. Therefore the cost of the minimum $N$-edge cover of $G$ is indeed equal to the cost of the minimum edge cover of $G$. Therefore our solution, $E'':=E\setminus E_v$ is a minimum $N$-edge cover.
\end{proof}

    \section{ Proof of Theorem~\ref{thm:from_connectivity_to_obstructions}}
\label{sec:from_connectivity_to_obstructions}
In this section we prove Theorem~\ref{thm:from_connectivity_to_obstructions}. We remark that~\cite{BERCZI2012565} shows this for $d=4$. Here we generalize this for other values of $d$ (see~\cite{BercziV10} for a similar discussion). To prove the Theorem we first prove the following lemma.
\begin{lemma}\label{lem:CutsInTheComplement}
     Let $G=(V,E)$ be a connected graph. $G$ has a vertex cut $V'$ of size $r $ if and only if $\Bar{G}$ contains a subgraph $K_{i,|V|-r-i}$ for some $1\le i < |V|-|V|'$.

\end{lemma}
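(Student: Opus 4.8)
The statement to prove is Lemma~\ref{lem:CutsInTheComplement}: for a connected graph $G=(V,E)$, the complement $\bar G$ contains a $K_{i,|V|-r-i}$ subgraph (edge-induced, for some $1\le i<|V|-r$) if and only if $G$ has a vertex cut of size $r$. The plan is to prove both directions by making the correspondence between a bipartition witnessing the complete bipartite subgraph in $\bar G$ and a separating set in $G$ completely explicit. Throughout, set $n=|V|$ and recall that a vertex cut of $G$ is a set $V'\subseteq V$ such that $G-V'$ is disconnected (and we only care about such $V'$ with $|V'|=r<n$, so that $G-V'$ has at least two vertices).

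\textbf{Forward direction ($\Rightarrow$).} Suppose $V'$ is a vertex cut of size $r$. Then $G-V'$ has at least two connected components; let $A$ be the vertex set of one component and $B$ the union of the vertex sets of all the others, so $A,B\ne\emptyset$, $A\cup B=V\setminus V'$, $A\cap B=\emptyset$, and $|A|+|B|=n-r$. Since $A$ and $B$ lie in different components of $G-V'$, there is no edge of $G$ between $A$ and $B$; hence in $\bar G$ every pair $\{a,b\}$ with $a\in A$, $b\in B$ is an edge. Taking $i=|A|$, the edge set $\{ab : a\in A, b\in B\}$ is exactly a $K_{i,\,n-r-i}$ edge-induced subgraph of $\bar G$, and $1\le i\le n-r-1$, i.e. $1\le i<n-r=|V|-r$, as required.

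\textbf{Reverse direction ($\Leftarrow$).} Conversely, suppose $\bar G$ has a $K_{i,n-r-i}$ as an edge-induced subgraph, with bipartition classes $A$ and $B$, $|A|=i$, $|B|=n-r-i$, and $1\le i<n-r$. Then $A$ and $B$ are disjoint nonempty subsets of $V$ with $|A|+|B|=n-r$, and every $a$--$b$ pair ($a\in A,b\in B$) is an edge of $\bar G$, hence a non-edge of $G$; so $G$ has no edge between $A$ and $B$. Let $V'=V\setminus(A\cup B)$, so $|V'|=n-(n-r)=r$. In $G-V'$, the only vertices are those of $A\cup B$, and since there is no $G$-edge joining $A$ to $B$, the graph $G-V'$ is disconnected (every component is contained entirely in $A$ or entirely in $B$, and both are nonempty). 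Thus $V'$ is a vertex cut of $G$ of size $r$, completing the equivalence.

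\textbf{Main obstacle.} This lemma is essentially a definitional unpacking, so the only real care needed is bookkeeping: making sure the index range $1\le i<|V|-r$ lines up on both sides (it does, since each side demands precisely that both bipartition classes be nonempty), and being careful that ``edge-induced subgraph'' means we only need the complete bipartite graph on the $i+(n-r-i)$ chosen vertices to appear as a subgraph of $\bar G$ — we do not need $A\cup B$ to induce exactly that graph in $\bar G$, which is good because there may also be edges of $\bar G$ inside $A$ or inside $B$. I would state this explicitly to avoid confusion. The connectivity hypothesis on $G$ is not actually used in either direction as phrased, but it is harmless to keep; if desired one could note it guarantees $\bar G$ is not itself the whole complete bipartite structure trivially, but it plays no role in the proof above.
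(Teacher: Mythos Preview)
Your proof is correct and follows essentially the same approach as the paper: both directions proceed by identifying the bipartition classes of the $K_{i,n-r-i}$ in $\bar G$ with the complement of the cut set $V'$, split according to the components of $G-V'$. Your write-up is in fact more careful than the paper's (which has minor typos), and your remarks on the meaning of ``edge-induced subgraph'' and the unused connectivity hypothesis are accurate.
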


\begin{proof}
    Assume $V'$ is a cut of size $r$ in $G$, then $G[V\setminus V']$ has at least two connected components. Let $C$ be a connected component of $G[V\setminus V']$. Then in $\Bar{G}$, the induced subgraph on $V\setminus V'$ contains a $K_{|C_1|,V-|V'|-|C_1|}$.
    Now assume $\Bar{G}$ contains a subgraph $K_{i,|V|-r-i}$ for some $1\le i < |V|-|V|'$. Let $(X,Y)$ be the bipartition of this subgraph. We argue that $V':=V\setminus (X\cup Y)$ is a cut of size $r$. Observe that by deleting $V'$, there will be no edges between $X$ and $Y$. Furthermore $|V'|=|V|-(|X|+|Y|)=|V|-(|V|-r)=r$. Thus $V'$ is clearly a cut of size $r$.
\end{proof}

% Note that if $S\subseteq V$ is a cut of size $n-d$ of $G$ then $\Bar{G[V\setminus S]}$ contains a $K_{i,d-i}$ for some $1\le i <d$.\\

\begin{proof}[Proof of Theorem~\ref{thm:from_connectivity_to_obstructions}]
Given an instance $I$ of $(n-d)$-node-connectivity Augmentation Problem on graph $G=(V,E)$, with link set $L$ and cost $c:E\rightarrow \mathbb{R}$ for the edges we obtain the instance of $I':=(\Bar{G},d,L,c)$ of $d$-Obstruction Covering problem in which $\Bar{G}$ is the complement of $G$. Observe that, using Lemma~\ref{lem:CutsInTheComplement}, $\Bar{G}$ contains no forbidden subgraphs.
Hence, again by Lemma~\ref{lem:CutsInTheComplement}, $G\cup L'$ is $(n-d+1)$-vertex-connected if and only if $L'$ has a link in every obstruction of $\bar G$.%\todo{G' is not defined. Maybe you mean $\bar{G}$ ?\\D:corrected in red}
This implies that for every subset $L'$ of $L$,  $L'$ is a feasible solution for $I$ if and only if $L'$ is a feasible solution for $I'$.  

%as a link in every obstruction of $G'$.

\end{proof}

\section{Proof of Theorem~\ref{thm:ChainDecomposition}}
\label{sec:ChainDecomposition}
In order to prove Theorem~\ref{thm:ChainDecomposition}, we need the following technical lemmas. 
\begin{lemma}\label{lem:HexagonExtension}
    Let $H$ and $S$ be hexagon and square subgraphs of $G$ respectively, such that $V(S)\cap V(H)\neq\emptyset$, and $V(S) \not \subseteq V(H)$. Then, in polynomial time, we can find a ladder $C$ in $G$ of length three, and a consistent labelling of $C$, such that $V(C) = V(H) \cup V(S)$.
\end{lemma}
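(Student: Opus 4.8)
The goal is to combine a hexagon $H$ and a square $S$ that overlap in at least one vertex but do not have $V(S)\subseteq V(H)$ into a single ladder of length three on the vertex set $V(H)\cup V(S)$. Since $S$ has four vertices and at least one of them lies outside $V(H)$ while $H$ has seven vertices, we have $|V(H)\cup V(S)|\in\{8,9,10\}$; a ladder of length three has exactly eight vertices, so a first sub-step is to argue that in fact $|V(S)\cap V(H)|=3$, i.e. exactly one vertex of $S$ is new. This should follow from the structure of the hexagon: recall that in a $4$-Obstruction Covering instance the graph contains no $K_{2,3}$, so if two vertices of $S$ were outside $H$ the two vertices of $S$ inside $H$ would have two common neighbours inside $H$ plus two common neighbours outside, forcing either a forbidden subgraph or contradicting $G$ having no $K_{1,4}$/$K_{2,3}$. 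I would make this counting explicit using the degree-$3$ bound on corners and the fact that a hexagon (Figure~\ref{fig:hexagon}(b)) has a unique vertex $v_0$ of degree $3$ with the six outer vertices forming a $6$-cycle.

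\textbf{Key steps.} First I would fix the canonical labelling of the hexagon $H$ as in Figure~\ref{fig:hexagon}(b): outer $6$-cycle $v_1v_2v_3v_4v_5v_6v_1$ together with the centre $v_0$ adjacent to $v_1$, $v_3$, $v_5$ (after relabelling). Second, I would enumerate, up to the automorphisms of the hexagon, the possible ways a square $S$ can meet $H$ in exactly three vertices with its fourth vertex $w\notin V(H)$: the three vertices of $S$ inside $H$ must induce (within $S$) a path of length two, so they are three vertices of $H$ with a common ``middle'' vertex that has two neighbours among the other two — and the two outer vertices of that path must be non-adjacent in $H$ (so that $wv$-edges can complete the square without creating a $K_{2,3}$ in $G$). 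Going through the cases, the middle vertex of the path in $S$ is forced to be $v_0$ (the only vertex with three neighbours that are pairwise non-adjacent), so $S$ has vertex set $\{v_i,v_0,v_j,w\}$ with $\{i,j\}\subseteq\{1,3,5\}$, say $\{i,j\}=\{1,3\}$, and edges $v_1v_0$, $v_0v_3$, $v_3w$, $wv_1$. Third, with this concrete picture I would just write down the ladder: the eight vertices $V(H)\cup\{w\}$ can be laid out as two rows so that the three consecutive squares are $\{v_5,v_0,v_1,v_6\}$ (square of $H$: edges $v_5v_0,v_0v_1,v_1v_6,v_6v_5$), $\{v_0,v_3,w,v_1\}$ ($=S$), and $\{v_3,v_4,v_5,v_0\}$ (another square of $H$: edges $v_0v_3,v_3v_4,v_4v_5,v_5v_0$), and check that consecutive squares share exactly an edge, that the resulting graph has maximum degree $3$, and that this is a valid consistent labelling $u_1,\dots,u_4,v_1,\dots,v_4$. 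Finally, the whole procedure is clearly polynomial: enumerate the $O(1)$ automorphism classes, match $S$ against each, and output the labelled ladder.

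\textbf{Main obstacle.} The genuinely delicate part is the case analysis establishing that $|V(S)\cap V(H)|=3$ and that the three shared vertices are forced to be $\{v_0,v_i,v_j\}$ with $v_i,v_j$ non-adjacent in $H$ and $v_0$ the centre — in other words, ruling out all the configurations where $S$ would either create a forbidden $K_{2,3}$ in $G$ (because two of $S$'s vertices would acquire three common neighbours) or fail to close up into a ladder. I expect this to require carefully using (i) that corners have degree exactly $3$ in $G$ (so no vertex of $H$ touched by $S$ can have extra hexagon-internal edges beyond those drawn), and (ii) the no-$K_{2,3}$ property of $G$ applied to the two ``inner'' endpoints of $S$'s path. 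Once that structural claim is nailed down, writing the ladder and verifying the consistent labelling is routine, and the polynomial-time claim is immediate. I would also double-check the edge count: $H$ contributes $9$ edges, $S$ contributes $4$ but shares $2$ with $H$ (the two edges $v_0v_i$, $v_0v_j$), giving $11$ edges on $8$ vertices — exactly the edge count of a length-three ladder ($4$ rungs $+$ $2\times 3$ side edges $= 10$; one should recheck this against the exact ladder definition, since a length-$3$ ladder has $3$ squares hence $10$ edges, which flags that one of $S$'s ``new'' edges must coincide with a hexagon edge as well, pointing again to the precise configuration above being the only viable one).
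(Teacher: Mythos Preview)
Your overall strategy---show $|V(S)\cap V(H)|=3$ via the degree bound, pin down which three hexagon vertices are shared, then exhibit the ladder---is exactly the paper's approach. But the central structural claim is inverted, and this breaks the rest of the argument.

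You assert that the middle vertex of the path $S\cap H$ is forced to be the centre $v_0$, because it is ``the only vertex with three neighbours that are pairwise non-adjacent''. That uniqueness is false: each of $v_1,v_3,v_5$ also has a pairwise non-adjacent neighbourhood (e.g.\ $N(v_1)=\{v_0,v_2,v_6\}$). Worse, $v_0$ is actually \emph{impossible} as the middle vertex. The two outer vertices of the path in $S$ are adjacent to the new vertex $w\notin V(H)$, so they are corners of $H$; since $G$ has maximum degree $3$, corners must have degree $2$ inside $H$, i.e.\ lie in $\{v_2,v_4,v_6\}$. But $v_0$'s neighbours are $v_1,v_3,v_5$, each already of degree $3$ in $H$---none of them can be adjacent to $w$. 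The correct picture is the reverse of yours: the middle vertex lies in $\{v_1,v_3,v_5\}$ (say $v_1$), and the two outer vertices are its degree-$2$ neighbours $v_2,v_6$, with $w$ adjacent to $v_2$ and $v_6$.

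This error propagates into your ladder. With your choice $S=\{v_0,v_1,v_3,w\}$ the three squares you list use only the seven vertices $\{v_0,v_1,v_3,v_4,v_5,v_6,w\}$, so $v_2\in V(H)$ is missing and $V(C)\neq V(H)\cup V(S)$. (Your own edge count---$11$ versus the $10$ edges of a length-three ladder---was already signalling that the configuration was off.) With the corrected intersection, say $S=\{v_1,v_2,v_6,w\}$, the ladder is obtained by concatenating the two hexagon squares not containing $v_6$ with $S$: e.g.\ $\{v_4,v_5,v_0,v_3\}$, $\{v_3,v_0,v_1,v_2\}$, $\{v_2,v_1,v_6,w\}$, which is a genuine length-three ladder on all eight vertices. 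Once you fix the case analysis this way, the remainder of your plan (constant-size enumeration for the labelling, polynomial time) goes through exactly as in the paper.
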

\begin{proof}%[Proof of Lemma~\ref{lem:HexagonExtension}]
    % Assume for the sake of contradiction the claim is false. 
    Denote the nodes of $S$ by $uwzv$, and without loss of generality assume $v \in V(S)\cap V(H)$, and $w\in V(S)\backslash V(H)$. 
    If $S\cap H = \{v\}$, then $v$ is a corner of $H$ that is adjacent to two nodes in $H$ and two nodes in $S\backslash H$, so $v$ has degree $4$ which is a contradiction.
    If $S\cap H = \{u,v\}$, then they cannot be adjacent or else they will again have degree $4$.
    So $S\cap H = \{u,v,z\}$ where $u$ and $z$ are corners of $H$. Which implies that $v$ is not a corner in $H$. 
    % Any corner of $H$ is adjacent to two nodes of $H$ that are not corners. So 
    % Note that $v$ must be a corner of $H$, 
    % and is thus adjacent to nodes of $H$ that are not corners. 
    % Therefore, $S$ must contain two corners of $H$, $z$ and $u$, both adjacent to $w$. 
    It is not hard to see that $S$ combined with two of the squares in $H$ gives a ladder $C$ of length three such that $V(C) = V(H)\cup V(S).$ Note that as $|C|$ is constant in size, then by enumeration we can compute a consistent labelling of $C$.  
\end{proof}

\begin{lemma}\label{lem:ChainExtension}
    Let $C$ be a ladder in $G$ with a consistent labelling. Let $S$ be a square in $G$, such that $V(S)\cap V(C) \neq\emptyset$, and $V(S) \not \subseteq V(C)$. Then, in polynomial time, we can find a subgraph $C'\subseteq G$, such that $V(C)\subsetneq V(C')$, where $C'$ is either a ladder with a consistent labelling or a hexagon. 
\end{lemma}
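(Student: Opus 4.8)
The statement to prove is Lemma~\ref{lem:ChainExtension}: given a ladder $C$ with a consistent labelling $v_1,\dots,v_{r+1},u_1,\dots,u_{r+1}$ and a square $S$ with $V(S)\cap V(C)\neq\emptyset$ and $V(S)\not\subseteq V(C)$, we can find in polynomial time a ladder (with consistent labelling) or a hexagon $C'$ with $V(C)\subsetneq V(C')$. The plan is to do a careful case analysis on how $S$ intersects $C$, driven by the degree bound: since $G$ is a $4$-Obstruction Covering instance, $K_{1,4}$ is forbidden, so \emph{every node has degree at most $3$}, and in particular a node of $C$ that is already incident to two edges of $C$ and is a corner can have at most one more neighbour. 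First I would record the elementary structural facts about a ladder: the ``interior'' nodes $v_i,u_i$ for $1<i<r+1$ already have degree $3$ within $C$ (two rung/rail edges plus the rung $v_iu_i$), hence cannot be incident to any edge leaving $C$; only the four endpoints $v_1,u_1,v_{r+1},u_{r+1}$ can be corners, and each of them has exactly one free slot for an edge outside $C$. Also, the only squares contained in a ladder of length $\geq 3$ are the consecutive ones $S_i=\{v_i,v_{i+1},u_{i+1},u_i\}$ (this is exactly the observation used in Appendix~\ref{sec:DP}), so $S$ being a square with $V(S)\not\subseteq V(C)$ means $S$ uses at least one new node.

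Next I would bound $|V(S)\cap V(C)|$. Because interior nodes of $C$ have no free edge, any node of $S\cap V(C)$ that is adjacent (in $S$) to a new node must be a corner of $C$; moreover each corner has only one free slot, so it can be adjacent to \emph{at most one} new node of $S$. Writing $S$ as a $4$-cycle $a\,b\,c\,d\,a$, if two consecutive vertices $a,b$ of this cycle lie in $V(C)$ while $d,c$ (their other $S$-neighbours) are new, both $a$ and $b$ must be corners using their free slot on $d$ and $c$ respectively; if three consecutive vertices $a,b,c$ lie in $V(C)$ and only one new node $d$ is used, then $a$ and $c$ are corners (each with its free slot towards $d$) and $b$ is an interior-or-endpoint node of $C$ with its $S$-edges $ba,bc$ already inside $C$. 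A short argument (mirroring the proof of Lemma~\ref{lem:HexagonExtension}) rules out the case $|V(S)\cap V(C)|=1$: a single shared node $v$ would be adjacent in $S$ to two new nodes, forcing $v$ to have degree $\geq 2$ outside $C$ on top of its $C$-edges, exceeding $3$ unless $v$ is an endpoint of $C$ with exactly one $C$-rung — but an endpoint still has its two rail/rung edges, so degree $\geq 4$, contradiction. Similarly $|V(S)\cap V(C)|=2$ with the two shared nodes \emph{non-adjacent} in $S$ is impossible (each would need two outside-neighbours). So the only surviving configurations are: (i) $|V(S)\cap V(C)|=3$, the three nodes consecutive on $S$; (ii) $|V(S)\cap V(C)|=2$ with the two nodes adjacent on $S$.

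In configuration (i), the shared nodes are $a,b,c$ consecutive on $S$ with $a,c$ corners of $C$ and $b$ a non-corner. The key point is that a non-corner of $C$ that is incident to \emph{two} non-$C$-edges ($ba$ would have to be... wait, $ba$ and $bc$ are $S$-edges but also $C$-edges)—actually here $b$'s two $S$-edges $ba,bc$ are edges of $C$, so $b$ has degree $2$ inside $C$ along these and they must be two rails or a rail and a rung; the analysis shows $b$ is an endpoint of a rung of $C$ and $a,c$ are its two $C$-neighbours, forcing $a,c$ to be endpoints of $C$ (the only corners), i.e. $\{a,b,c\}$ is essentially one ``end rung plus one rail'' of $C$, and attaching the new node $d$ adjacent to both $a$ and $c$ extends $C$ by one square at that end — I would then exhibit the explicit consistent relabelling of $C'=C+d+d'$ where $d$ together with one new node (there is exactly one new node $d$, since $|V(S)\setminus V(C)|=1$), giving a ladder of length $r+1$; a small subtlety is whether the extension could instead close up into a cycle, but the degree bound on the far endpoints of $C$ prevents that, and if both ends of $C$ get extended by the same $S$ we get a hexagon, handled by enumeration on the constant-size part. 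In configuration (ii), $S=a\,b\,c\,d$ with $a,b\in V(C)$ adjacent on $S$ and $c,d$ new; then $a,b$ are corners of $C$ joined by an edge of $C$ (a rail or rung), and the new path $a\,d\,c\,b$ together with the $C$-edge $ab$ forms a new square glued to $C$ along $ab$. If $ab$ is an end-rung ($a=v_1,b=u_1$ say), we get a ladder of length $r+1$ with the obvious relabelling $v_0:=d$ or $u_0:=c$; if $ab$ is a rail or an interior rung the degree bound is violated (interior rung endpoints have degree $3$), so this sub-case cannot occur; and if $ab$ is an end-rung but the new square shares a further node with $C$ we again land in a hexagon, treated by enumerating the constant-sized graph.

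\textbf{Main obstacle.} The bookkeeping is entirely local and finite, so the real work is not a deep argument but a meticulous enumeration: one must verify that in every admissible way $S$ can meet $C$, the degree-$\le 3$ constraint forces exactly one of the ``nice'' outcomes (ladder extended by one square, or a hexagon), and in particular that no configuration produces a graph of maximum degree $4$ or a subgraph that is neither a ladder nor a hexagon. I expect the trickiest point to be cleanly ruling out the ``wrap-around'' possibilities — e.g., $S$ meeting $C$ at \emph{both} of its ends simultaneously, or at one end in a way that would create a $K_{2,3}$ or a short cycle — and, when a hexagon does arise, justifying that the resulting $7$-node graph indeed contains the spanning subgraph of Figure~\ref{fig:hexagon}(b); both are handled by brute-force enumeration over the $O(1)$-size induced subgraph $G[V(C)\cap V(S)\cup V(S)]$ together with the relevant endpoints of $C$, which keeps the algorithm polynomial (linear, even) since the only non-constant object, $C$, is modified only at its ends.
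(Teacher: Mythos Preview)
Your overall approach---case analysis on $|V(S)\cap V(C)|$ driven by the degree-$\le 3$ constraint---is exactly the paper's strategy, and your reductions ruling out $|V(S)\cap V(C)|\in\{1\}$ and the ``two non-adjacent shared nodes'' sub-case are correct. Configuration (ii) (two adjacent shared corners, two new nodes) is also handled correctly and matches the paper's ladder-extension case.

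The genuine error is in configuration (i). You claim the three shared nodes $a,b,c$ form ``one end rung plus one rail'' of $C$ and that attaching $d$ extends the ladder by one square. This is wrong. Since $a$ and $c$ are corners with their unique free slot used on $d$, the $S$-edges $ab,bc$ must already be $C$-edges; hence $b$ is a common $C$-neighbour of two corners $a,c$. For a ladder of length $\ge 2$ (i.e.\ $r\ge 3$ in the paper's indexing $v_1,\dots,v_r,u_1,\dots,u_r$), the only way two corners share a $C$-neighbour is $r=3$ with $\{a,b,c\}=\{u_1,u_2,u_3\}$ (or symmetrically $\{v_1,v_2,v_3\}$): a \emph{rail} path, not a rung-plus-rail. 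The new node $d$ is then adjacent to $u_1$ and $u_3$, and $C\cup S$ is precisely a hexagon, not a longer ladder. (For $r\ge 4$ configuration (i) simply cannot occur, and for $r=2$ the paper just brute-forces $C\cup S$.) So the hexagon arises in your case (i), whereas your case (ii) is the only source of ladder extensions---you have the two outcomes swapped. Once you correct this, your plan coincides with the paper's proof.
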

\begin{proof}%[Proof of Lemma~\ref{lem:ChainExtension}]
    If the length of $C$ is 1  then $C\cup S$ is a ladder of length 2, as $G$ does not contain a  $K_{2,3}$ forbidden subgraph nor a  $K_{1,4}$ forbidden subgraph. In this case, since $|C|$ is constant in size, we can compute a consistent labelling.

    So assume the length of $C$ is at least $ 2$, and hence $r \ge 3$.
    Let $\{u_1, \dots, u_r,$  $v_1, \dots, v_r\}$ be the consistent labelling of $C$.
    Since $V(S) \not \subseteq V(C)$ and the degree of nodes in $G$ is at most $3$, then $S$ contains edges $xy$ and $x'y'$ such that $x$ and $x'$ are distinct corners of $S$ and $y$ and $y'$ are  (not necessarily distinct) vertices in $V(S)\setminus V(C)$. Observe that $x$ and $x'$ must have precisely two neighbours from $V(C)$ as they are degree $3$ in $G$, and are thus corners of $C$.

    Without loss of generality, we assume $x=u_r$. We distinguish the following cases:
    1) If $x'=v_r$, then $v_ru_ryy'$ must be a square in $G$. Therefore $C':=S\cup C$ is a ladder of size $r$. We can label $u_{r+1} \coloneqq y$ and $v_{r+1}\coloneqq y'$.
    2) If $x'=v_1$, then in the square $S$ there must be a path of length at most two in $C$ from $u_r$ to $v_1$ in $C$. However this contradicts the fact that a such path does not exist as $r\ge3$.
    3) If $x'=u_1$, then in the square $S$ there must be a path $P$ of length at most two in $C$ from $u_r$ to $v_1$ in $C$. As $r\ge 3$, in order for such a path to exist we must have $r=3$, and $P=u_1u_2u_3$. Therefore $S=u_1u_2u_3y$, and we have $y=y'$. In this case $S\cup C$ is a Hexagon.  
   % then there must be a node in $V(C)$ that is a common neighbor of $x$ and $u_r$ in $C$. Observe that as $G$ does not contain a $k_{2,3}$
\end{proof}
With Lemmas~\ref{lem:HexagonExtension} and \ref{lem:ChainExtension}, we can prove Theorem~\ref{thm:ChainDecomposition}.
\begin{proof}[Proof of Theorem~\ref{thm:ChainDecomposition}]
    % The proof follows from 
    We find the decomposition by performing a sequence of steps.
    At the $i$-th step, assume we have subgraphs  $R_i, G_1,G_2,\ldots,G_{i-1}$ of $G$ such that:
    \begin{enumerate}
        \item $R_i, G_1,G_2,\ldots,G_{i-1}$ are pairwise node-disjoint
        \item Each $G_j$ is either a ladder with a consistent labelling or a hexagon.
        \item For every square $S$ of $G$ and every $1\le j< i$, if $V(S)\cap V(G_j) \neq \emptyset$ then $V(S) \subseteq V(G_j)$.
        \item $V(R_i) = V \backslash (V(G_1) \cup V(G_2) \ldots \cup V(G_{i-1}))$.
    \end{enumerate}
    Initially, we set $i=1$, and note that $R_1=G$. 
    At a given step $i$,
    if $V(R_i)$ contains only lonely nodes, then $R_i, G_1,G_2,\ldots,G_{i-1}$ are the desired subgraphs. 
    Otherwise we can find a square $S$ such that $V(S)$ and $ V(G_1)\cup V(G_2) \ldots \cup V(G_{i-1})$ are disjoint, by property (3). $S$ itself is a ladder of length one, and it is easy to find a consistent labelling for $S$. 
    Now, we extend $S$ to a maximal ladder or hexagon in $R_i$ by repeatedly applying Lemmas~\ref{lem:HexagonExtension} and \ref{lem:ChainExtension}, as we can enumerate in polynomial time all squares and hexagons in $R_i$, and check if they intersect with the nodes of $S$. In this way, we can find in polynomial time a ladder or a hexagon $G_i$, such that $V(S)\subseteq V(G_i) \subseteq V(R_i)$ and for every square $S'$ of $G$, if $V(S')\cap V(G_i) \neq \emptyset$ then $V(S') \subseteq V(G_i)$. Note that if $G_i$ is a ladder, then Lemmas~\ref{lem:HexagonExtension} and \ref{lem:ChainExtension} also yield a consistent labelling. Furthermore, by construction we keep that $G_1,G_2,...,G_{i}$ are node-disjoint. Now we increase $i$ by one and go to the next step. As the size of $V(R_i)$ is decreasing at each step, the process terminates in polynomial time. At termination, we reach the desired subgraphs.
\end{proof}
    \section{ Missing Proofs of Section~\ref{sec:3/2approx}}
\label{sec:3/2approx-proofs}
\subsection{ Proof of Corollary~\ref{cor:4regularFactorization}}
\label{sec:4regularFactorization}
We first recall Petersen's Theorem
\begin{theorem}[Petersen's Theorem \cite{Pet1981}]\label{thm:Petersen}
    Let $G$ be a $2k$-regular graph, for some integer $k$. Then, $G$ can be decomposed into $k$ $2$-regular spanning subgraphs.
\end{theorem}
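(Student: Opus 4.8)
The plan is to give the classical proof of Petersen's $2$-factor theorem via Eulerian orientations and König's edge-colouring theorem for bipartite graphs. Since a decomposition of $G$ can be obtained by decomposing each connected component separately, we may assume $G$ is connected; as every vertex has even degree $2k$, $G$ has an Eulerian circuit. Traversing this circuit once and orienting each edge in the direction of traversal yields an orientation of $G$ in which every vertex $v$ has in-degree $k$ and out-degree $k$ (each passage of the circuit through $v$ uses one incoming and one outgoing edge).

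Next I would build an auxiliary bipartite graph $H$ with colour classes $\{v^+ : v \in V(G)\}$ and $\{v^- : v \in V(G)\}$, placing an edge $u^+ v^-$ for every arc $(u,v)$ of the oriented $G$ (keeping parallel edges if they occur). By the degree count above, $H$ is $k$-regular and bipartite. A $k$-regular bipartite graph has edge-chromatic number exactly $k$ (König's line-colouring theorem, equivalently iterated application of Hall's theorem to extract perfect matchings), so $E(H)$ partitions into $k$ perfect matchings $M_1, \dots, M_k$.

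Finally I would translate back: each $M_i$, viewed in $G$, is a set of edges in which every vertex $v$ is incident to exactly one edge through $v^+$ (one out-arc) and exactly one through $v^-$ (one in-arc), hence has degree exactly $2$; thus $M_i$ corresponds to a $2$-regular spanning subgraph $G_i$ of $G$. Since the $M_i$ partition $E(H)$ and the correspondence arcs $\leftrightarrow$ edges of $H$ is a bijection, the $G_i$ partition $E(G)$, giving the desired decomposition into $k$ spanning $2$-regular subgraphs.

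The only genuinely delicate point is the back-and-forth bookkeeping between arcs of the Eulerian orientation and edges of $H$: one must check that the bijection is clean in the presence of parallel edges (so that "one $v^+$-edge and one $v^-$-edge in $M_i$" really does mean degree $2$ at $v$ in $G_i$, not a single edge counted twice), and that disconnected or edgeless components are handled as trivial base cases. Everything else — existence of the Eulerian circuit, the degree count, and König's theorem — is standard and can be invoked directly.
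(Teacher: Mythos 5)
Your proof is correct, but note that the paper does not prove this statement at all: it is quoted as a classical result with a citation to Petersen, so there is no in-paper argument to compare against. What you give is the standard textbook proof of Petersen's $2$-factor theorem, and it is complete: reducing to connected components, taking an Eulerian circuit (every degree is even), orienting along it to get in-degree and out-degree $k$ at every vertex, passing to the $k$-regular bipartite auxiliary graph on $\{v^+\}\cup\{v^-\}$, splitting it into $k$ perfect matchings by K\H{o}nig/Hall, and reading each matching back as a spanning subgraph in which every vertex meets exactly one out-arc and one in-arc, hence has degree $2$. The bookkeeping you flag as delicate does go through: the out-arc and in-arc selected at $v$ are distinct edges of $G$ whenever they are not a common loop, and a loop at $v$ corresponds to the single $H$-edge $v^+v^-$, which still contributes degree $2$ at $v$ under the usual convention. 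That last remark is worth keeping, because the graph $G'$ to which the paper actually applies the theorem (in Corollary~\ref{cor:4regularFactorization}) is a multigraph that may contain parallel edges and loops, and your argument covers that setting verbatim; the only additional point needed for the paper's use is that a $2$-factor can be found algorithmically in polynomial time, which the paper handles separately via maximum $2$-matching.
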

We are now ready to prove Corollary~\ref{cor:4regularFactorization}.
\begin{proof}[proof of Corollary~\ref{cor:4regularFactorization}]
    Using Petersen's Theorem $G$ has a $2$-regular spanning subgraph. We can compute one such subgraph $G_1$ in polynomial time by computing a maximum $2$-matching, which is well known to be polynomial time. Now $G_1$ and $G_2:=G\setminus E(G_1)$ are the desired subgraphs.
\end{proof}

\subsection{ Making \texorpdfstring{$G'$}{} 4-regular}
\begin{lemma}
\label{lem:deg4}
    Let $G'=(V',E')$ be the graph constructed in Section~\ref{sec:3/2approx}. If $G'$ is not $4$-regular, then we can add a polynomial number of dummy nodes to $V'$ and zero cost dummy edges to $E'$ such that $G'$ becomes $4$-regular.
\end{lemma}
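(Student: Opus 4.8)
The strategy is to observe that $G'$ already has maximum degree at most $4$, and then to pad it into a $4$-regular multigraph by the standard doubling trick using zero-cost edges. So the real content is just the degree bound, after which the padding is routine.

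\textbf{Step 1: $\Delta(G')\le 4$.} Every dummy node $\bar u$ created during the construction is incident to exactly one edge of $E'$, hence has degree $1$. Now fix a node $g_i$, associated with a subgraph $G_i$ of degree at least $3$. Each edge of $E'$ incident to $g_i$ is created from some link $\ell_j\in E_C$ that satisfies corners of $G_i$: if $\ell_j$ satisfies a single corner of $G_i$ it contributes one edge at $g_i$ (an edge $g_ig_b$ or $g_i\bar u$), while if $\ell_j$ satisfies two corners of $G_i$ it contributes a loop at $g_i$, which adds $2$ to the degree of $g_i$. In either case the contribution to $d_{G'}(g_i)$ equals the number of corners of $G_i$ that are satisfied through $g_i$, and since each corner is satisfied by exactly one link, $d_{G'}(g_i)$ is at most the number of corners of $G_i$. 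A ladder has at most four corners and a hexagon at most three (as noted right after the definition of hexagons/ladders, corners have degree $3$ in a $4$-Obstruction Covering instance), so $d_{G'}(g_i)\le 4$. Hence $\Delta(G')\le 4$.

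\textbf{Step 2: Doubling.} Take a vertex-disjoint copy $G''=(V'',E'')$ of $G'$, in which each $v\in V'$ has a twin $v''\in V''$, and assign cost $0$ to every edge of $E''$. For each $v\in V'$ put $\mathrm{def}(v):=4-d_{G'}(v)$, which is nonnegative by Step 1, and add $\mathrm{def}(v)$ parallel edges of cost $0$ between $v$ and $v''$. Call the resulting graph $\hat G$, and regard it as an augmentation of $G'$ obtained by adding the nodes $V''$ and the zero-cost edges $E''$ together with the new parallel edges.

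\textbf{Step 3: Verification.} Since $G''$ is an isomorphic copy of $G'$, we have $d_{G''}(v'')=d_{G'}(v)$ and hence $\mathrm{def}(v'')=\mathrm{def}(v)$. In $\hat G$, the vertex $v$ keeps its $d_{G'}(v)$ original edges and gains $\mathrm{def}(v)$ new parallel edges, for total degree $4$; the same computation applies to $v''$. Thus $\hat G$ is $4$-regular. The number of added nodes is $|V'|$, and the number of added edges is $|E'|+\sum_{v\in V'}\mathrm{def}(v)\le |E'|+4|V'|$; since the construction of $G'$ produces polynomially many nodes and edges, this is polynomial, and all added edges have cost $0$, as required. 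Finally, $\hat G$ is in general a multigraph (it may contain loops, already present in $G'$, and parallel edges), but Petersen's Theorem~\ref{thm:Petersen}, and hence Corollary~\ref{cor:4regularFactorization}, applies to $2k$-regular multigraphs, so this causes no difficulty in the subsequent use of $\hat G$. The only genuinely non-trivial point is the degree bound of Step 1; the rest is bookkeeping.
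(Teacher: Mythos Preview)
Your proof is correct, but it takes a different route from the paper. Both arguments start by observing (as you do explicitly in Step~1) that $G'$ has maximum degree at most $4$; the paper uses this implicitly. After that, the paper pads each deficient $g_i$ by attaching a single fresh dummy node with enough parallel zero-cost edges to bring its degree to $4$, then fixes up the resulting dummy nodes by adding self-loops (for degree $1$ or $2$ dummies) and finally uses a parity argument to pair up the remaining degree-$3$ dummy nodes with zero-cost edges. Your doubling construction is cleaner in that it avoids the parity case analysis entirely: by mirroring $G'$ and connecting each vertex to its twin with $\mathrm{def}(v)$ parallel edges, both copies are simultaneously completed to degree $4$, and the counting is immediate. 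The price is that you add roughly twice as many dummy nodes, but this is irrelevant for a polynomial bound. Your explicit remark that Corollary~\ref{cor:4regularFactorization} (via Petersen's Theorem) applies to multigraphs is a useful observation that the paper leaves implicit.
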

\label{sec:deg4}
\begin{proof}
    For every $g_i \in V'$ with degree less than $4$ in $G'$, we add a dummy node $\bar g_i$ to $V'$ and  (possibly parallel) zero cost edges $g_i\bar g_i$ to $E'$ so that $g_i$ has degree $4$. Observe that the dummy nodes in $V'\backslash V(g_1 \cup \dots \cup g_k) $ have degree at most $4$. For every dummy node $v\in V'$ with degree $1$ or $2$, we add a self loop of cost zero to $E'$, increasing the degree of $v$ by 2, so that $v$ has degree $3$ or $4$.

    As the sum of degrees of nodes in $G'$ is even, and all nodes in $V'$ have degree $3$ or $4$, we can see that there must be an even number of degree $3$ dummy nodes. Therefore, we can pair up these dummy nodes in an arbitrary manner and add zero cost edges to $E'$ between these pairs so that every node in $G'$ has degree $4$.
\end{proof}

\subsection{ Proof of Lemma~\ref{lem:reducedegrees}}
\label{sec:reducedegrees}
    Let $G_1',\dots, G_k'$ be the subgraphs of $G''$ found by Theorem~\ref{thm:ChainDecomposition}. 
    \begin{enumerate}
        \item We consider cases defined by the edges of $H$  that are incident to $G_i'$:
        \begin{itemize}
            \item $H$ has a loop on $G_i$. Then there is a link $f \in E[G_i']$ that is the unique link in $F_1$ with an endpoint in $V(G_i')$. For each endpoint of $f$ that is a corner, by the construction of $G''$ we create dummy nodes of these corners and change the incident links in $\delta(G_i')$ with links incident to these dummy nodes instead. Therefore, $G_i'$ will have degree $1$ or $2$ in $G''$.
            \item $H$ has two distinct (possibly parallel) edges incident to $G_i'$, denoted $f_1$ and $f_2$. Recall that we can observe that each corner is satisfied by exactly one link $\ell_i$. Note that for every $G_i'\in \{G_1, \dots,G_k\}$ that has degree $3$ or $4$, we can see that that in $G''$, $G_i'$ is incident to $4-|\delta(G_i')|\in \{0,1\}$ dummy edges using this observation.
            % Observation~\ref{obs:SatisfyingLink}. 
            Therefore, if the degree of $G_i'$ in $G$ is 3, then $H$ must contain at least one edge whose label is incident to $G_i'$. Similarly, if the degree of $G_i'$ in $G$ is $4$, then both edges incident to $G_i'$ in $H$ have labels that are incident to $G_i'$ in $G$. In either case, we reduce the degree of $G_i'$ to $2$.
        \end{itemize}
        Therefore, when we create $G''$ by either removing edges in $\delta(G_i')\cap F_1$, or creating dummy nodes so that an edge in $\delta(G_i')$ is not incident to that dummy node, we reduce the degree of $G_i'$ to $1$ or $2$.

        \item Let $opt''$ denote the optimal solution to $(G'',4,L'',c'')$. The set of obstructions of $G''$ is a (not necessarily strict) subset of the obstructions of $G$, so edges of $opt$ (or corresponding edges created by replacing nodes with dummy nodes) define a feasible solution to $(G'',4, L'',c'')$. Therefore, $c(opt'') \leq c(opt)$.
    
        \item Given $F$, we can find a partial solution $F_2$ to $(G,4,L,c)$ by taking the edges of $F$ that are in both $E''$ and $E$, and the edges of $F\cap E''$ that have a label in $E$. Clearly we have $c(F_2) = c''(F)$.
        The only obstructions that are uncovered by $F_2$ are a subset of $N$, which are covered by $F_1$. Therefore, we have a solution for $(G,L,c)$ of cost $c(F_2) + c(F_1) = c''(F) + c(F_1).$ 
    \end{enumerate}

\subsection{ Proof of Lemma~\ref{lem:degree12edgecover}}
\label{sec:degree12edgecover}
To prove Lemma~\ref{lem:degree12edgecover}, we prove a slightly more general statement. We are given a $4$-Obstruction Covering instance $(G,4,L,c)$ with subgraphs $G_1,\dots, G_k$ found by Theorem~\ref{thm:ChainDecomposition}, and show that we can find in polynomial time a partial solution $F$ that covers every degree $1$ or $2$ $G_i\in \{G_1,\dots, G_k\}$ and every degree $3$ node for minimum cost $c(F)$.  

To see this, we will first show that there is a polynomial time reduction to an instance of the $N$-Edge Cover problem $G'=(V',E')$ with edge costs $c'$ and optimal solution $F'$, such that $c'(F')=c(F)$. 

First, we extend the definition of $c$ from $L$ to $E$ by setting $c(e)=\infty$ for all $e\in E\backslash L$. 

To construct the $N$-Edge Cover instance $G'=(V',E')$ with edge costs $c'$, we will replace each degree 1 or 2 $G_i\in \{G_1,\dots, G_k\}$ with a certain subgraph with certain edge costs, which we will call a degree $1$ gadget or degree $2$ gadget. We will add all degree $3$ nodes to $N$, and a subset of the nodes of the gadgets to $N$ and show that the optimal $N$-edge cover $F'$ has cost $c'(F')=c(F)$.

Initially, $G'$ is equal to the subgraph induced on $G$ by the nodes that are not in any degree 1 or 2 $G_i\in \{G_1,\dots, G_k\}$. We then consider degree 1 $G_i\in \{G_1,\dots, G_k\}$ with corner $v$. We apply Theorem~\ref{alg:DP} twice to the subgraph induced on $V(G_i)$, subject to the following requirements: 1) $v$ is covered by a link in $E[G_i]$, and 2) $v$ is not covered by a link in $E[G_i]$.

\begin{itemize}
    \item For the case where we require that $v$ is covered by a link in $E[G_i]$, we define requirement function $h:V(G_i) \rightarrow \{0,1\}$ by $h(v)=1$, and for $u\in V(G_i)\backslash\{v\}$, $h(u)=1$ if $u$ has degree $3$, and $h(u)=0$ otherwise. We then apply Theorem~\ref{alg:DP} to the subgraph induced on $V(G_i)$, with edge weights $w(e)=c(e)$ for $e\in E[G_i]$ that do not have $v$ as an endpoint and $w(e)=\infty$ for $e\in E[G_i]$ that have $v$ as an endpoint, and requirement function $h$. We denote the cost of this by $c_1^i$. 

    \item For the case where we require that $v$ is not covered by a link in $E(G_i)$, we define requirement function $h':V(G_i) \rightarrow \{0,1\}$ by $h'(v)=0$, and for $u\in V(G_i)\backslash\{v\}$, $h'(u)=1$ if $u$ has degree $3$, and $h'(u)=0$ otherwise. We then apply Theorem~\ref{alg:DP} to the subgraph induced on $V(G_i)$, with edge weights $w'(e)=c(e)$ for $e\in E[G_i]$, and requirement function $h'$. We denote the cost of this by $c_0^i$.
\end{itemize}

We then add the nodes $v$, $v_1$, and $v_0$ to $V'$, and add the edges $v_0v_1$ and $v_1v$ to $E'$, as well as the edge in $\delta(G_i)$ to $E'$ (as this is a single edge incident to $v$). We give these edges costs $c'(v)=c(v)$, $c'(v_0v_1) = c_0^i$ and $c'(v_1v) = c_1^i$(see Figure~\ref{fig:degree1}~(a) and (b)).We add $v$ and $v_1$ to $N$.  We call this a degree $1$ gadget, with costs $c_0^i$ and $c_1^i$. \input{./Figures/degree1} We repeat this process of finding degree 1 gadgets for each degree 1 $G_i\in \{G_1,\dots, G_k\}$.

Now consider degree 2 $G_i\in \{G_1,\dots, G_k\}$, with corners $v_1$ and $v_2$. We apply Theorem~\ref{alg:DP} four times to the subgraph induced on $V(G_i)$, each time subject to one of the following requirements: 1) both $v_1$ and $v_2$ are covered by links in $E[G_i]$; 2) $v_1$ is covered by links in $E[G_i]$ but $v_2$ is not; 3) $v_2$ is covered by links in $E[G_i]$ but $v_1$ is not, and; 4) neither $v_1$ nor $v_2$ are covered by links in $E[G_i]$.

For each $v\in V(G_i)$, we let $h(v) = 1$ if $v$ is degree $3$ and $0$ otherwise. We apply Theorem~\ref{alg:DP} to the subgraph induced on $V(G_i)$ in the following ways: 
\begin{enumerate}
    \item For the case that neither $v_1$ nor $v_2$ are covered by links in $E[G_i]$, we define requirement function $h_0 : V(G_i) \rightarrow \{0,1\}$ by $h_{0}(v_1)= h_0(v_2)=0$, and for $u \in V(G_i)\backslash\{v_1,v_2\}$, $h_{0}(u) = 1$ if $u$ is degree $3$, and $h_{0}(u)=0$ otherwise.  We define edge costs $w_0 : E[G_i] \rightarrow \mathbb{R}_{\geq 0}$, with  $w_{0}(e)=c(e)$ for all $e\in E[G_i]$ that are not incident to $v_1$ or $v_2$, and $w_0(e)=\infty$ for $e\in E[G_i]$ that are incident to $v_1$ and $v_2$. We apply Theorem~\ref{alg:DP} to the subgraph induced on $V(G_i)$ with edge costs $w_{0}$, and requirement function $h_{0}$, and denote the resulting cost by $c_{0}^i$.
    
    \item For the case that both $v_1$ and $v_2$ are covered by links in $E[G_i]$, we define requirement function $h_{12}: V(G_i) \rightarrow \{0, 1\}$ by $h_{12}(v_1)=h_{12}(v_2)=1$, and for $u \in V(G_i)\backslash\{v_1,v_2\}$, $h_{12}(u) = 1$ if $u$ is degree $3$, and $h_{12}(u)=0$ otherwise. We define edge costs $w_{12}: E[G_i] \rightarrow \mathbb{R}_{\geq 0}$, with  $w_{12}(e)=c(e)$ for all $e\in E[G_i]$. We apply Theorem~\ref{alg:DP} to the subgraph induced on $V(G_i)$ with edge costs $w_{12}$, and requirement function $h_{12}$, and denote the resulting cost by $c_{12}^i$.
    
    \item For the case that $v_1$ is covered by links in $E[G_i]$ but $v_2$ is not, we define requirement function $h_1 : V(G_i) \rightarrow \{0,1\}$ by $h_{1}(v_1)=1$, and $h_{1}(v_2)=0$, and for $u \in V(G_i)\backslash\{v_1,v_2\}$, $h_{1}(u) = 1$ if $u$ is degree $3$, and $h_{1}(u)=0$ otherwise.  We define edge costs $w_1: E[G_i] \rightarrow \mathbb{R}_{\geq 0}$, with  $w_{1}(e)=c(e)$ for all $e\in E[G_i]$ that are not incident to $v_2$, and $w_1(e)=\infty$ for $e\in E[G_i]$ that are incident to $v_2$. We apply Theorem~\ref{alg:DP} to the subgraph induced on $V(G_i)$ with edge costs $w_{1}$, and requirement function $h_{1}$, and denote the resulting cost by $c_{1}^i$.
    
    \item For the case that $v_2$ is covered by links in $E[G_i]$ but $v_1$ is not, we define requirement function $h_2 : V(G_i) \rightarrow \{0,1\}$ by $h_{2}(v_1)=0$, and $h_{2}(v_2)=1$, and for $u \in V(G_i)\backslash\{v_1,v_2\}$, $h_{2}(u) = 1$ if $u$ is degree $3$, and $h_{2}(u)=0$ otherwise.  We define edge costs $w_2: E[G_i] \rightarrow \mathbb{R}_{\geq 0}$, with  $w_{2}(e)=c(e)$ for all $e\in E[G_i]$ that are not incident to $v_1$, and $w_2(e)=\infty$ for $e\in E[G_i]$ that are incident to $v_1$. We apply Theorem~\ref{alg:DP} to the subgraph induced on $V(G_i)$ with edge costs $w_{2}$, and requirement function $h_{2}$, and denote the resulting cost by $c_{2}^i$.
\end{enumerate}

We now add to $V'$ the nodes $v_1$ and $v_2$, and add the edges of $\delta(G_i)$ to $E'$, with costs $c'(v_1)= c(v_1)$ and $c'(v_2)=c(v_2)$. We also add nodes $u_1$ and $u_0$ to $V'$, and add edges $u_0u_1,$ $u_1v_1,$ $u_1v_2,$ $v_1v_2$ to $E'$ with costs $c'(u_0u_1) = c_0^i$, $c'(u_1v_1) = c_1^i$, $c'(u_1v_2) =c_2^i$, and, $c'(v_1v_2) = c_{1,2}^i - \min\{c_0^i,c_1^i,c_2^i\}$. (See Figure~\ref{fig:degree1}.(c) and (d)). 

Note that it is possible that $c'(v_1v_2)$ is negative, but as we will see, this will not cause any problems. We call this subgraph of $G'$ a degree 2 gadget, with costs $c_0^i$, $c_1^i$, $c_2^i$, and $c_{12}^i$ and add $u_1,v_1,$ and $v_2$ to $N$. Lastly, we add every degree 3 node of $V'$ to $N$. 

We now wish to prove the following lemma
\begin{lemma}
\label{lem:degree12reduction}
    For the $4$-Obstruction Covering instance $(G,4,L,c)$, there is a set of links $F\subseteq L$, that covers every degree 1 or 2 $\{G_1,\dots, G_k\}$ and every degree 3 with minimum cost $c^*$ if and only if for the corresponding $N$-Edge Cover instance, $(G',c')$, described above has an optimal solution $F' \subseteq E'$ with cost $c^*$.
\end{lemma}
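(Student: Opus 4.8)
The strategy is to prove the two directions of the biconditional by exhibiting cost-preserving maps between feasible solutions of the $N$-Edge Cover instance $(G',c')$ and sets $F\subseteq L$ that cover all degree-3 nodes and all obstructions inside degree-1 or degree-2 subgraphs $G_i$. The essential point is that each gadget replacing a degree-1 or degree-2 ladder/hexagon $G_i$ was designed so that its local edge-cover behaviour precisely encodes the four (or two) possible ``interaction patterns'' of $F$ with the corners of $G_i$, together with the cost (computed via Theorem~\ref{alg:DP}) of optimally covering the internal obstructions of $G_i$ consistently with that pattern. I will first record this correspondence precisely, then verify cost equality in both directions.

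\textbf{From $F$ to $F'$.} Given $F\subseteq L$ covering all degree-3 nodes and all internal obstructions of degree-$\le 2$ subgraphs, decompose $F = F_{\mathrm{out}} \cup \bigcup_i F_i$, where $F_{\mathrm{out}}$ consists of the links not incident to any degree-$\le 2$ subgraph, and $F_i = F\cap E[G_i]$ together with the (at most two) links of $\delta(G_i)$ in $F$. The links of $F_{\mathrm{out}}$ and the $\delta(G_i)$-links carry over verbatim to $E'$ (recall the $\delta(G_i)$-edges are literally present in $G'$, and we set $c'$ equal to $c$ on them). For a degree-1 gadget with corner $v$: if some link of $\delta(G_i)$ incident to $v$ lies in $F$, then $v$ is covered, and $F_i\cap E[G_i]$ covers all internal obstructions but need not cover $v$, so its cost is at least $c_0^i = c'(v_0v_1)$; replace $F_i\cap E[G_i]$ by the edge $v_0v_1$ and add $v_1v$ (needed to cover $v_1\in N$) — but then $v$ might still need covering by $e\in\delta(G_i)$, which is already in $F$. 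If no $\delta(G_i)$-link is in $F$, then $v$ must be covered internally, so $\operatorname{cost}(F_i\cap E[G_i])\ge c_1^i$, and we take $v_1v$ instead. In either case the replacement edges form a valid $N$-edge cover of that gadget at cost $\le \operatorname{cost}(F_i)$, and since $c_0^i, c_1^i$ are the \emph{minimum} such costs by Theorem~\ref{alg:DP}, one can equally run this in reverse. The degree-2 gadget is analogous but with four cases indexed by which of $v_1,v_2$ are covered internally; the edge $v_1v_2$ with cost $c_{1,2}^i - \min\{c_0^i,c_1^i,c_2^i\}$ is included exactly when both corners are covered internally, and together with one of $u_0u_1, u_1v_1, u_1v_2$ (chosen to cover $u_1\in N$) it realizes cost $c_{1,2}^i$; checking that every $N$-edge cover of the gadget has cost $\ge$ the $c^i_\bullet$ matching its coverage pattern at $v_1,v_2$ is the key verification, and it uses that $u_1$ must be covered so at least one of those three edges is always chosen.

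\textbf{From $F'$ to $F$ and the main obstacle.} Conversely, given an optimal $N$-edge cover $F'$ of $G'$, restrict to each gadget, read off which corners $v_1,v_2$ it covers via edges internal to the gadget (i.e. via $u_1v_1, u_1v_2$, or $v_1v_2$ versus via $\delta(G_i)$), invoke Theorem~\ref{alg:DP} to get an actual link set inside $G_i$ of the corresponding cost $c^i_\bullet$ realizing that pattern, and take $F'\cap(\delta(G_i)\cup F_{\mathrm{out}})$ as the rest. I expect the main obstacle to be the degree-2 gadget bookkeeping, specifically handling the possibly-negative cost $c'(v_1v_2) = c_{1,2}^i - \min\{c_0^i,c_1^i,c_2^i\}$: one must argue that in any minimum $N$-edge cover, including the edge $v_1v_2$ is never ``free extra coverage'' that would let one cheat — i.e. that whenever $v_1v_2\in F'$, $F'$ restricted to the gadget still costs at least $c_{1,2}^i$, and whenever $v_1v_2\notin F'$, $F'$ restricted to the gadget costs at least the relevant one of $c_0^i,c_1^i,c_2^i$ according to how $v_1,v_2$ are covered. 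This amounts to a short case analysis over the $2^{|\,\text{gadget edges}\,|}$ possibilities, using that $N\cap V(\text{gadget}) = \{u_1,v_1,v_2\}$ forces the presence of at least one $u_1$-edge, and that $c_{1,2}^i \le c_j^i + (\text{const})$ relations among the DP values (which hold because requiring fewer corners covered can only decrease the DP optimum, so $c_0^i, c_1^i, c_2^i \le c_{1,2}^i$). Once this monotonicity among $c_0^i, c_1^i, c_2^i, c_{1,2}^i$ is established, cost equality in both directions follows, and combining over all gadgets and with $F_{\mathrm{out}}$ gives the claimed equality $c^* = c'(F')$, which is exactly Lemma~\ref{lem:degree12reduction}; Lemma~\ref{lem:degree12edgecover} then follows since $N$-Edge Cover is polynomial-time solvable by Lemma~\ref{lem:SEdgeCover}.
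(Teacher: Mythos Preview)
Your overall approach—bidirectional cost-preserving maps via a gadget-by-gadget case analysis—is exactly the paper's. But there is a real gap in your treatment of the degree-$2$ gadget: the monotonicity $c_0^i, c_1^i, c_2^i \le c_{12}^i$ that you invoke is false. The values $c_0^i, c_1^i, c_2^i$ are computed with weight $\infty$ on the internal edges incident to whichever corner is declared ``not covered'', so they are optima over \emph{restricted} (not relaxed) feasible sets relative to $c_{12}^i$; in fact the paper explicitly notes that $c'(v_1v_2)=c_{12}^i-\min\{c_0^i,c_1^i,c_2^i\}$ can be negative, i.e.\ $c_{12}^i<\min\{c_0^i,c_1^i,c_2^i\}$ can occur. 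Your justification ``requiring fewer corners covered can only decrease the DP optimum'' ignores these $\infty$-weights entirely.

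What is actually needed, and what the paper uses, is the inequality $c_{12}^i\le c_1^i+c_2^i$, which holds because the union of a $c_1^i$-witness and a $c_2^i$-witness covers all squares and both corners. This is precisely the step that handles an optimal $F'$ containing both $u_1v_1$ and $u_1v_2$ but not $v_1v_2$: one swaps $u_1v_2$ for $v_1v_2$ without increasing cost and reduces to the ``both corners internal'' pattern. All other gadget configurations are handled directly: since $u_1\in N$, at least one $u_1$-edge is present; if $v_1v_2\in F'$ the gadget contributes at least $c'(v_1v_2)+\min\{c_0^i,c_1^i,c_2^i\}=c_{12}^i$, and otherwise one may assume exactly one of $u_0u_1,u_1v_1,u_1v_2$ is chosen, matching the corresponding $c_\bullet^i$ together with whatever $\delta(G_i)$-edges are forced in order to cover the remaining corner(s) in $N$. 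A minor separate issue: in your degree-$1$ sketch you add both $v_0v_1$ and $v_1v$ when the corner is covered externally, but $v_0v_1$ alone already covers $v_1\in N$, so adding $v_1v$ would overcount; only $v_0v_1$ (plus the retained $\delta(G_i)$-link covering $v$) is needed there.
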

\begin{proof}
    First, given a covering $F$ of degree $3$ nodes and degree 1 or 2 ladders and hexagons of $G$ with minimum cost $c^*$, we will find an $N$-edge covering $F'$ of $G'$ with cost $c^*$. 

    First, for every link $\ell\in F$ that is not in a degree 1 or 2 $G_i\in \{G_1,\dots, G_k\}$, we add $\ell$ to our $N$-edge cover $F'$. Thus, $F'$ covers every degree 3 node in $G'$ that is not part of a gadget (and maybe some nodes that are) for the same cost $F$ takes to cover.

    Next, consider degree 1 or 2 $G_i\in \{G_1,\dots, G_K\}$. If $G_i$ is degree 1, then it is clear that $c(F\cup E[G_i]) = c_0^i $ if $F\cap E[G_i]$ does not cover the corner $v$ of $G_i$, and $c(F\cup E[G_i]) =  c_1^i\}$ if it does cover $v$. This follows by the construction of the gadget and the definition of $c_0^i$ and $c_1^i$. Thus, $F'$ covers the nodes of the gadget with cost equal to that of $F$.

    If $G_i$ is degree 2, then we can easily see the following by construction of the gadget and the definition of $c_0^i$, $c_1^i$, $c_2^i$, and $c_{1,2}^i$. 
    \begin{enumerate}
        \item If $F\cap E[G_i]$ covers neither $v_1$ nor $v_2$, then $c(F\cup E[G_i]) = c_0^i$.
        \item If $F\cap E[G_i]$ covers $v_1$ but not $v_2$, then $c(F\cup E[G_i]) = c_1^i$.
        \item If $F\cap E[G_i]$ covers $v_2$ but not $v_1$, then $c(F\cup E[G_i]) = c_2^i$.
        \item If $F\cap E[G_i]$ covers both $v_1$ and $v_2$, then $c(F\cup E[G_i]) = c_{1,2}^i$.
    \end{enumerate}
    In cases $1-3$, we add to $F'$ the edge of the degree 1 gadget that has the corresponding cost. This covers $u_1$, and $F$ covers every degree $3$ node, so it will use links in $\delta(G_i)$ to cover whichever corner is not covered by $F\cap E[G_i]$. Such an edge is added to $F'$, so $v_1$ and $v_2$ will also be covered by $F'$. 

    In case $4$, we add the edge $v_1v_2$ to $F'$ and the edge of $\{u_0u_1,u_1v_1,u_1v_2\}$ with minimum cost, and by definition of $c'(v_1v_2)$, $F'$ has cost $c_{1,2}^i$ on the gadget. Thus, $F'$ covers the nodes of the gadget with cost equal to that of $F$.

    Therefore, we have $c'(F')=c(F)=c^*$. So it remains to show that given a solution $F'$ with minimum cost $c^*$, we can find a set of links $F\subseteq L$ with cost $c(F)=c^*$ that satisfies the claim.
    
    Given an $N$-edge covering $F'$ of $G'$ of minimum cost $c'(F')=c^*$, we will find a set of edges $F\subseteq E$ that cover the degree $3$ nodes of $G$ and the degree $1/2$ ladders/hexagons with cost $C^*$. We begin with $F=\emptyset$ and will add to $F$ step by step.

    First, for every edge $e\in F'$ that is not in a gadget we add $e$ to $F$. So $F$ covers every degree 3 node not in a degree 1 or 2 $G_i\in \{G_1,\dots, G_k\}$.

    We now show how to use $F'$ to add edges to $F$ that cover degree 1 or 2 $G_i\in \{G_1,\dots, G_k\}$.

    \paragraph{Case 1:}First consider the case where $G_i$ is degree 1, and the corresponding degree 1 gadget in $G'$ has costs $c_0^i$ and $c_1^i$. Let $v$ denote the corner of $G_i$ and denote the nodes of the corresponding gadget by $v,v_0,v_1$. We consider in which cases the edges $v_0v_1,v_1v$ are in $F'$.
    \begin{itemize}
        \item Case: $v_0v_1, v_1v\notin F'$. The node $v_1\in N$ is not covered by $F'$ and thus $F'$ is not a feasible $N$-edge cover.

        \item Case: $v_1v\in F'$. We add to $F$ the edges found by applying Theorem~\ref{alg:DP} to $E[G_i]$ with covering requirements $h$ and edge weights $w$. Thus, $F$ covers every square of $G_i$ and every degree $3$ node of $G_i$ with cost $c(G_i\cap F) = c_1^i = c'(v_1v)$.

        \item Case: $v_0v_1\in F'$. We add to $F$ the edges found by applying Theorem~\ref{alg:DP} to $E[G_i]$ with covering requirements $h'$ and edge weights $w'$. Thus, $F\cap E[G_i]$ covers every square of $G_i$ and every degree 3 node of $G_i$ except $v$ with cost $c(G_i\cap F) =c_0^i = c'(v_0v_1)$. Note that since $F'$ covers $v$ but does not contain $v_1v$, it must contain the edge in $\delta(G_i)$, which is incident to $v$ and is in $F$ by construction. 

        \item Case: $v_0v_1,v_1v\in F'$. Since $v_0\notin N$, we can conclude that at least one of $c_0^i$ or $c_1^i$ is equal to $0$. Thus, we can remove $v_0v_1$ from $F'$ for no increase in cost or loss of feasibility and apply Case 2. 
    \end{itemize}
    Thus, for any degree 1 $G_i\in \{G_1,\dots, G_k\}$, $F$ covers the squares of $G_i$ and the degree 3 nodes of $G_i$ with the same cost as $F'$ cover the corresponding degree 1 gadget.
    \paragraph{Case 2:}Now, suppose that $G_i$ is degree 2, and the corresponding degree 2 gadget in $G'$ has costs $c_0^i,c_1^i,c_2^i,$ and $c_{12}^i$. Denote by $v_1, v_2\in V(G_i)$ the corners of $G_i$. We consider in which cases the edges $u_0u_1,u_1v_1,u_1v_2,$ and $v_1v_2$ are in $F'$.

    \begin{itemize}
        \item Case: $\delta(u_1)\cap F' = \{v_1v_2\}$. We add to $F$ the edges found by applying Theorem~\ref{alg:DP} to $E[G_i]$ with covering requirements $h_{12}$ and edge weights $w_{12}$. Thus, $F$ covers every square and degree 3 node of $G_i$ with cost $c_{12}^i$. Note that since $u_1\in N$, $F'$ contains an edge of $\delta(u_1)$ of least cost, the cost of $F'$ on the gadget is $c'(v_1v_2) + \min\{u_1v_1,u_1v_2,u_0u_1\} = c_{12}^i$. For the remaining cases, we assume $v_1v_2\notin F'$.

        \item Case: $\delta(u_1)\cap F' = \{u_1v_1\}$. We add to $F$ the edges found by applying Theorem~\ref{alg:DP} to $E[G_i]$ with covering requirements $h_1$ and edge weights $w_1$. Thus, $F\cap E[G_i]$ covers every square of $G_i$, every degree 3 node in $V(G_i)$ including $v_2$, but not $v_1$, with cost $c(G_i\cap F) = c_1^i$. Note that since $F'$ covers $v_1$ and $v_2$, but the gadget does not contain $v_1v_2$ or $u_1v_2$, that $F'$ must contain the edge of $\delta(G_i)$ that is incident to $v_2$, and thus $F$ also contains that edge. Thus, $F$ covers every degree $3$ node and every square of $G_i$. 

        \item Case: $\delta(u_1)\cap F' = \{u_1v_2\}$. This case can be handled similarly to Case 2.

        \item Case: $\delta(u_1)\cap F' = \{u_0u_1\}$. We add to $F$ the edges found by applying Theorem~\ref{alg:DP} to $E[G_i]$ with covering requirements $h_0$ and edge weights $w_0$. Thus, $F\cap E[G_i]$ covers every square of $G_i$ and every degree 3 node except $v_1$ and $v_2$. Note that since $F'$ covers $v_1$ and $v_2$, it must contain both edges of $\delta(G_i)$, and thus by construction $F$ contains those edges as well and thus covers every degree 3 node of $G_i$.
            
        \item Case: $|\delta(u_1)\cap F'|\geq 2$. If $u_0u_1\in F'$, then we can remove it from $F'$ for no loss of feasibility, and no increase in the cost of $F'$. If, after removing $u_0u_1$, $|\delta(u_1)\cap F'|=1$, then $\delta(u_1)\cap F' \in \{u_1v_1,u_1v_2\}$ and we can consider either Case 2 or 3. So assume that $\delta(u_1)\cap F' = \{u_1v_1,u_1v_2\}$. Without loss of generality assume that $c_1^i \leq c_2^i$. 

        It is clear that $c_{12}^i \leq c_1^i+c_2^i$, and thus we can replace $u_1v_2$ with $v_1v_2$ for a change in the cost of $F'$ on the degree 2 gadget from $c'(u_1v_1) + c'(u_1v_2) = c_1^i + c_2^i$ to $c'(u_1v_1) + c'(v_1v_2) =  c_1^i + c_{12}^i - \min\{c_0^i,c_1^i,c_2^i\} \leq c_{12}^i \leq c'(u_1v_1) + c'(u_1v_2)$. Clearly, the feasibility of $F'$ is maintained, and we can apply Case 1 to this gadget.
    \end{itemize}
    Thus, for any degree 2 $G_i\in \{G_1,\dots, G_k\}$, $F$ covers the squares of $G_i$ and the degree 3 nodes of $G_i$ with the same cost as $F'$ cover the corresponding degree 2 gadget.
    
    Therefore, given minimum $N$-edge cover $F'$ of $G'$ of cost $c^*$, we can find edges $F\subseteq E$ of $G$ that cover every degree 3 node and every square of degree 1 or 2 $G_i\in \{G_1,\dots, G_k\}$.
\end{proof}

The proof of Lemma~\ref{lem:degree12edgecover} follows directly from Lemma~\ref{lem:degree12reduction}.

% \begin{proof}[Proof of Lemma~\ref{lem:degree12edgecover}]
%     It is clear that the only obstructions in $G$ are ladders or hexagons that are degree 1 or 2, and degree 3 nodes. Therefore, we can apply the reduction of Lemma~\ref{lem:degree12reduction} to $G$, with unit edge costs to find an instance $G'=(V',E')$ of an $N$-edge cover problem with equal cost optimal solution. By combining Lemma~\ref{lem:SEdgeCover} with Lemma~\ref{lem:degree12reduction} , we can find an optimal $N$-edge cover $F'\subseteq E'$ for $G'$ with cost $c'(F')=c(F)$
% \end{proof}

    \section{ Missing proofs of Section~\ref{sec:4/3approx}}
This section includes the missing proofs of Section~\ref{sec:4/3approx}.
% \subsection{ Proof of Lemma~\ref{lem:squares}}
% \label{sec:squares}
% \begin{proof}[proof of Lemma~\ref{lem:squares}]
%     Let $F$ be a feasible solution. Since $S$ is a $K_{2,2}$ obstruction, $F$ must contain at least one link from $L\cap S$. If $x_1x_2,$ or $x_2x_3\in F$, we are done. If not, suppose that $x_3x_4 \in F$, then we can replace $x_3x_4$ with $x_2x_3$ for no loss in feasibility, and no increase in the size of the solution. The case $x_1x_4\in F$ can be handled similarly.
%     % We can assume without loss of generality that this link is one of $x_1x_2,x_2x_3$. 
%     % Therefore $x_2$ is always covered by a link in $S$. 
% \end{proof}

\subsection{ Proof of Lemma~\ref{lem:degree123}}
\label{sec:degree123}
    This proof follows directly from Lemma~\ref{lem:degree12reduction} where $G$ has edge costs $c(e)=1$ if $e\in L$ and $c(e)=\infty$ if $e\notin L$.

\subsection{Proof of Lemma~\ref{lem:simplifiedapproximation}}
\label{sec:simplifiedapproximationproof}
    We begin by defining notation. For a square $S$, we call a corner $u\in V(S)$ a \emph{good} node if there is a corner $v\in V(S)$ such that $uv\in L$, and we say $u$ is a bad node otherwise (recall that a node of $V(S)$ is a corner if it is incident to an edge in $\delta(S)$). Note that the nodes of $V(S)$ that are not corners are neither good nor bad.
    % By Lemma~\ref{lem:squares}, a degree 3 ladder cannot have 3 good nodes. 
    Note that since no links are necessary, all nodes of a degree 4 ladder are good nodes. 

    We find $APX$ by performing a sequence of steps. We begin with step 1 where we set $APX_1 = EC_3$, $Y_1= \emptyset$, and $\cH_1$ equal to the subset of $\{G_1,\dots, G_k\}$ whose squares are covered by $APX_1$. For each step $i$, $i \ge 2$, we define
    \begin{itemize}
    \item  $APX_i\subseteq L$ to be a partial solution;
     \item $\cH_i\subseteq \{G_1,\dots, G_k\}$ to be the set of ladders whose squares are covered by $APX_i$; 
     \item $Y_i$ to be the subset of edges  $ \subseteq APX_i \cap \delta(G[\cH_i])$ such that every $y\in Y_i$ has an endpoint in $R$.
    \end{itemize}
    Our goal is to compute $APX_i$ at $i$-th step, $i\ge 2$, such that the following holds: 
    \begin{enumerate}
        \item[(1)] The obstructions covered by $APX_{i-1}$ are a strict subset of those covered by $APX_i$;
        \item[(2)] $|APX_i\cap (Y_i \cup E[\cH_i])| \leq \frac{4}{3}| EC_3\cap (Y_i \cup E[\cH_i])|$, and $APX_i \backslash (Y_i \cup E[\cH_i]) = EC_3 \backslash (Y_i \cup E[\cH_i])$. 
    \end{enumerate}
    Note that since $APX_i$ covers the obstructions covered by $APX_{i-1}$, we must have $\cH_{i-1}\subseteq \cH_i$. If in iteration $i$, we have $\cH_i = \{G_1,\dots, G_k\}$, then we are done as $APX_i$ covers every obstruction. If $\cH_i \subsetneq \{G_1,\dots, G_k\}$, then we will find some $G_s\in \{G_1,\dots, G_k\} \backslash \cH_i$, and using $G_s$, we will find $APX_{i+1}$, $\cH_{i+1}$, and $Y_{i+1}$.
    
    We will consider three cases for $G_s$, for which we now provide a brief description. First, if $G_s$ has two nodes that are adjacent to  nodes in $R$ or in $\cH_i$. Second, if a good node of $G_s$ is adjacent to a good node of some $G_t\notin \cH_i$. And finally, if neither of the previous cases hold, then we show $G_s$ is degree 3 and that a good node of $G_s$ must be adjacent to a bad node of some degree 3 ladder $G_t\notin \cH_i$. In all cases, we will be able to find at least three links in $\delta(G_s)$ (and possibly $\delta(G_t)$), such that, by adding links to cover $G_s$ (and possibly $G_t$), we can charge the addition of these new links to the links in $\delta(G_s)$ (and possibly $\delta(G_t)$) for a $\frac{4}{3}$ fractional increase.

    \textit{Case 1:} there exists $G_s \in \{G_1,\dots, G_k\}\backslash \cH_i$, such that $G_s$ has two nodes that are adjacent to either a node in $R$ or in $\cH_i$.
    Since $G_s$ is degree at least $3$, there are links $\{\ell,\ell_1,\ell_2\} \subseteq \delta(G_s)\cap APX_i$. Without loss of generality, let the links $\ell_1$ and $\ell_2$ be the ones incident to nodes in $\cH_i$ or in $R$.
    
    Since no links are necessary, there is a link $e\in E(G_s)$ that shares an endpoint with $\ell$, and we start with $APX_{i+1}\leftarrow APX_{i}\cup \{e\}$. 
    Note that this ensures that $G_s$ is in $\cH_{i+1}$, and so $\cH_i \subsetneq \cH_{i+1}$.
    
    If any of $\ell_1$, $\ell_2$ or $\ell$ are incident to a node in $R$, then that link is in $Y_{i+1}$. Assume $\ell$ is incident to some ladder $G_t$. If $G_t\in \cH_i$, we are done this iteration. If not, we are going to change $APX_{i+1}$ by swapping two edges. Since $\ell$ is not necessary, there is link $f\in E(G_t)$ sharing an endpoint with $\ell$. We replace $\ell$ with $f$ in $APX_{i+1}$, and so also $G_t$ will be in $H_{i+1}$. 
    Since the endpoints of $\ell$ are covered in $APX_{i+1}$ by either $\ell$ or $\{e,f\}$, the obstructions in $G_s$ are covered, and all other obstructions of $APX_i$ are covered by $APX_{i+1}$, $(1)$ holds  (See Figure~\ref{fig:proofsketch}).
    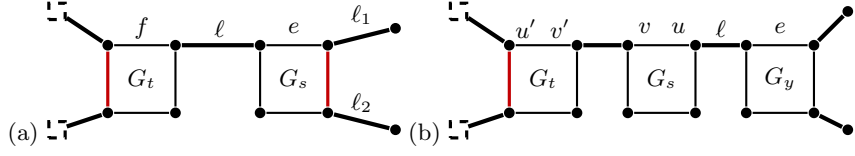
\begin{figure}[t]
    \begin{center}
        \begin{tabular}{c c}
            (a)
            \begin{tikzpicture}[scale=0.9]
                
                % Node styles
                \tikzset{black dot/.style={draw=black, very thick, circle,minimum size=0pt, inner sep=1pt, outer sep=1pt,fill=black}}
                \tikzset{terminal/.style={draw=black,  thick,minimum size=0pt, inner sep=2.5pt, outer sep=1pt}}
                \tikzset{P node/.style={fill={rgb,255: red,20; green,154; blue,0}, draw={rgb,255: red,20; green,154; blue,0}, circle, minimum size=0pt,inner sep=1pt, outer sep=1pt}}
                \tikzset{Writing/.style={shape=circle} }
            
                % Edge styles
                \tikzstyle{witness edge}=[-, draw={rgb,255: red,195; green,0; blue,3}, very thick]
                \tikzstyle{T edges}=[-, thick]
                \tikzstyle{Fat edge}=[-, ultra thick]
                \tikzstyle{new witness}=[-, draw={rgb,255: red,195; green,0; blue,3}, dashed, ultra thick]
                \tikzstyle{connected terminals}=[-, draw=black, dashed, very thick]
                \tikzstyle{P}=[-, draw={rgb,255: red,20; green,154; blue,0}, very thick]
                
    		\node [style=black dot] (1) at (0, 0) {};
    		\node [style=black dot] (2) at (0, 1) {};
    		\node [style=black dot] (3) at (1, 1) {};
    		\node [style=black dot] (4) at (1, 0) {};
    		\node [style=black dot] (5) at (2, 1.25) {};
    		\node [style=black dot] (6) at (2, -0.25) {};
    		\node [style=black dot] (7) at (-1.25, 1) {};
    		\node [style=black dot] (8) at (-1.25, 0) {};
    		\node [style=black dot] (9) at (-2.25, 0) {};
    		\node [style=black dot] (10) at (-2.25, 1) {};
    		\node [style=connected terminals] (11) at (-3, 1.5) {};
    		\node [style=connected terminals] (12) at (-3, -0.25) {};
                \node (13) at (0.5, 0.5) {$G_s$};
        		\node (14) at (-1.75, 0.5) {$G_t$};
    		\node (16) at (-0.6, 1.25) {$\ell$};
                \node (16) at (0.5, 1.25) {$e$};
                \node (16) at (-1.75, 1.25) {$f$};
                  \node (16) at (1.5, 1.45) {$\ell_1$};
                  \node (16) at (1.5, 0.15) {$\ell_2$};
    		\node (17) at (-1, 1.25) {};
    		\draw [style=T edges] (4) to (1);
    		\draw [style=T edges] (1) to (2);
    		\draw [style=T edges] (7) to (8);
    		\draw [style=T edges] (8) to (9);
    		\draw [style=T edges] (10) to (7);
    		\draw [style=witness edge] (9) to (10);
    		\draw [style=witness edge] (3) to (4);
    		\draw [style=T edges] (2) to (3);
    		\draw [style=Fat edge] (4) to (6);
    		\draw [style=Fat edge] (3) to (5);
    		\draw [style=Fat edge] (2) to (7);
    		\draw [style=Fat edge] (10) to (11);
    		\draw [style=Fat edge] (9) to (12);
            \end{tikzpicture}
            &
            (b)
            \begin{tikzpicture}[scale=0.9]
                
                % Node styles
                \tikzset{black dot/.style={draw=black, very thick, circle,minimum size=0pt, inner sep=1pt, outer sep=1pt,fill=black}}
                \tikzset{terminal/.style={draw=black,  thick,minimum size=0pt, inner sep=2.5pt, outer sep=1pt}}
                \tikzset{P node/.style={fill={rgb,255: red,20; green,154; blue,0}, draw={rgb,255: red,20; green,154; blue,0}, circle, minimum size=0pt,inner sep=1pt, outer sep=1pt}}
                \tikzset{Writing/.style={shape=circle} }
            
                % Edge styles
                \tikzstyle{witness edge}=[-, draw={rgb,255: red,195; green,0; blue,3}, very thick]
                \tikzstyle{T edges}=[-, thick]
                \tikzstyle{Fat edge}=[-, ultra thick]
                \tikzstyle{new witness}=[-, draw={rgb,255: red,195; green,0; blue,3}, dashed, very thick]
                \tikzstyle{connected terminals}=[-, draw=black, dashed, very thick]
                \tikzstyle{P}=[-, draw={rgb,255: red,20; green,154; blue,0}, very thick]
                
		\node [style=black dot] (1) at (-0.5, 0) {};
		\node [style=black dot] (2) at (-0.5, 1) {};
		\node [style=black dot] (3) at (0.5, 1) {};
		\node [style=black dot] (4) at (0.5, 0) {};
		\node [style=black dot] (5) at (1.25, 1) {};
		\node [style=black dot] (7) at (-1.25, 1) {};
		\node [style=black dot] (8) at (-1.25, 0) {};
		\node [style=black dot] (9) at (-2.25, 0) {};
		\node [style=black dot] (10) at (-2.25, 1) {};
		\node [style=connected terminals] (11) at (-3, 1.5) {};
		\node [style=connected terminals] (12) at (-3, -0.25) {};
		\node [style=Writing] (13) at (-1.75, 0.5) {$G_t$};
		\node [style=Writing] (14) at (0, 0.5) {$G_s$};
		\node [style=Writing] (16) at (1.75, 1.25) {$e$};
		\node [style=black dot] (18) at (1.25, 0) {};
		\node [style=black dot] (19) at (2.25, 0) {};
		\node [style=black dot] (20) at (2.25, 1) {};
		\node [style=Writing] (21) at (1.75, 0.5) {$G_y$};
		\node [style=Writing] (22) at (-1.5, 1.25) {$v'$};
		\node [style=Writing] (23) at (-2, 1.25) {$u'$};
		\node [style=Writing] (24) at (-0.25, 1.25) {$v$};
		\node [style=Writing] (25) at (0.25, 1.25) {$u$};
		\node [style=Writing] (26) at (0.875, 1.25) {$\ell$};
		\node [style=black dot] (27) at (2.75, 1.5) {};
		\node [style=black dot] (28) at (2.75, -0.25) {};
          
		\draw [style=T edges] (4) to (1);
		\draw [style=T edges] (1) to (2);
		\draw [style=T edges] (7) to (8);
		\draw [style=T edges] (8) to (9);
		\draw [style=T edges] (7) to (10);
		\draw [style=witness edge] (9) to (10);
		\draw [style=T edges] (3) to (4);
		\draw [style=Fat edge] (2) to (7);
		\draw [style=Fat edge] (10) to (11);
		\draw [style=Fat edge] (9) to (12);
		\draw [style=T edges] (2) to (3);
		\draw [style=Fat edge] (3) to (5);
		\draw [style=T edges] (5) to (20);
		\draw [style=T edges] (20) to (19);
		\draw [style=T edges] (19) to (18);
		\draw [style=T edges] (18) to (5);
		\draw [style=Fat edge] (20) to (27);
		\draw [style=Fat edge] (28) to (19);

            \end{tikzpicture}
        \end{tabular}
    \end{center}
    \caption{In both figures $APX_i$ is shown as bold edges. The red edges are not links.\\ (a) Ladder $G_s$ adjacent to two lonely nodes and ladder $G_t$.  We can see $\{\ell,\ell_1,\ell_2\} = \delta(G_s)$. We can add $f$ and replace $\ell$ with $e$, charge the addition of $f$ to $\ell,\ell_1,\ell_2\}$ for an increase from $3$ links to $4$, but covering both $G_s$ and $G_t$.
    (b) Ladders $G_s$ and $G_t$ with adjacent good nodes $v$ and $v'$. We add $uv$ and $u'v'$ to $APX_{i+1}$ and remove $vv'$, and we replace $\ell$ with $e$. Charging the addition of $vv'$ and $e$ to $\ell,vv',$ and the edge incident to $u'$.}
    \label{fig:proofsketch}
\end{figure}
    By constructions of $APX_i$ and $APX_{i+1}$, we have 
    \begin{align*}
        APX_{i+1}\backslash (Y_{i+1} \cup E[\cH_{i+1}] )  &= APX_{i}\backslash (Y_{i+1} \cup E[\cH_{i+1}])=  EC_3\backslash (Y_{i+1} \cup E[\cH_{i+1}] ) 
    \end{align*}
     
    To see that $|APX_{i+1}\cap (Y_{i+1} \cup E[\cH_{i+1}])| \leq \frac{4}{3}| EC_3\cap (Y_{i+1} \cup E[\cH_{i+1}])|$, observe that 
    \begin{align*}
        |EC_3\cap (Y_{i+1}\cup E[\cH_{i+1}])| - |EC_3\cap (Y_i\cup E[\cH_i])|\coloneqq \Delta \ge 3
    \end{align*}
    where the inequality follows since $\ell,\ell_1,\ell_2\notin EC_3\cap(Y_i\cup E[\cH_i])$, but are in $EC_3\cap(Y_{i+1}\cup E[\cH_{i+1}])$. Similarly, the following inequality follows since $APX_{i+1}\cap(Y_{i+1}\cup E[\cH_{i+1}])$ contains $\ell_1,\ell_2,e$ and exactly one of $f$ or $\ell$.
    \begin{align*}
        |APX_{i+1}\cap (Y_{i+1}\cup E[\cH_{i+1}])| - |APX_i\cap (Y_i\cup E[\cH_i])| = \Delta +1
    \end{align*}
    So as $i$ increases by one, the increase on the right hand side of the inequality of (2) is at most $\frac{4}{3}$ the increase on the left hand side. Thus, (2) holds.

    %To again account for space limitations the argument that (2) also holds can be found in the Appendix~\ref{sec:simplifiedapproximationproof}, though it follows standard lines of argumentation.

    \paragraph{Case 2:}
    there exist $G_s, G_t \in \{G_1,\dots, G_k\}\backslash \cH_i$ such that a good node of $G_s$ is adjacent to a good node of $G_t$.  
    Let $u,v\in G_s$ and $u',v'\in G_t$ be good nodes such that $vv'\in L$. Since $v$ and $v'$ are degree $3$ nodes that are covered by $APX_i$, we see $vv'\in APX_i$. We set $APX_{i+1} \leftarrow APX_{i} \cup \{uv,u'v'\} \backslash \{vv'\}$.
    In this case, we see that 
    % $Y_{i+1} \leftarrow Y_i$, and
    $G_s$ and $G_t$ are in $\cH_{i+1}$, and so  $\cH_i \subsetneq \cH_{i+1}$. 
    
    Since $u$ and $u'$ are good nodes, they are incident to links $\ell\in \delta(G_s)$ and $\ell'\in \delta(G_t)$ respectively. If $\ell$ (resp. $\ell'$) is incident to a node in $R$, then  $\ell$ (resp. $\ell'$) is in $Y_{i+1}$. Assume $\ell$ is incident to some $G_x$ (the case for $\ell'$ can be handled similarly if $\ell'$ is incident to some $G_y$). If $G_x = G_t$ (similarly if  $G_y=G_s$), then the endpoint of $\ell$ in $G_t$ cannot be adjacent to $v'$ or else $G[G_t \cup G_s]$ forms a ladder of length $3$, contradicting our assumption on $G$. 
        
    If $G_x\in \cH_i$, then $G_x$ is covered by $APX_i$ and we are done. If $G_x\notin \cH_i$, then $G_x$ is not covered by $APX_i$, and since no links are necessary, there is a link $e\in E(G_x)$ that is adjacent to $\ell$. We remove $\ell$ from $APX_{i+1}$ and add $e$, and so $G_x$ is in $\cH_{i+1}$. Since both endpoints of $\ell$ are covered by $uv$ and $e$, $APX_{i+1}$ covers the obstructions covered by $APX_i$ as well as the obstructions in $G[\cH_{i+1}\cup Y_{i+1}]$. (See Figure~\ref{fig:proofsketch}.)
    
  %  Note that in this case, if $G_x=G_t$, then $|APX_{i+1}|=|APX_i|$. 
  By a similar argument to the first case, we can see that (1) and (2) hold. 
  
    \paragraph{Case 3:}
    Neither of the above cases hold. In this case we wish to show that every $G_s\in \{G_1,\dots, G_k\} \backslash \cH_i$ is degree $3$, and that there exists a $G_t\in \{G_1,\dots, G_k\} \backslash \cH_i$ such that $G_t$ has a good node that is adjacent to a bad node of $G_s$.
    % there are no degree 4 ladders in $\{G_1,\dots, G_k\} \backslash \cH_i$, and moreover, there are two adjacent degree $3$ ladders $G_s,G_t \in \{G_1,\dots, G_k\} \backslash \cH_i$ that are each adjacent to exactly two ladders in $\{G_1,\dots, G_k\} \backslash \cH_i$. 
    First, recall that every node of a degree 4 ladder is a good node. 
    
    First, we show that the ladders $G_s$ and $G_t$ described always exist when cases 1 and 2 do not hold. We denote by $X_3$ the set of degree $3$ ladders in $\{G_1, \dots, G_k\}\backslash \cH_i$. Denote by $N$ the set of links between a good node and a bad node of different ladders in $\{G_1,\dots, G_k\} \backslash \cH_i$. The following claim will show that the degree 3 ladders that are not covered by $APX_i$ has the same size as $N$.
    \begin{claim}
        $|X_3| = |N|$
    \end{claim}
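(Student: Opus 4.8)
The plan is to prove $|X_3| = |N|$ by constructing a bijection $\Gamma : N \to X_3$. Given a link $uv \in N$ with $u$ a good node and $v$ a bad node lying in distinct uncovered ladders, $v$ cannot lie in a degree-$4$ ladder, since every corner of such a ladder is good (as recalled above); hence $v$ lies in a degree-$3$ uncovered ladder, which we declare to be $\Gamma(uv)$. It then remains to show that $\Gamma$ is surjective and injective.

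Before the main case analysis I would record three structural facts that hold because we are in Case~3. First, if $G \in \{G_1,\dots,G_k\}\setminus\cH_i$, then every edge of $\delta(G)$ is a link, and in fact lies in $EC_3$: indeed $EC_3 \subseteq APX_i$ covers every degree-$3$ node, the degree-$3$ nodes of $G$ are exactly its corners, and since $G\notin\cH_i$ no edge of the square $G$ belongs to $APX_i$, so each corner of $G$ must be covered by the unique edge of $\delta(G)$ incident to it. Second, since Case~1 fails, at most one edge of $\delta(G)$ has an endpoint in $R\cup V(\cH_i)$, so at least $\deg(G)-1\ge 2$ edges of $\delta(G)$ run between $G$ and other uncovered ladders. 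Third, since Case~2 fails, no edge running between two uncovered ladders has both endpoints good; combined with the fact that a degree-$4$ uncovered ladder has only good corners, every $\delta$-edge leaving a degree-$4$ uncovered ladder towards another uncovered ladder lands on a bad corner of the latter, which is therefore a degree-$3$ uncovered ladder, and hence such an edge lies in $N$.

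The heart of the proof is a case analysis on the good/bad pattern of the three corners $a,b,c$ (forming a path in the square, with $b$ the middle one) of a degree-$3$ uncovered ladder $G$, using the first structural fact, the absence of necessary links, and the minimality of $EC_3$. I would show that $G$ has exactly one bad corner: the all-good and the all-bad patterns are ruled out because in those cases one can modify $EC_3$ into a strictly smaller feasible edge cover (or, where only an equally small cover is available, one that in addition covers the square $G$, so that $G$ is not uncovered), contradicting the minimality or the choice of $EC_3$; meanwhile the no-necessary-links assumption applied to the square obstruction $G$ and to the stars $K_{1,3}$ at the corners pins down which square edges can be links. Once $G$ has a single bad corner $v$, its $\delta$-edge is a link by the first fact; the second fact, via a counting comparison with the third fact applied to degree-$4$ uncovered ladders, rules out that this edge enters $R\cup V(\cH_i)$, and the no-necessary-links assumption applied to the obstruction $K_{1,3}$ at $v$ rules out that it lands on a bad corner of another uncovered ladder. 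Hence $v$'s $\delta$-edge is an $N$-link whose bad endpoint is $v$, which gives surjectivity of $\Gamma$, and since $v$ is the only bad corner of $G$ it is the only $N$-link mapped to $G$, which gives injectivity; counting distinct links then yields $|N| = |X_3|$, and as a byproduct the accounting forces there to be no degree-$4$ uncovered ladders, which is exactly what the surrounding argument needs.

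The step I expect to be the main obstacle is establishing that a degree-$3$ uncovered ladder has precisely one bad corner, in particular excluding the all-good pattern: this cannot be argued locally, since an all-good degree-$3$ ladder is on its face compatible with being uncovered, and ruling it out seems to require combining the minimality (or canonical choice) of $EC_3$ with the global structure imposed by the failure of Cases~1 and~2, namely that certain $\delta$-edges are forced into $EC_3$ precisely in order to cover bad corners of neighbouring uncovered ladders. A secondary delicate point is the exact accounting needed to conclude that no bad corner's $\delta$-edge escapes to $R\cup V(\cH_i)$, where the contribution of degree-$4$ uncovered ladders to $N$ must be balanced against that of the degree-$3$ ones.
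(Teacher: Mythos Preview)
Your plan has a genuine gap at exactly the point you flagged as the main obstacle. You want to show that the map $\Gamma:N\to X_3$ sending a good--bad link to the ladder containing its bad endpoint is a \emph{bijection}, and for surjectivity you need every $G\in X_3$ to have a bad corner whose $\delta$-edge lies in $N$. Your proposed tool for ruling out the all-good pattern is minimality of $EC_3$, but this does not work: if a degree-$3$ uncovered ladder $G$ has corners $a,b,c$ with $ab,bc\in L$, the three $\delta$-edges at $a,b,c$ may each be the unique $EC_3$-edge covering their \emph{other} endpoint, so no swap produces a strictly smaller cover, and $EC_3$ carries no secondary ``maximise covered squares'' criterion you could appeal to. Your auxiliary claim that the no-necessary-links assumption at the $K_{1,3}$ centred at a bad corner $v$ forbids its $\delta$-edge from landing on another bad corner is also incorrect: that assumption only forces two of the three edges at $v$ to be links, and says nothing about the good/bad status of the far endpoint.

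The paper sidesteps both difficulties by proving two separate inequalities rather than a single bijection. For $|N|\le |X_3|$ it uses your injectivity argument (at most one bad corner per ladder). For $|X_3|\le |N|$ it works from the \emph{good} side: every $G\in X_3$ has at least two good corners and, since Case~1 fails, at least two of its three $\delta$-edges go to uncovered ladders; by inclusion--exclusion on three corners, at least one good corner of $G$ is adjacent to an uncovered ladder, and since Case~2 fails the other end is bad, giving an $N$-edge whose good endpoint lies in $G$. Distinct $G$'s yield distinct $N$-edges (the good endpoint identifies $G$), so $|X_3|\le |N|$. The facts you were trying to prove directly---that every $G\in X_3$ has exactly one bad corner, that its $\delta$-edge lies in $N$, and that there are no degree-$4$ uncovered ladders---then fall out as \emph{consequences} of $|X_3|=|N|$ together with the injectivity of $\Gamma$, rather than being needed as inputs.
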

    \begin{proof}
       In order to show that $|X_3|\ge |N|$, we show that every ladder of degree at least three has at most one bad node. This follows from the fact that as there exists no necessary links, then at least two of the three edges incident to every corner of a square must be links. Therefore, in a ladder of degree four all corners are good, and in a ladder of degree three at most one corner is bad.%Furthermore, if $\{G_1,\dots, G_k\} \backslash \cH_i$ has a ladder of degree four or $X_3$ contains a ladder without bad nodes then $|X_3|>|N|$.}
            
        To see $|X_3| \leq |N|$, we first note since case 1 does not hold, a ladder in $\{G_1, \dots, G_k\}\backslash\cH_i$ is adjacent to at most one of: a lonely node, or, a ladder in $\cH_i$. Secondly, since case 2 does not hold, the good nodes of ladders in $X_3$ are not adjacent to the good nodes of any ladder in $\{G_1, \dots, G_k\}\backslash\cH_i$. Therefore, every degree 3 ladder in $\{G_1, \dots, G_k\}\backslash\cH_i$ has at least one good node adjacent to a bad node of another degree 3 ladder, and thus $|X_3| \leq |N|$. And the claim is proven.
    \end{proof}
        
    Therefore, since $|X_3|=|N|$, one of the good nodes of each degree 3 ladder in $X_3$ is adjacent to a bad node of a ladder in $X_3$, and furthermore, the other good node must be adjacent to a node in $R$ or a ladder in $\cH_i$. Since each bad node is adjacent to a degree 3 ladder, the good nodes of degree 4 ladders can only be adjacent to either good nodes, nodes in $R$, or elements of $\cH_i$, and thus must be covered by $APX_i$ since neither case (1) nor (2) hold. Therefore, $X_3=\{G_1, \dots, G_k\}\backslash\cH_i$.
    
    We now consider degree 3 ladder $G_s\in \{G_1,\dots, G_k\} \backslash \cH_i$ with bad node $s_1$, and good nodes $s_2, s_3$, where $s_3$ is adjacent to a node in $R$ or a node in $\cH_i$. And let $s_4$ be the node of $G_s$ that is not a corner.  Note that $s_1s_4$ must be a link since there are no necessary links and $s_1s_2$ cannot be a link by the definition of bad nodes.
    
    By the argumentation above, there is a degree 3 ladder $G_t\in\{G_1,\dots, G_k\} \backslash \cH_i$ with bad node $t_1$, and good nodes $t_2, t_3$, where $t_3$ is adjacent to a node in $R$ or a node in $\cH_i$, such that $s_1t_2\in APX_i$. Let $e \in \delta(G_s)\cap APX_i$ be incident to $s_3$, and $f\in \delta(G_t)\cap APX_i$ be incident to $t_3$. We let $APX_{i+1} \leftarrow APX_i \cup \{s_1s_4,t_2t_3\}\backslash \{s_1t_2\}$, and so $G_s$ and $G_t$ are in $\cH_{i+1}$, and $\cH_{i}\subsetneq \cH_{i+1}$. The link $e$ (resp. $f$) is in $Y_{i+1}$ if $e$ (resp. $f$) is incident to a node in $R$.
    By a similar argument to  the first case we can see that $APX_{i+1}$, $Y_{i+1}$, and $\cH_{i+1}$ satisfy (1) and (2).    

    Note that for every $i\geq 1$, by (2) we have
    \begin{align*}
        |APX_i| &= |APX_i\cap (Y_i \cup E[\cH_i])| + |APX_i \backslash (Y_i \cup E[\cH_i])| \\
        &\leq \frac{4}{3}| EC_3\cap (Y_i \cup E[\cH_i])| + |EC_3 \backslash (Y_i \cup E[\cH_i])| \leq \frac{4}{3}|EC_3|
    \end{align*}
    Furthermore, in each iteration we cover a new element of $\{G_1,\dots, G_k\}$, so after a polynomial number of iterations we will have $\cH_i = \{G_1,\dots, G_k\}$ and we have covered every obstruction.

        % \item 
\subsection{ Proof of Theorem~\ref{thm:unweighted}}
\label{sec:4/3approx-proofs}
\label{sec:unweighted}
We apply Theorem~\ref{thm:ChainDecomposition} to decompose instance $(G,4,L)$ into hexagons and ladders $G_1,\dots, G_k$, and lonely nodes $R$. 
% As in Section~\ref{sec:4/3approx}, we apply Lemma~\ref{lem:squares} and we can assume without loss of generality that the instance $(G,L)$ contains no length 1 ladder $S$ of degree $3$ such that $G[S]$ contains a square $x_1x_2x_3x_4$, where $x_1x_2, x_2x_3 \in L$, and $x_1,x_2,x_3$ are incident to edges $z_1x_1, z_2x_2, z_3x_3 \in \delta(S)$ respectively.
% %To see this, we add a dummy node $\bar x_2$, and replace $z_2x_2$ with $z_2\bar x_2$ in $E$, and if $z_2x_2\in L$, then we replace $z_2x_2$ with $z_2\bar x_2$ in $L$ too. In this new instance, given a feasible solution, we can always find a feasible solution of equal cost that covers $x_2$.

We first find a partial solution $EC_3$ as described in Section~\ref{sec:4/3approx} that covers degree 3 nodes and degree 1 or 2 $G_i\in \{G_1,\dots,G_k\}$ by applying Lemma~\ref{lem:degree123}. We next prove Lemma~\ref{lem:uncovered} which shows that with $EC_3$ we can find another partial solution $APX_1$ that covers every $G_i\in\{G_1,\dots, G_k\}$ with degree at least 2 by increasing the number of links on $G_i$ by a $\frac{4}{3}$ fraction, leaving the solution equal elsewhere. Finally, we can apply Lemma~\ref{lem:simplifiedapproximation} to cover the remaining $G_i\in \{G_1,\dots, G_k\}$ that contain obstructions not covered by $APX_1$ to find a feasible solution $APX_2$. Then we will show that $APX_2$ has cost at most $\frac{4}{3}c|opt|$.
\begin{restatable}{lemma}{uncovered}
\label{lem:uncovered}
    Consider an unweighted $4$-Obstruction Covering instance $(G,4,L)$, that contains no ladder or hexagon of degree $1$ or $2$, with subgraphs $R,G_1,\dots, G_k$, found by Theorem~\ref{thm:ChainDecomposition}. Consider partial solution $EC_3 \subseteq L$ that covers every degree $3$ node of $G$, but every $G_s\in \{G_1, \dots, G_k\}$ contains an obstruction that is not covered by $EC_3$.
    We can find in polynomial time, the following:
    \begin{enumerate}
        \item  A (potentially infeasible) solution $APX\subseteq L$;
        \item  A subset $\cH \subseteq \{G_1,\dots, G_k\}$  containing all hexagons and ladders of length at least $2$, such that $APX$ covers every obstruction in $G[\cH]$, and;
        \item A subset $Y \subseteq APX \cap \delta(G[\cH])$, such that, every $y\in Y$ has an endpoint in $R$
    \end{enumerate}
    such that, $|APX \cap (Y \cup E[\cH])| \leq \frac{4}{3}| EC_3 \cap (Y \cup E[\cH])|$, and $APX\backslash (Y \cup E[\cH]) = EC_3 \backslash (Y \cup E[\cH])$.
\end{restatable}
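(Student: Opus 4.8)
The plan is to mimic the iterative scheme used in the proof of Lemma~\ref{lem:simplifiedapproximation}. I would build a sequence $APX_1,APX_2,\dots$ of link sets together with nested families $\cH_1\subseteq\cH_2\subseteq\cdots\subseteq\{G_1,\dots,G_k\}$ and sets $Y_j\subseteq APX_j\cap\delta(G[\cH_j])$ (each $y\in Y_j$ having an endpoint in $R$), starting from $APX_1=EC_3$, $\cH_1$ equal to the $G_i$ already covered by $EC_3$, and $Y_1=\emptyset$. Throughout I would maintain the invariant that the obstructions covered by $APX_j$ strictly grow, that $APX_j\setminus(Y_j\cup E[\cH_j])=EC_3\setminus(Y_j\cup E[\cH_j])$, and that $|APX_j\cap(Y_j\cup E[\cH_j])|\le\frac{4}{3}\,|EC_3\cap(Y_j\cup E[\cH_j])|$. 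The process halts once $\cH_j$ contains every hexagon and every ladder of length at least $2$, and the output is that $APX_j,\cH_j,Y_j$; since each step adds at least one hexagon or long ladder to $\cH_j$, it terminates in polynomially many steps, and telescoping the per-step inequality gives the claimed bound.

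In one step I would pick a hexagon or a ladder $G_s$ of length $\ge2$ with $G_s\notin\cH_j$. Since the instance has no subgraph of degree $1$ or $2$, $G_s$ has degree $\ge3$, so $\delta(G_s)$ contains at least three links; I would arrange that three of them end up inside $Y_{j+1}\cup E[\cH_{j+1}]$, since $EC_3$ (hence $APX_j$) covers the corners of $G_s$, which are degree-$3$ nodes (using minimality of $EC_3$ and that links are unweighted to assume, as much as possible, that corners are covered by boundary links). The key new ingredient is an internal sub-claim: given that $EC_3$ already covers all interior degree-$3$ nodes of $G_s$ but misses some square, the number of links one must add inside $E[G_s]$ to cover every square of $G_s$ is at most roughly $\frac13\,|EC_3\cap E[G_s]|$ (up to an additive constant). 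Here the structure matters — for a hexagon the three squares share the three spokes at the central node, and for a ladder of length $\ell$ the consecutive squares share the $\ell-1$ interior rungs, so a short alternating set of links repairs all uncovered squares; and the "no necessary links" preprocessing guarantees that every still-present square contains at least two links among its four edges, which is what makes the repair cheap. After covering the squares of $G_s$, I would dispose of its boundary links exactly as in the three cases of Lemma~\ref{lem:simplifiedapproximation}: a boundary link with an endpoint in $R$ goes into $Y_{j+1}$; a boundary link reaching some $G_t\in\cH_j$ is already accounted for; and a boundary link $\ell$ reaching a not-yet-covered $G_t$ is either swapped (replacing $\ell$ by an internal link of $G_t$ sharing its endpoint, which pulls $G_t$ into $\cH_{j+1}$ — possible since no link is necessary) or simply left in place, in which case it stays outside $Y_{j+1}\cup E[\cH_{j+1}]$ and affects neither side of the inequality.

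To verify invariant~(2), note that all modifications occur inside $Y_{j+1}\cup E[\cH_{j+1}]$, so equality outside is preserved, and it remains to compare increments. The right-hand side gains all of $EC_3\cap E[G_s]$, the boundary links of $G_s$ in $EC_3$ that now lie in $Y_{j+1}$ or $E[\cH_{j+1}]$, and all of $EC_3\cap E[G_t]$ for each $G_t$ pulled in. The left-hand side gains the few links added inside $E[G_s]$, the links added inside the swapped-in $G_t$'s, the boundary links of $G_s$ already in $APX_j$ that now count, and — since $APX_j$ and $EC_3$ coincide inside each untouched $G_t$ by the invariant — the same $|EC_3\cap E[G_t]|$ contributions as on the right. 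Thus the $G_t$-contributions and the at-least-three boundary links appear on both sides, and combining this with the internal sub-claim absorbs the additive slack, giving left-increment $\le\frac43\cdot$right-increment.

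The main obstacle I expect is precisely the internal sub-claim: proving that covering all squares of a hexagon or a ladder of length $\ge2$ costs at most about $\frac13\,|EC_3\cap E[G_s]|$ extra links. This needs a careful case analysis of which perfect matchings of the interior nodes $EC_3$ may use (such a matching typically leaves an alternating pattern of squares uncovered), of how the two-links-per-square property repairs those squares, and of the small cases — notably ladders of length exactly $2$, where $|EC_3\cap E[G_s]|$ can be as small as $2$ and one must genuinely lean on the three boundary links to hold the ratio at $\frac43$. A secondary technical point is the bookkeeping of the swaps so that no boundary link is double-counted across $Y$ and $E[\cH]$, and so that when a pulled-in $G_t$ is itself a hexagon or long ladder the per-step argument still goes through (or, more cleanly, the iteration is arranged to pull in only length-$1$ ladders, deferring larger neighbours to later steps).
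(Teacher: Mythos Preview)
Your proposal is essentially the paper's proof: the same iterative scheme with the same two invariants, processing one hexagon or long ladder $G_s\notin\cH_i$ per step, and charging each net added link to three disjoint $EC_3$-links that newly enter $Y_{i+1}\cup E[\cH_{i+1}]$; your ``internal sub-claim'' is precisely what the paper establishes by its case split (hexagon; ladder square $S_j$ with $j=1$, $2\le j\le r-1$, $j=r$; plus a delicate ``blocker'' sub-case when $S_1$ is uncovered but $v_1v_2,u_1u_2\notin L$), and the swap trick on boundary links is the same.

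Two small corrections worth noting. First, minimality of $EC_3$ is neither assumed in the lemma nor needed: a corner sitting on an \emph{uncovered} end-square $S_1$ or $S_r$ is automatically covered by its boundary link, since its two internal incident edges lie in that uncovered square. Second, you do not get three boundary charge links per ladder --- for an interior square $S_j$ all three charge links are internal (namely $u_{j-1}u_j,\,v_{j-1}v_j,\,u_{j+1}u_{j+2}$, with a swap of the last one into $S_{j+2}$ to prevent double-charging), and only the two end cases $j=1$ and $j=r$ each contribute one boundary charge link; so the additive constant in your sub-claim is $\tfrac{2}{3}$, absorbed by exactly those two boundary links once they are pushed into $Y\cup E[\cH]$ via the swap you describe.
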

\begin{proof}
    We find $APX$ by performing a sequence of steps. We begin with step 1 where we set $APX_1=EC_3$, $Y_1=\emptyset$, and $\cH_1 = \emptyset$. For each step $i$, $i\geq 2$, we define
    \begin{itemize}
        \item $APX_i\subseteq L$ to a be a partial solution;
        \item $\cH_i\subseteq\{G_1,\dots, G_k\}$ contains the set of ladders and hexagons whose obstructions are covered by $APX_i$, as well as the the set of ladders and hexagons that contain an obstruction covered by $APX_{i}$ but not covered by $APX_{i-1}$;
        \item $Y_i$ to be the subset of edges $\subseteq APX_i\cap \delta(G[\cH_i])$ such that every $y\in Y_i$ has an endpoint in $R$.
    \end{itemize}
    Our goal is to compute $APX_i$ at the $i$-th step, $i\ge 2$, such that the following holds:
    \begin{enumerate}
        \item[(1)] The obstructions covered by $APX_{i-1}$ are a strict subset of those covered by $APX_{i}$;
        \item[(2)] $|APX_i\cap (Y_i \cup E[\cH_i])| \leq \frac{4}{3}| EC_3\cap (Y_i \cup E[\cH_i])|$, and $APX_i \backslash (Y_i \cup E[\cH_i]) = EC_3 \backslash (Y_i \cup E[\cH_i])$. 
    \end{enumerate}
    Note that since $APX_i$ covers the obstructions covered by $APX_{i-1}$ we have $\cH_{i-1}\subseteq \cH_{i}$, and since $EC_3=\cH_1$ already covers every degree 3 node, we must have $\cH_{i-1} \subsetneq \cH_i$. 
    If in iteration $i$,  every obstruction in a ladder or hexagon $G_s\in \cH_i$ is covered, and $\{G_1,\dots,G_k\}\backslash \cH_i$ contains no hexagons or ladders of length at least $2$, then we define $\cH \coloneqq \{G_s\in \cH_i| \text{every obstruction of } G_s \text{ is covered} \}$, $APX \coloneqq APX_i$,  and $Y\coloneqq Y_i$ and we are done. Otherwise, we will find a  hexagon or ladder of length at least 2,  and use it to find $APX_{i+1}, \cH_{i+1}$, and $Y_{i+1}$.
    
    Let $G_s\in \{G_1,\dots,G_k\}\backslash \cH_i$ be a hexagon or ladder of length at least 2. We will consider the cases for $G_s$. First, we will deal with the cases that $G_s$ contains a hexagon. Then, we will assume that $\{G_1,\dots,G_k\}\backslash \cH_i$ contains no hexagons, and will consider the case where $G_s$ is a ladder of length at least $2$.
    
    In both cases, for every link we add, we will be able to find at least three links in $\delta(G_s)$ (and possibly $\delta(G_t))$), such that, by adding links to cover $(G_s)$ (and possibly $(G_t)$), we can charge the addition of these new links to the links in $\delta(G_s)$ (and possibly $\delta(G_t)$) for a $\frac{4}{3}$ fractional increase.

    \paragraph*{Case: $G_s$ contains a hexagon.}
    First, suppose $G_s\in \{G_1,\dots, G_k\}\backslash \cH_i$  contains a hexagon.  By definition, a hexagon has exactly four degree 3 nodes whose neighbourhoods are in the hexagon. Moreover, the subgraph induced on these nodes forms a claw (see the nodes $\{u_1,u_2,u_2,v_0\}$ in Figure~\ref{fig:hexagoncovering}). Therefore, in order for $EC_3$ to cover each of the corresponding nodes in the hexagon of $G_s$, there must be at least three links $\ell_1,\ell_2,\ell_3\in EC_3\cap E[G_s]$. By definition of $APX_i$, we must have $\ell_1,\ell_2,\ell_3\in APX_i$. 
    Let $S \subseteq G_s$ denote the square that is not covered by $\cH_i$. Since no links are necessary, there is a link $\ell\in E[S]\cap L$. See Figure~\ref{fig:hexagoncovering} for a visual example of this step.
    \begin{figure}[t!]
    \begin{center}
        \begin{tabular}{c c}
             (a)
                \begin{tikzpicture}[scale=0.8]
            % Node styles
            \tikzset{black dot/.style={draw=black, very thick, circle,minimum size=0pt, inner sep=1pt, outer sep=1pt,fill=black}}
            \tikzset{terminal/.style={draw=black,  thick,minimum size=0pt, inner sep=2.5pt, outer sep=1pt}}
            \tikzset{P node/.style={fill={rgb,255: red,20; green,154; blue,0}, draw={rgb,255: red,20; green,154; blue,0}, circle, minimum size=0pt,inner sep=1pt, outer sep=1pt}}
            \tikzset{Writing/.style={shape=circle} }
        
            % Edge styles
            \tikzstyle{witness edge}=[-, draw={rgb,255: red,195; green,0; blue,3}, very thick]
            \tikzstyle{T edges}=[-,  thick]
            \tikzstyle{new witness}=[-, draw={rgb,255: red,195; green,0; blue,3}, dashed, ultra thick]
            \tikzstyle{connected terminals}=[-, draw=black, dashed, very thick]
            \tikzstyle{P}=[-, draw={rgb,255: red,20; green,154; blue,0}, very thick]
            
    		\node [style=black dot] (0) at (0, 1.75) {};
    		\node [style=black dot] (1) at (-1.5, 0.75) {};
    		\node [style=black dot] (2) at (1.5, 0.75) {};
    		\node [style=black dot] (3) at (-1.5, -0.75) {};
    		\node [style=black dot] (4) at (1.5, -0.75) {};
    		\node [style=black dot] (5) at (0, -1.75) {};
    		\node [style=black dot] (6) at (0, 0) {};
    		\node [style=connected terminals] (7) at (2.5, 1.25) {};
    		\node [style=connected terminals] (8) at (0, -2.5) {};
    		\node [style=connected terminals] (9) at (-2.5, 1.25) {};
    		\node [style=Writing] (10) at (0.5, 1.75) {$v_1$};
    		\node [style=Writing] (11) at (0.5, 0) {$v_0$};
    		\node [style=Writing] (12) at (-2, -0.75) {$v_2$};
    		\node [style=Writing] (13) at (2, -0.75) {$v_3$};
    		\node [style=Writing] (15) at (0, -0.75) {$S$};
            \node [style=Writing] (16) at (-1.5, 1.25) {$u_1$};
    		\node [style=Writing] (17) at (1.5, 1.25) {$u_2$};
    		\node [style=Writing] (18) at (0.5, -1.75) {$u_3$};
    		\draw [style=T edges] (0) to (2);
    		\draw [style=T edges] (2) to (4);
    		\draw [style=P] (4) to (5);
    		\draw [style=T edges] (1) to (0);
    		\draw [style=T edges] (1) to (3);
    		\draw [style=P] (3) to (5);
    		\draw [style=T edges] (5) to (8);
    		\draw [style=T edges] (7) to (2);
    		\draw [style=T edges] (1) to (9);
    		\draw [style=T edges] (6) to (0);
    		\draw [style=P] (6) to (3);
    		\draw [style=P] (6) to (4);
    		\draw [style=new witness] (2) to (4);
    		\draw [style=new witness] (0) to (6);
    		\draw [style=new witness] (1) to (3);
        \end{tikzpicture}
             
             & 
             (b)
             \begin{tikzpicture}
            % Node styles
            \tikzset{black dot/.style={draw=black, very thick, circle,minimum size=0pt, inner sep=1pt, outer sep=1pt,fill=black}}
            \tikzset{terminal/.style={draw=black,  thick,minimum size=0pt, inner sep=2.5pt, outer sep=1pt}}
            \tikzset{P node/.style={fill={rgb,255: red,20; green,154; blue,0}, draw={rgb,255: red,20; green,154; blue,0}, circle, minimum size=0pt,inner sep=1pt, outer sep=1pt}}
            \tikzset{Writing/.style={shape=circle} }
        
            % Edge styles
            \tikzstyle{witness edge}=[-, draw={rgb,255: red,195; green,0; blue,3}, very thick]
            \tikzstyle{T edges}=[-,  thick]
            \tikzstyle{new witness}=[-, draw={rgb,255: red,195; green,0; blue,3}, dashed, ultra thick]
            \tikzstyle{connected terminals}=[-, draw=black, dashed, very thick]
            \tikzstyle{P}=[-, draw={rgb,255: red,20; green,154; blue,0}, very thick]
            
		\node [style=black dot] (3) at (1, 1) {};
		\node [style=black dot] (4) at (1, 0) {};
		\node [style=black dot] (5) at (2, 1) {};
		\node [style=black dot] (6) at (2, 0) {};
		\node [style=black dot] (7) at (-1, 1) {};
		\node [style=black dot] (8) at (-1, 0) {};
		\node [style=black dot] (9) at (-2, 0) {};
		\node [style=black dot] (10) at (-2, 1) {};
		\node [style=connected terminals] (11) at (-2.75, 1.5) {};
		\node [style=connected terminals] (12) at (-2.75, -0.25) {};
		\node [style=connected terminals] (13) at (3, 1.5) {};
		\node [style=Writing] (14) at (-1.75, 1.25) {$v_1$};
		\node [style=Writing] (15) at (-0.75, 1.25) {$v_2$};
		\node [style=Writing] (17) at (1.25, 1.25) {$v_r$};
		\node [style=Writing] (18) at (2, 1.25) {$v_{r+1}$};
		\node [style=Writing] (19) at (-1.75, -0.25) {$u_1$};
		\node [style=Writing] (20) at (-0.75, -0.25) {$u_2$};
		\node [style=Writing] (22) at (1.25, -0.25) {$u_r$};
		\node [style=Writing] (23) at (2.25, -0.25) {$u_{r+1}$};

		\draw [style=T edges] (7) to (8);
		\draw [style=witness edge] (8) to (9);
		\draw [style=witness edge] (7) to (10);
		\draw [style=T edges] (9) to (10);
		\draw [style=T edges] (3) to (4);
		\draw [style=T edges] (10) to (11);
		\draw [style=T edges] (9) to (12);
		\draw [style=witness edge] (3) to (5);
		\draw [style=T edges] (4) to (6);
		\draw [style=T edges] (5) to (6);
		\draw [style=T edges] (5) to (13);
		\draw [style=connected terminals] (7) to (3);
		\draw [style=connected terminals] (8) to (4);
        \end{tikzpicture}
        \end{tabular}
    \end{center}
     \caption{(a) A hexagon, with uncovered square $S$ shown in green edges. The nodes $v_0,v_1,v_2$ and $v_3$ are degree three and must be covered by links that are in $E(S)$ (example links are shown with dashed red lines). To cover $S$ we can select an edge $\ell\in E(S)$ and add it to the solution, charging its addition to the dashed edges in $G_s$.
     \\
     (b) A ladder of length at least 2. Edges that are not links are shown in red, and edges that are links are shown in black. In this case, both of the squares $S_1$ and $S_r$ may be blockers depending on which links are in our initial partial solution, since the nodes $v_1,u_1,$ and $v_{r+1}$ are degree 3 but the edges $v_1v_2$, $u_1u_2$, and $v_rv_{r+1}$ are not links. }%, with corners $v_2$, $v_4$, and $v_6$.}
     \label{fig:hexagoncovering}
\end{figure}
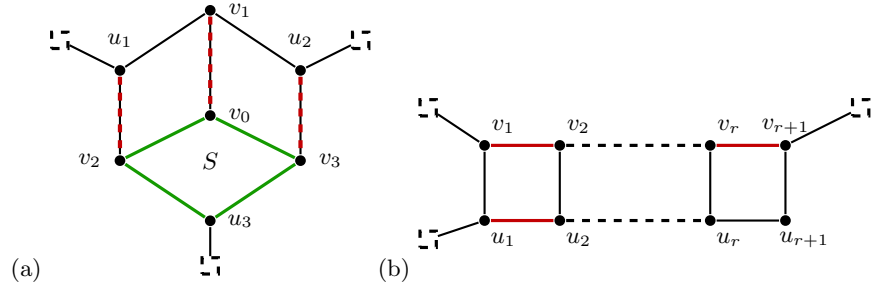
    
    We set $APX_{i+1} \leftarrow APX_{i}\cup \{\ell\}$. Therefore, $\cH_i\subsetneq \cH_{i+1}$, since $\cH_{i+1}$ contains $G_s$, and $APX_{i+1}$ satisfies condition (1). By construction, we have 
    \begin{align*}
        &|APX_{i+1} \backslash E[G_s]| = |APX_i \backslash E[G_s]|, \hspace{13pt} |APX_{i+1}\cap E[G_s]| = |APX_i\cap E[G_s]| + 1
    \end{align*}
    Therefore, we have that 
    \begin{align*}
        3 &\leq |EC_3 \cap (Y_{i+1}\cup \cH_{i+1})| - |EC_3 \cap (Y_{i}\cup \cH_{i})| \coloneqq \Delta \\
        4 &\leq |APX_{i+1} \cap (Y_{i+1}\cup \cH_{i+1})| - |APX_i \cap (Y_{i}\cup \cH_{i})| = \Delta +1
    \end{align*}
    So the left hand side of our desired inequality increases by at least $\Delta+1 $, while the left hand side increases by at least $\frac{4}{3}\Delta$. We can see that $\Delta \geq 3$ since $\ell,\ell_1,\ell_2\in EC_3$, so the desired inequality holds.
    
    \paragraph*{Case:  the set $\{G_1,\dots, G_k\} \backslash \cH_i$ contains no hexagons at all. } So $G_s$ is a ladder of length at least 2 with no hexagon.
    % Set $\cH_{i+1} \leftarrow \cH_{i}\cup \{G_s\}$, and $Y_{i+1}\leftarrow Y_i$. 
    Applying Theorem~\ref{thm:ChainDecomposition}, let $G_s$ have consistent labelling $v_1,\dots, v_{r+1}, u_1,\dots, u_{r+1}$. Enumerate the squares of $G_s$ as $S_j = v_ju_ju_{j+1}v_{j+1}$, for $j=1,\dots,r$. Furthermore, we assume that if exactly one of $S_1$ and $S_r$ are not covered by $APX_i$, the consistent labelling of $G_s$ is given such that $S_r$ is uncovered. 
    
    To find $APX_{i+1}$, we will cover $G_s$ in a sequence of steps, iteratively guaranteeing that the squares of $G_s$ are covered one by one in increasing order of index. To do this, we will begin with a partial solution $F_0\coloneqq APX_i$ and will find a sequence of partial solutions $F_0,F_1,F_2,\dots$. 
    These solutions will satisfy the requirement that $F_\sigma$ covers the obstructions covered by $F_0,\dots, F_{\sigma-1}$, and  if $F_{\sigma-1}$ contains a square $S_j$ of least index that is not covered, then $F_\sigma$ will cover $S_j$. Therefore, after at most $r+1$ iterations, we will have computed $F_0,\dots, F_\sigma$, and $F_\sigma$ will cover every obstruction of $G_s$ and we can set $APX_{i+1} \coloneqq F_\sigma$.   
    
    We will show that if we must add a link $\ell$ to $F_\sigma$ we can either remove an edge from $APX_i$ or find a set of at least 3 links $\ell_1,\ell_2,\ell_3\in APX_i$ to charge the addition of $\ell$ to.  We guarantee the following about $\ell_1,\ell_2,\ell_3$ and $F_\sigma$;
    \begin{itemize}
        \item[(a)] $\ell_1,\ell_2,\ell_3\in EC_3$;
        \item[(b)] $\ell_1,\ell_2,\ell_3\in E(G_s)\cup \delta(G_s)$;
        \item[(c)] $\ell_1,\ell_2,\ell_3$ can not been have charged to before iteration $i$.
        \item[(d)] $|F_\sigma| \leq \frac{4}{3}|EC_3|$.
    \end{itemize}
    
    As previously stated, we begin with $F_0\coloneqq APX_i$. When we consider $F_\sigma$, if every square of $G_s$ is covered, then we let $APX_{i+1}\coloneqq F_\sigma$, and we are done. Otherwise, let $S_j$ be the square of least index that is not covered by $F_\sigma$. 

    Before we continue, we define some notation. For partial solution $F_\sigma$, we call the square $S_1\subseteq E(G_s)$ a \emph{blocker} of $F_\sigma$ when  
    a) $S_1$ is not covered by $F_\sigma$, (hence $S_r$ is not covered) and; 
    b) $v_1v_{2}\notin L$  if $v_1$ is degree $3$, and $u_1u_2\notin L$ if $u_1$ is degree $3$. Similarly, we call the square $S_r\subseteq E(G_s)$ a blocker of $F_\sigma$ when a) $S_r$ is not covered by $F_\sigma$, and; 
    c) $v_rv_{r+1}\notin L$ if $v_{r+1}$ if $v_{r+1}$ is degree 3, and $u_ru_{r+1}\notin L$ when $u_{r+1}$ is degree 3. See Figure~\ref{fig:hexagoncovering} for an example where $S_1$ and $S_r$ both  satisfy part (2) the definition of blockers.  
    
    Suppose that $S_1$ and $S_r$ both satisfy the part (b) of the definition of blockers for $F_\sigma$. We will show that if $r=2$ or $3$ that we can assume that at most one of $S_1$ and $S_r$  satisfy part (a) of the definition of blocker for $F_\sigma$, that is, that at most one is not covered by $F_\sigma$. For $r=2$, the nodes $v_2=v_r$ and $u_2=u_r$ are degree 3 and must be covered by $APX_i$, and thus by $F_\sigma$, therefore, at least one of $S_1$ or $S_2$ will contain a link and thus will be covered and cannot be a blocker. 
    
    If $r=3$, if $S_1$ and $S_2$ both satisfy the part (b) of the definition of blockers for $F_\sigma$, it is not hard to see that the edges $v_2u_2$ and $v_3u_3$ must b  e links,  since no links are necessary. If $S_1$ and $S_2$ satisfy part (a) of of the definition of blockers for $F_\sigma$, then for $F_\sigma$ to cover all degree 3 nodes while $S_1$ and $S_3$ are uncovered we must have $v_{2}v_{3},u_{3}u_{2}\in F_\sigma$ and thus in $L$. So we replace $F_\sigma$ with $F_\sigma  \cup \{ v_{2}u_{2}, v_{3}u_{3}\}\backslash \{v_{2}v_{3}, u_{2}u_{3}\}$.

    To show this, we consider the cases for $\sigma$: when $j=1$, when $j\in \{2,\dots, r-1\}$, and when $j=r$. 
    %%%%%%%%%%%%%%%%%%%%
    % i=1
    %%%%%%%%%%%%%%%%%%%%
    \paragraph{Case:} $j=1$. Note that in this case $\sigma=0$.
    Since $G_s$ is degree at least 3, we assume without loss of generality that our labelling is such that $v_1$  is incident to an edge $v_1w \in \delta(G_s)\cap F_0$. By our assumption on the labelling of $G_s$, it must be that both $S_1$ and $S_r$ are uncovered by $F_\sigma$. 
    
    If exactly one of $S_1$ and $S_r$ is a blocker, then we assume that our labelling is such that $S_1$ is not a blocker. We consider the cases  arising from whether $S_1$ is a blocker or not. %More specifically, by our assumption above, we consider whether $v_1v_2\in L\cap E[S_1]$ ($S_1$ not a blocker) or not ($S_1$ is blocker).
    \begin{itemize}
        \item Case: $S_1$ not a blocker, thus at least one of $v_1v_2,u_1u_2\in L$. Assume without loss of generality that $v_1v_2\in L$. Set $F_1\leftarrow F_0 \cup \{v_1v_2\}$. 
        $EC_3$, and thus $F_0$ covers every degree $3$ node, so we have $u_2u_3,v_2v_3\in F_0\cap EC_3$. We charge the addition of $v_1v_2$ to $v_1w,v_2v_3,u_2u_3\in EC_3$.

        We now help ensure that $v_2v_3$ and $u_2u_3$ are not charged to again. If $r=2$, then $S_r$ is covered.% and thus not a blocker, which is a contradiction. 
        % If $r=3$, since $S_3$ is blocker, we must have the edge $u_2u_3$ since no links are necessary. Since $v_2$ is covered by $v_1v_2$ and $S_2$ is covered by $u_2u_3$, we can  change $F_1\leftarrow F_1 \cup \{v_3u_3\} \backslash\{v_2v_3\}$, covering $S_3$. 
        
        If $r>2$ and $S_3$ is not covered by $F_0$.  Since no links are necessary, there is a link $\ell\in L\cap \delta(v_3)\cap E[S_3]$ that is not in $F_0$. We change $F_1\leftarrow F_1 \cup \{\ell\} \backslash \{v_2v_3\}$. Note that $F_1$ now covers both $S_1$ and $S_3$, in addition to the obstructions previously covered for no change in size, so we do not need to find links in $APX_i$ to charge $\ell$ to. 

        To help ensure that $v_1w\in \delta(G_s)\cap APX_i$ is charged to only once, consider the following cases.
        \begin{itemize}
            \item Case: if $w\in R$ or if $w\in G_x$ such that every obstruction of $G_s$ is covered, we are done with this iteration, and we see that  $v_1w$ is in $Y_{i+1}$. 
            \item Case: $w\in G_x$ for some $G_x\notin \cH_{i+1}$. Let $S\subseteq G_x$ be the square of $G_x$ that contains $w$. If $S$ is covered by $F_1$, we are done. If not, since no link is necessary, there is a link $\ell' \in E(S)\cap L$ that is adjacent to $v_1w$. We set $F_1\leftarrow F_1\cup \{\ell'\} \backslash\{v_1w\}$. So $F_1$ now covers both $S_1$ and $S$, in addition to the obstructions previously covered, for no increase in size. Thus, we do not need to find links in $EC_3$ to charge $\ell'$ to. We see that $G_x$ is in $\cH_{i+1}$. 
        \end{itemize}

        \item Case: $S_1$ is a blocker, thus $v_1v_2\notin L\cap E[S_1]$ (and possibly $u_1u_2\notin L\cap E[S_1]$). By our assumption on the labelling of $G_s$, both $S_1$ and $S_r$ are blockers, thus $r\geq 4$ by the discussion on blockers above. Furthermore, since $S_1$ is a blocker, $v_1u_1,v_2u_2$ must be in $E(S_1)\cap L$ since no links are necessary, and $v_2v_3,u_2u_3\in F_0$ since $F_0$ covers all degree 3 nodes. %, and $u_1u_2$ may only be in $E(S_1)\cap L$ if $u_1$ is not a corner of $G_s$. 
        We consider cases for which edges are in $F_0$ and $L$.        
        \begin{itemize}
            \item Case: $v_3u_3\in L\cap F_0$. We set $F_1\leftarrow  F_0 \cup \{v_2u_2\}\backslash \{u_2u_3\}$. Since $v_3u_3\in F_0$, the obstructions of $APX_i$ are still covered. 
            \item Case: $v_3u_3\in L\backslash F_0$. We set $F_1\leftarrow F_0 \{v_2u_2,v_3u_3\}\backslash\{v_2v_3,u_2u_3\}$. Obviously, all obstructions of $APX_i$ are still covered.
        \end{itemize}
        In either case, the size of $F_1$ is no more than the size of $F_0$ and we do not need to find links in $EC_3$ to charge $v_2v_3$ to.
        
        In the following cases, $v_3u_3\notin L$, and so $v_3v_4,u_3u_4\in L$ since no links are necessary.
        \begin{itemize}
            \item Case: at least one of $v_3v_4$ or $u_3u_4\in F_0$ (without loss of generality $v_3v_4\in F_0$). We set $ F_1 \leftarrow F_0 \cup \{v_2u_2\}\backslash \{v_2v_3\}$, so $F_1$ covers $S_1$ and $S_2$ with $v_2u_2$, in addition to the obstructions it previously covered, for no change in size, so we do not need to find links in $EC_3$ to charge $v_2u_2$ to. 

            \item Case: $v_3v_4,u_3u_4 \notin F_0$. We set $F_1 \leftarrow F_0 \cup \{v_2u_2,v_3v_4, u_3u_4 \}$ $\backslash \{v_2v_3, u_2u_3\}$. So we must find at least 3 links in $APX_i$ to charge one of $v_2u_2,v_3v_4, u_3u_4 $ to. We know that $r\geq 4$, and that both $v_4$ and $u_4$ are covered by $F_0$ so at least one of $v_4u_4,v_4v_5,u_4u_5$ must be in $F_0$.

            If $v_4u_4\in F_0$, then   we can charge $v_2u_2$ to $v_2v_3, u_2u_3, v_4u_4\in EC_3\cap APX_i$. Note that $F_1$ now covers $S_1$, $S_2$, $S_3$, and $S_4$. 

            If $v_4u_4\notin F_0$, then $v_4v_5,u_4u_5\in F_0$ since $v_4$ and $u_4$ are covered by $F_0$. In this case we charge $v_2u_2$ to $v_2v_3, u_2u_3, v_4v_5\in APX_i$. If $r< 6$ or $S_5$ is covered by $F_0$, then we are done with this iteration. So assume that $r\ge 6$, and $S_5$ is not covered by $F_0$. Then there must be a link $\ell\in E(S_5)\cap L\cap \delta(v_5)$. We set $F_1\leftarrow F_1 \cup \{\ell\}\backslash\{v_4v_5\}$.
        \end{itemize}
    \end{itemize}
    %%%%%
    % i=2,...,r-2
    %%%%%
    \paragraph{Case:} $j \in \{2,\dots, r-1\}$. $S_j$ is not covered, but $v_{j-1},v_j,u_{j-1},$ and $u_j$ are covered, since $APX_i$ covers every degree $3$ node. Therefore, it is clear that $u_{j-1}u_j,$ $v_{j-1}v_j,$ $u_{j+1}u_{j+2},$ $v_{j+1}v_{j+2} \in APX_i\cap F_\sigma$. Since no link is necessary, we know at least one of $v_jv_{j+1},$ $v_{j+1}u_{j+1}, u_ju_{j+1}\in L$. Let $\ell \in \{v_jv_{j+1},$ $v_{j+1}u_{j+1},$ $u_ju_{j+1}\}\cap L$ be a link from these three. Set $F_{\sigma+1} \leftarrow F_\sigma \cup \{\ell\}$, and charge $\ell$ to $u_{j-1}u_j,$ $v_{j-1}v_j,$ $u_{j+1}u_{j+2}\in APX_i$. 
    
    To help ensure that no links of $APX_i$ are charged twice, we consider the case $j<r-1$, and $S_{j+2}$ is not covered by $F_{\sigma+1}$. Then there is $\ell' \in E(S_{j+2})\cap L \cap \delta(u_{j+2})$. We set $F_{\sigma+1}\leftarrow F_{\sigma+1} \cup\{\ell'\} \backslash \{u_{j+1}u_{j+2}\}$, so $F_{\sigma+1}$ covers both $S_j$ and $S_{j+2}$, in addition to the obstructions it previously covered. If $r=j-1$, then every square of $G_s$ is covered by $F_{\sigma +1}$.
    
    %%%%%
    % i=r
    %%%%%
    \paragraph{Case:} $j=r$. Since $G_s$ is degree at least $3$, there is a $w\in V(G)$ such that $v_{r+1}w$ or $u_{r+1}w\in APX_i\cap \delta(G_s)$, we assume without loss of generality that  $u_{r+1}w\in APX_i \cap \delta(G_s)$. Since no link is necessary, there is a link $\ell\in S_{r}$ incident to $u_{r+1}$. Set $F_{\sigma+1} \leftarrow F_\sigma \cup \{\ell\}$ and charge the addition of $\ell$ to $u_{r-1}u_r,v_{r-1}v_r,u_{r+1}w\in APX_i$.
    
    To help ensure that $u_{r+1}w\in \delta(G_s)\cup APX_i$ is not charged to again, we consider if $w\in R$ or $w\in V(G_x)$ for some $G_x\in \{G_1,\dots, G_k\}$. If $w\in R$, we are done with this iteration, and $u_{r+1}w$ is in $Y_{i+1}$. If not, then there is some $G_x\in \{G_1,\dots, G_k\}$ such that $w\in V(G_x)$.

    If $w\in G_x$ such that every obstruction of $G_s$ is covered, then we are done with this iteration. So assume this is not the case, and let $S\subseteq  G_x$ be the square that contains $w$. if $S$ is covered by $F_{\sigma+1}$, we are done with this iteration. So assume that $S$ is not covered by $F_{\sigma+1}$. Since no link is necessary, there is a link $\ell'\in E(S)\cap L$ that is incident to $w$. We set $F_{\sigma+1}\leftarrow F_{\sigma+1}\cup \{\ell'\} \backslash \{u_{r+1}w\}$. So $F_{\sigma+1}$ now covers both $S_{r}$ and $S$ in addition to the obstructions previously covered, for no change in size. Thus, we do not need to find links in $APX_i$ to charge $\ell'$ to. We see that $G_x$ is in  $\cH_{i+1}$.    
    
    It remains to show that no link in $EC_3$ is charged to more than once. So assume for contradiction that link $\ell\in EC_3$ is charged to a second time in step $i+1$ when we are constructing $APX_{i+1}$. Let $G_s\in \{G_1,\dots, G_k\} \backslash \cH_{i}$ be the ladder that is being covered in step $i$. First, suppose $\ell\in E(G_S)$. By property (b), this implies that $G_s$ was covered in a previous iteration, and thus $G_s\in \cH_i$, a contradiction. 

    So suppose that $\ell\in \delta(G_s)$. By construction $\ell$ is charged to only when the incident square in $G_s$ is not covered by $APX_i$, and by $EC_3$. However, since $\ell$ has been charged to before, its other endpoint must be some $G_t\in \{G_1,\dots,G_k\}$ that has a square incident to $\ell$ that was uncovered before that iteration. However, in our construction, when we charge to $\ell\in \delta(G_t)$ the first time, we replace $\ell$ by a link in $G_s$ if the incident square in $G_s$ is uncovered. Therefore, $G_s\in \cH_i$, a contradiction.
\end{proof}
With Lemma~\ref{lem:uncovered} and Lemma~\ref{lem:simplifiedapproximation}, we have the ingredients necessary to prove Theorem~\ref{thm:unweighted}.
\begin{proof}[proof of Theorem~\ref{thm:unweighted}]
    Given an unweighted $(n-4)$-Node Connectivity Augmentation instance, we apply Theorem~\ref{thm:from_connectivity_to_obstructions} to obtain an instance $(G,4,L)$ of unweighted $d$-Obstruction Covering. We will show how to find a solution $APX$ with $|APX| \leq \frac{4}{3}|opt|$. Using Theorem~\ref{thm:ChainDecomposition}, we decompose $G$ into node-disjoint hexagons and ladders $G_1,\dots, G_k$, and lonely nodes $R$. We will compute the solution  $APX$ by adding partial solutions.

    By applying Lemma~\ref{lem:degree123}, we compute a minimum cardinality subset of links $EC_3\subseteq L$ that covers the obstructions contained in every subgraph $G_i$ of degree $1$ and $2$ and every degree $3$ node.  Let $\cH_0 \subseteq \{G_1, \dots, G_k\}$ denote the ladders and hexagons covered by $EC_3$. We set $APX_0 \coloneqq EC_3\cap E[\cH_0]$, and remove these edges from $G$ to get new $4$-Obstruction Covering instance $(G'=(V,E'),4,L')$, where $E'=E\backslash APX_0$, and $L' = L \backslash APX_0$. We apply Theorem~\ref{thm:ChainDecomposition} to find decomposition $R',G_1', \dots, G_{k_1}'$. Observe that by construction of $G'$, $G_1', \dots, G_{k_1}'$ contain only ladders of degree $3$ and $4$. 
    
    We apply Lemma~\ref{lem:uncovered} to $(G',4,L')$ with partial solution $EC_3\backslash APX_0$, to find a new partial solution $F_1$. We also find a subset of ladders $\cH_1\subseteq \{G_1',\dots, G_{k_1}'\}$ whose obstructions are covered by $F_1$, and a set of links $Y_1\subseteq F_1 \cap \delta(G'[\cH_1])$ where every  $y \in Y_1$ has an endpoint in $R'$. We set $APX_1 \coloneqq F_1 \cap(Y_1 \cup E'[\cH_1])$, and remove $APX_1$ from $G'$ to get new $4$-Obstruction Covering instance $(G'',4,L'')$ where $G''=(V',E'\backslash APX_1)$ and $L''  = L'\backslash APX_1$. We apply Theorem~\ref{thm:ChainDecomposition} to find decomposition $R'',G_1'', \dots, G_{k_2}''$. Observe that by Lemma~\ref{lem:uncovered},  $G_1'', \dots, G_{k_2}''$ contains only ladders of length $1$ and degree $3$ or $4$. 
    
    Lastly, we apply Lemma~\ref{lem:simplifiedapproximation} with $F_1\backslash APX_1$ as our partial solution, to find a feasible solution $APX_2$. We denote the ladders covered by $APX_2$ by $\cH_2 \coloneqq \{G_1'',\dots, G_{k_3}''\}$ .
    
    We define $APX \coloneqq APX_0\cup APX_1\cup APX_2$. By construction $APX_0$, $APX_1$, and $APX_2$ are pairwise disjoint. To see that $APX$ is a feasible solution to $(G,4,L)$, first observe that $\cH_0,\cH_1,$ and $\cH_2$ are also pairwise disjoint and $\cH_0\cup \cH_1 \cup \cH_2 = \{G_1, \dots, G_k\}$, so $APX$ covers every obstruction of $\{G_1, \dots, G_k\}$, in addition, $EC_3$ covers every obstruction in $R$, and $APX$ covers every obstruction covered by $EC_3$ so $APX$ is feasible.
    \begin{align*}
        APX =&  APX_0\cup APX_1\cup APX_2 = APX_0 \cup (F_1\cap (Y_1\cup E'[\cH_1]) ) \cup APX_2 \\
        =&  (EC_3\cap E[H_0]) \cup (F_1\cap (Y_1\cup E[\cH_1]) ) \cup APX_2  
    \end{align*}
    Therefore, $|APX|$ is equal to
    \begin{align*}
        =& |APX_0| + |F_1\cap (Y_1\cup E[\cH_1])| + |APX_2| \\
        \leq& |APX_0| + \frac{4}{3}|(EC_3\backslash APX_0)\cap (APX_1)| + \frac{4}{3}|F_1\backslash(APX_1)|\\
        =& |APX_0| + \frac{4}{3}|(EC_3\backslash APX_0)\cap (APX_1)| + \frac{4}{3}|(EC_3\backslash APX_0)\backslash(APX_1)|\\
        <& \frac{4}{3}|APX_0| +\frac{4}{3}|EC_3\backslash APX_0| = \frac{4}{3}|EC_3| \leq \frac{4}{3}|opt|
    \end{align*}
    Where the inequality above follows by application of the inequalities found  in Lemmas~\ref{lem:uncovered}, and~\ref{lem:simplifiedapproximation}. The second equality follows from the equality in Lemmas~\ref{lem:uncovered}. The final inequality above follows since $opt$ is a feasible solution to the $N$-edge cover problem that $EC_3$ solves.
\end{proof}

    \section{ Missing proofs of Section~\ref{sec:hardness}}
\label{sec:hardness-proofs}

Along this section, we will refer to the following special case of the 3-SAT problem, known as 3-SAT-4. 

\begin{definition}[3-SAT-4 problem] The 3-SAT-4 problem is the special case of 3-SAT where each variable appears exactly four times in the boolean formula $\I$. The corresponding optimization version of the problem is denoted as MAX-3-SAT-4. \end{definition}

The 3-SAT-4 problem is known to be NP-hard, and furthermore its optimization version is known to be APX-hard~\cite{berman2003approximation}.

\subsection{ Proof of Lemma~\ref{lem:3sat22hard}}
\label{sec:3sat22hard}

    Consider a 3-SAT-4 instance $\I_4$. Our goal is to construct a 3-SAT-(2,2) instance $\I_{2,2}$ in such a way that $\I_{2,2}$ is satisfiable if and only if $\I_4$ is satisfiable.
    
    We distinguish three types of variables in $\I_4$: the variables that occur exactly four times as positive or four times as negative literals, the variables that occur three times as positive literals or three times as negative literals, and the variables that occur twice as positive literals. We will construct the new instance $\I_{2,2}$ by initially setting it to be equal to $\I_4$ and then remove or change literals of the first two types in order to turn the formula into a 3-SAT-(2,2) instance.

    First, suppose a variable $x$ occurs as $4$ positive literals in $\I_{2,2}$ (the case of negative literals can be handled symmetrically). In this case, we can assume that $x$ is true and remove the clauses containing $x$ from $\I_{2,2}$. For the variables left in $\I_{2,2}$ that occur less than 4 times after the removal, we will add dummy clauses and variables to bring them back. For each such literal $y$ that was removed, we add dummy variables $d_{y,1}$ and $d_{y,2}$ and the clauses \[(y\lor d_{y,1} \lor d_{y,2}) \wedge (d_{y,1} \lor \bar d_{y,1}\lor \bar d_{y,1}) \wedge (d_{y,2} \lor \bar d_{y,2}\lor \bar d_{y,2}).\] We remark that, if more than one literal associated to the same variable were removed, we create different variables for them in the process. If the variables $d_{y,1}$ and $d_{y,2}$ are set to true, then the three clauses are satisfied regardless of the value of the variable corresponding to literal $y$, so we obtain an equivalent formula without variables occurring four times as positive literals.

    Suppose that $x$ occurs as $3$ positive literals in $\I_{2,2}$ (the case of negative literals can be handled symmetrically). We introduce new variables $y_x,z_x,w_x$ to $\I_{2,2}$, then select two arbitrary positive literals of $x$ and replace one with $y_x$ and the other with $z_x$. We then add the clauses
    \[(x\lor \bar y_x\lor  \bar y_x) \wedge (y_x\lor \bar z_x\lor  \bar z_x) \wedge (z_x\lor  \bar w_x\lor  \bar w_x) \wedge (w_x\lor  w_x\lor  \bar x).\]
    These new clauses are satisfied if and only if $x=y=z=w$. Moreover, it is not hard to see that now there are two positive literals and two negative literals for $x,y,z,w$. In conclusion, $\I_{2,2}$ is equivalent to $\I_4$ and a valid 3-SAT-(2,2) instance.

\subsection{ Proof of Proposition~\ref{prop:GILIfeasible}}
\label{sec:GILIfeasible}

First of all, since $G_{\I}$ is a $3$-regular graph, it does not contain $K_{1,4}$ as subgraph. Therefore, the only forbidden subgraph we need to show does not exist is $K_{2,3}$. Assume, for the sake of contradiction, that $F$ is a $K_{2,3}$ subgraph of $G_{\I}$. Let $v\in F$ be degree $3$ in $F$, we will distinguish two cases:

    \begin{itemize}
        \item If $v$ is a clause node in $G_{\I}$, say corresponding to clause $C$, then its neighbors are three variable nodes $x_i,y_j,z_k$ for $i,j,k\in \{1,3,4,6\}$, for some (possibly equal) variables $x,y,z$. If any two such variables are different, then their only common neighbor is $v$, and hence it cannot be part of a $K_{2,3}$. On the other hand, if the three neighbors of $v$ belong to the same subgraph $H_x$, by construction there are always two such variable nodes without common neighbors other than $v$ (either $x_1$ and $x_4$, or $x_3$ and $x_6$), hence again $v$ cannot be part of a $K_{2,3}$
        %and $y$ are different variables, then the neighbourhood of $x_i$ and $y_j$ can only intersect at $C$ by construction of the subgraphs $H_x$ and $H_y$. So we must have $x=y=z$. Without loss of generality let $v$ be adjacent to $x_1, x_6, x_3$ (the other cases can be handled by symmetry of 2 positive literals and 1 negative and relabelling of the nodes in $H_x$). It is clear that the neighbourhoods of $x_1,x_6,x_3$ intersect only at $C$. Therefore, $v$ is not part of a $K_{2,3}$ subgraph.
        \item If $v$ is a variable node in some subgraph $H_x$, we consider two cases: if $v=x_1$ (the cases $x_3, x_4$ or $x_6$ are symmetric), its neighbors in $G_{\I}$ are $x_2$, $x_6$ and some clause node, but the only node connected to $x_2$ and $x_6$ is $x_5$, which is not connected to any clause node; hence, $x_1$ cannot be part of a $K_{2,3}$. If $v=x_2$ (the case of $x_5$ is symmetric), its neighbors in $G_{\I}$ are $x_1$, $x_5$ and $x_3$ but they do not have two neighbors in common. Hence, no variable node can be part of a $K_{2,3}$.
    \end{itemize}
    Thus, $(G_{\I},4,L_{\I})$ is a valid instance of the $4$-Obstruction Covering problem.

\subsection{ Proof of Theorem~\ref{thm:n-dobscovhard}}
\label{sec:n-dobscovhard}
Given a $3$-SAT-$(2,2)$ instance $\I$, we first create the $(n-4)$-Obstruction Covering instance $(G_{\I},4,L_{\I})$ as described in the proof of Theorem~\ref{thm:n-4hard}. We will modify it to obtain an $d$-Obstruction Covering instance $(G_{\I}',d,L_{\I}')$, and show that finding an optimal covering of the obstructions in $G_{\I}'$ implies whether a satisfying assignment exists for $\I$ or not. 

Recall that the instance $(G_{\I},4,L_{\I})$ has a node for every clause $C$, and a subgraph $H_x$ for every variable $x\in\I$. For every clause node $C$, we add $d-4$ dummy nodes to $G_{\I}'$ that are adjacent to $C$, so that $C$ is degree $d-1$. For subgraph $H_x$, with nodes $x_1,\dots,x_6$, we add $d-4$ dummy nodes to $G_{\I}'$ that are adjacent to $x_3$ and $d-4$ dummy nodes adjacent to $x_6$. We add $d-4$ nodes to $G'$ that are adjacent to both $x_1$ and $x_5$, which we denote by $y_1,\dots,y_{d-4}$, and we add $d-4$ nodes that are adjacent to both $x_2$ and $x_4$, which we denote by $z_1,\dots, z_{d-4}$. None of these new edges are added to $L_{\I}'$, so in fact we have $L_{\I}'= L_{\I}$. 
    
Notice that every clause and variable node from $G_{\I}$ now is degree $d-1$, and thus is part of a $K_{1,d-1}$ obstruction. Moreover, the nodes $x_1, x_2, x_5, x_6, y_1,$ $\dots,y_{d-4}$ form a $K_{2,d-2}$ obstruction, as do the nodes  $x_2, x_3, x_4, x_5, z_1, \dots, z_{d-4}$. See Figure~\ref{fig:SAT-obstruction}~(Right). Since no set of at least three nodes has more than two neighbours in common in $G_{\I}'$, these are the only obstructions in the instance.

We will prove now that $(G_{\I}',d,L_{\I}')$ is a valid $n-d$ Obstruction Covering instance. Recall that the forbidden subgraphs of a $d$-Obstruction Covering instance is a $K_{i,j}$ subgraph where $i+j > d$. First, the dummy nodes adjacent to clause nodes are degree $1$, and the clause node are degree $d-1$, so these dummy nodes cannot be part of a forbidden subgraph. Similarly, for variables $x\in \I$, the dummy nodes adjacent to $x_3$ and $x_6$ are not part of a forbidden subgraph. The nodes $y_1, \dots, y_{d-4}$ are degree $2$ and thus cannot be part of a forbidden subgraph $K_{i,j}$ for $i>2$, and the same holds for $z_1, \dots, z_{d-4}$. Since the maximum degree of $G_{\I}'$ is $d-1$, these nodes cannot be part of a $K_{1,j}$ forbidden subgraph. Thus, they can only possibly be part of a $K_{2,j}$ forbidden subgraph. The neighbours of $y_1, \dots, y_{d-4}$ are $x_1$ and $x_5$ which also share a neighbourhood of $x_2$ and $x_6$, which together form a $K_{2,d-2}$ subgraph. In conclusion, the are no forbidden subgraphs in $G_{\I}'$, and hence $(G_{\I}',d,L_{\I})$ is a valid $d$-Obstruction Covering instance.

Since $L_{\I}'=L_{\I}$ and given the structure of the obstructions, $F\subseteq L_{\I}' = L_{\I}$ is a feasible solution to $(G_{\I}',d,L_{\I}')$ if only if $F$ is a feasible solution to $(G_\I,4,L_\I)$, and thus by Theorem~\ref{thm:n-4hard} the result holds.

As a final remark, notice that, if the 3-SAT-(2,2) instance $\I$ has $t$ clauses (and consequently $\frac{3}{4}t$ variables), the size of $G_{\I}'$ is $O(t\cdot d)$. In order for this procedure to be a polynomial-time reduction, it is required that $d \in O(n^{1-c})$ for some constant $c>0$, as otherwise $t$ must be subpolynomial with respect to $n$, leading to superpolynomial time for the reduction algorithm.

\subsection{ Proof of Theorem~\ref{cor:n-dobscovapxhard}}
\label{sec:obscovapxhard}

We will first prove that MAX-3-SAT-(2,2) is APX-hard, and then use this result to conclude that the $4$-Obstruction Covering problem is APX-hard as well; the proof of Theorem~\ref{cor:n-dobscovapxhard} will be a direct consequence of that by construction.

\begin{lemma}\label{lem:3SAT22apxhard} 
    The MAX-3-SAT-(2,2) problem is APX-hard. 
\end{lemma}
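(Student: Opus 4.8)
The plan is to give an approximation-preserving (L-)reduction from MAX-3-SAT-4, which is known to be APX-hard~\cite{berman2003approximation}, following closely the structure of the NP-hardness reduction in the proof of Lemma~\ref{lem:3sat22hard}, but now tracking the number of \emph{satisfied} clauses rather than just satisfiability. First I would recall that in an L-reduction from a problem $A$ to a problem $B$ one needs a polynomial-time map $f$ from instances of $A$ to instances of $B$, a polynomial-time map $g$ from solutions of $B$ back to solutions of $A$, and constants $\alpha,\beta>0$ such that $\mathrm{OPT}_B(f(\I)) \le \alpha\cdot \mathrm{OPT}_A(\I)$ and, for every solution $y$ of $f(\I)$, $\mathrm{OPT}_A(\I) - \mathrm{val}_A(g(y)) \le \beta\cdot(\mathrm{OPT}_B(f(\I)) - \mathrm{val}_B(y))$. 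Since L-reductions compose with PTAS-preserving reductions and APX-hardness is closed under them, producing such a reduction suffices.

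Next I would reuse the gadget construction of Lemma~\ref{lem:3sat22hard} verbatim: given a 3-SAT-4 instance $\I_4$, build $\I_{2,2}$ by (i) for each variable occurring four times with the same polarity, fixing it and replacing each of its removed occurrences $y$ in a still-present clause by a fresh block of three clauses $(y\lor d_{y,1}\lor d_{y,2})\wedge(d_{y,1}\lor\bar d_{y,1}\lor\bar d_{y,1})\wedge(d_{y,2}\lor\bar d_{y,2}\lor\bar d_{y,2})$, and (ii) for each variable $x$ occurring three times with the same polarity, introducing $y_x,z_x,w_x$ and the four equality-enforcing clauses. The key observations to check are: the transformation changes the number of clauses by only a constant factor per variable (so $\#\text{clauses}(\I_{2,2}) = O(\#\text{clauses}(\I_4))$, giving the $\alpha$ bound because in any 3-CNF formula at least a constant fraction of the clauses is simultaneously satisfiable, so $\mathrm{OPT}(\I_4)=\Theta(\#\text{clauses}(\I_4))$); and the ``consistency'' clauses are always satisfiable and can be made satisfied by a canonical local repair. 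Concretely, for the map $g$: given an assignment to $\I_{2,2}$, first set all dummy variables $d_{y,1},d_{y,2}$ to \texttt{true} and propagate $x=y_x=z_x=w_x$ by majority (or by the value of $x$), which can only increase the number of satisfied clauses and makes every gadget clause satisfied; then read off the induced assignment on the original variables of $\I_4$. One argues that after this repair every satisfied clause of $\I_{2,2}$ that is an \emph{original} clause corresponds to a satisfied clause of $\I_4$ and vice versa, so $\mathrm{val}(\I_4,g(y)) \ge \mathrm{val}(\I_{2,2},y) - (\text{number of gadget clauses satisfied}) + (\text{something})$; the bookkeeping gives $\mathrm{OPT}(\I_4)-\mathrm{val}(\I_4,g(y)) \le \mathrm{OPT}(\I_{2,2}) - \mathrm{val}(\I_{2,2},y)$, i.e.\ $\beta=1$.

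The main obstacle I anticipate is the careful accounting in step (i) when a variable occurs four times with the same polarity: fixing it to \texttt{true} removes clauses that may contain \emph{other} still-unprocessed literals, and one must be sure the replacement-clause trick does not interact badly with another variable that is also being fixed, and that the optimum is preserved exactly up to the gadget clauses (which are all satisfiable). The cleanest way around this is to process variables in a fixed order and to note that each original clause of $\I_4$ either survives unchanged into $\I_{2,2}$, or becomes ``absorbed'' into a gadget and is automatically satisfiable via the canonical repair; since the gadgets are independent (on disjoint fresh variables) the repairs do not conflict. With $\alpha=O(1)$ and $\beta=1$ established, the L-reduction is complete, so MAX-3-SAT-(2,2) is APX-hard; and then, exactly as in the construction of $(G_{\I},4,L_{\I})$ together with Lemmas~\ref{lem:n-4to3sat22} and~\ref{lem:3sat22ton-4} (which already translate satisfied clauses into covered obstructions and back), this APX-hardness transfers to $4$-Obstruction Covering, and via the blow-up construction of Theorem~\ref{thm:n-dobscovhard} to $d$-Obstruction Covering for all $d\ge 4$ with $d=O(n^c)$, establishing Theorem~\ref{cor:n-dobscovapxhard}.
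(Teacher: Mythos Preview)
Your approach is essentially the same as the paper's: reuse the gadget construction from Lemma~\ref{lem:3sat22hard} and argue that it is approximation-preserving from MAX-3-SAT-4. The paper phrases this as a direct ``assume a PTAS for MAX-3-SAT-(2,2) and derive a PTAS for MAX-3-SAT-4'' argument (using $C'\le 12C$ and $OPT_{\I}\ge C/2$ to control the loss), whereas you cast it explicitly as an L-reduction; these are equivalent framings, and your $\alpha$-bound via $\#\text{clauses}(\I_{2,2})=O(\#\text{clauses}(\I_4))$ together with $OPT(\I_4)=\Theta(\#\text{clauses}(\I_4))$ is exactly what the paper uses.

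One point to tighten: the claim that the repair step ``can only increase the number of satisfied clauses'' is not literally correct. Setting $y_x=z_x=w_x$ to the value of $x$ may unsatisfy the unique original-type clause containing $y_x$ (or $z_x$) as a positive literal when $y_x$ was true but $x$ is false. The fix is that whenever such a change is necessary, at least one of the four implication gadget clauses was already unsatisfied (since if all four hold then $x=y_x=z_x=w_x$), so the net increase in unsatisfied clauses per variable is bounded by a constant; this yields $\beta=O(1)$ rather than $\beta=1$, which is all you need. The paper's own back-translation (simply reading off the value of $x$ without repairing) has the analogous slack, so both arguments land in the same place.
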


\begin{proof} 
    Consider an instance $\I$ of MAX-3-SAT-4 having $C$ clauses, and being $OPT_{\I} = C-t$ the number of clauses satisfied by an optimal solution, where $t\in\mathbb{N}$. We apply the same procedure described in Lemma~\ref{lem:3sat22hard} to construct an equivalent instance $\I'$ of MAX-3-SAT-(2,2), and let $OPT_{\I'}$ be the number of clauses satisfied by an optimal solution.

    First of all, notice that $\I'$ has clauses that come from $\I$ (possibly with some auxiliary variables instead of the original) plus some auxiliary clauses added in the construction. In particular, if we fix the values of the auxiliary variables so that $x=y_x=z_x=w_x$ for each $x$ where the corresponding modification was added, and also set the variables $d_{y,1}$ and $d_{y,2}$ added in the process to be true, then all the auxiliary clauses are satisfied. This means that $OPT_{\I'} \ge C' - t$, where $C'$ is the number of clauses in $\I'$. Furthermore, we have that $C'\le 12 C$: The number of literals in $C$ is $3C$, and for each literal we include up to four extra clauses (if they were removed and had to be included back we include three, and then we may include four extra clauses per variable).
    
    Suppose now that there is a PTAS for MAX-3-SAT-(2,2). We will run this PTAS on instance $\I'$, obtaining an assignment of values that satisfies at least $(1-\varepsilon)OPT_{\I'}$ clauses. Let us derive an assignment of values for $\I$ by assigning the same values to the variables that are in both $\I$ and $\I'$, and assigning value true to the variables that occur as four positive literals in $\I$ and false to the variables that occur as four negative literals in $\I$. In the worst case, all the unsatisfied clauses in $\I'$ also appear in $\I$, which implies that the derived assignment satisfies at least \[C - t - \varepsilon(C' - t) \ge C - t -\varepsilon(12C-t)\] clauses of $\I$.
    
    Notice that $C-t \ge \frac{C}{2}$, because if we consider the assignment that sets every variable to true and the assignment that sets every variable to false, one of them must satisfy at least half of the clauses. %Given that, it holds that \[12C - t \le 24(C-t)-t \le 24(C-t).\] 
    Putting everything together, we have that the number of clauses satisfied by our derived assignment for $\I$ is at least \[C - t -\varepsilon(12C-t) \ge C - t - 24\varepsilon(C-t) + \varepsilon t \ge (1-24\varepsilon)OPT_{\I},\] implying that there exists a PTAS for MAX-3-SAT-4, \textcolor{black}{a contradiction}. 
\end{proof}

\begin{theorem}
\label{thm:obscovapxhard} 
    The $4$-Obstruction Covering problem is APX-hard.
\end{theorem}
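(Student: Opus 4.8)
The plan is to give an $L$-reduction from MAX-3-SAT-$(2,2)$, which is APX-hard by Lemma~\ref{lem:3SAT22apxhard}, to the $4$-Obstruction Covering problem, using the very same instance map $\I\mapsto(G_{\I},4,L_{\I})$ built in the proof of Theorem~\ref{thm:n-4hard}. Write $k$ and $m$ for the number of variables and clauses of $\I$; note $3m=4k$ and that one of the all-true/all-false assignments satisfies at least $m/2$ clauses, so $\mathrm{OPT}_{\mathrm{SAT}}(\I)\ge m/2$. The first task is to pin down the optimum exactly: I claim $\mathrm{OPT}_{\mathrm{OC}}(G_{\I})=4k+t^{*}$, where $t^{*}:=m-\mathrm{OPT}_{\mathrm{SAT}}(\I)$. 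The bound $\le$ refines Lemma~\ref{lem:n-4to3sat22}: from a best assignment, take inside every $H_x$ its canonical $4$-link configuration (this already covers all obstructions lying inside the gadgets and the $K_{1,3}$ at every satisfied clause node), then add one link incident to each of the at most $t^{*}$ unsatisfied clause nodes. The bound $\ge$ will come out of the decoding step below.

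For the decoding map $g$, fix a feasible cover $F$ and, for each variable $x$, let $F_x$ be the set of links of $F$ incident to $V(H_x)$. Since every link of $L_{\I}$ is incident to exactly one gadget, $\sum_x|F_x|=|F|$, and since a cover must touch each of $x_1,\dots,x_6$ one has $|F_x|\ge4$ (as already argued in the proof of Lemma~\ref{lem:3sat22ton-4}); set $s_x:=|F_x|-4$, so $\sum_x s_x=|F|-4k$. If $s_x=0$, canonicalize $F_x$ to the ``true'' or ``false'' configuration exactly by the swaps in the proof of Lemma~\ref{lem:3sat22ton-4}, and set $x$ to the corresponding value; in this case both external links of $F_x$ go to clauses in which the chosen literal of $x$ is true. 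If $s_x\ge1$, set $x$ to the polarity carried by the majority of the clause nodes reached from $H_x$ through an external link of $F_x$ (ties broken arbitrarily). The structural point is that the two $K_{2,2}$ obstructions inside $H_x$ can only be covered by internal links, and the links able to cover them form two disjoint pairs of internal edges, so $F_x$ must contain at least two internal links; hence $F_x$ has at most $|F_x|-2=s_x+2$ external links, and since $x$ has exactly two positive and two negative occurrences, the minority polarity accounts for at most $\min\{2,\lfloor(s_x+2)/2\rfloor\}\le s_x$ of them. Consequently, after the majority choice the only clauses adjacent to $H_x$ through $F_x$ that the assignment can leave unsatisfied are these at most $s_x$ minority-polarity ones, and none at all when $s_x=0$. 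Because $F$ is feasible, the $K_{1,3}$ at every unsatisfied clause $C$ is covered by some external link, which lies in some $F_x$; that gadget then has $s_x\ge1$ and is charged with at most $s_x$ unsatisfied clauses, so the total number of unsatisfied clauses is at most $\sum_x s_x=|F|-4k$. Applying this to an optimal cover gives $\mathrm{OPT}_{\mathrm{OC}}(G_{\I})\ge 4k+t^{*}$, establishing the equality claimed above.

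It remains to verify the $L$-reduction inequalities. From $\mathrm{OPT}_{\mathrm{OC}}(G_{\I})=3m+t^{*}\le\tfrac{7}{2}m\le 7\,\mathrm{OPT}_{\mathrm{SAT}}(\I)$ we get condition (L1) with $\alpha=7$. For condition (L2): for any feasible cover $F$, the decoded assignment $g(\I,F)$ satisfies $\mathrm{val}_{\mathrm{SAT}}\ge m-(|F|-4k)$, hence $\mathrm{OPT}_{\mathrm{SAT}}(\I)-\mathrm{val}_{\mathrm{SAT}}\le(m-t^{*})-\big(m-(|F|-4k)\big)=|F|-(4k+t^{*})=\mathrm{val}_{\mathrm{OC}}(F)-\mathrm{OPT}_{\mathrm{OC}}(G_{\I})$, i.e.\ (L2) holds with $\beta=1$. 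Since $L$-reductions preserve APX-hardness, the theorem follows; the general-$d$ statement (Theorem~\ref{cor:n-dobscovapxhard}) then follows from the cardinality-preserving reduction of Theorem~\ref{thm:n-dobscovhard}. The crux, and the one genuinely new ingredient beyond what the earlier lemmas supply, is the gadget analysis for $|F_x|\ge5$: one must rule out that a single spare link inside $H_x$ lets a cover keep two oppositely-signed clauses simultaneously coverable, and this is exactly what the ``two squares need two disjoint internal links'' accounting forbids.
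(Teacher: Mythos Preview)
Your proposal is correct and follows essentially the same approach as the paper: the same instance map $\I\mapsto(G_{\I},4,L_{\I})$, the same majority-polarity decoding, and the same gadget analysis (the paper phrases it as ``mixed external polarities force $|F_x|\ge 5$, four externals force $|F_x|\ge 6$'', which is exactly your ``at least two internal links, hence minority $\le s_x$''). The only difference is packaging: you cast it as an explicit $L$-reduction with $\alpha=7$, $\beta=1$ and pin down $\mathrm{OPT}_{\mathrm{OC}}=4k+t^{*}$ exactly, whereas the paper uses the equivalent PTAS-contradiction framing and only needs the upper bound $\mathrm{OPT}_{\mathrm{OC}}\le 4k+t$.
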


\begin{proof} Consider a MAX-3-SAT-(2,2) instance $\I$ with optimal solution $OPT_{\I} = m-t$, where $m$ is the number of clauses and $t\in \mathbb{N}$. Consider the $4$-Obstruction Covering instance $(G_{\I},4, L_{\I})$ as defined in the proof of Theorem~\ref{thm:n-4hard}. Observe that there exists a feasible solution to $(G_{\I},4,L_{\I})$ whose size is $4k+t$, where $k$ is the number of variables in $\I$: For every variable $x$, if it is set to true in the optimal assignment for $\I$, then we take the links corresponding to the positive literals of the subgraph $H_x$ that connect them to their corresponding clauses plus $x_2x_3$ and $x_5x_6$, and otherwise we take the links corresponding to the negative literals instead plus $x_1x_2$ and $x_4x_5$. This choice of links will cover every obstruction in the subgraphs $H_x$, and the nodes corresponding to clauses that are satisfied by the optimal assignment in $\I$. For the clause nodes corresponding to clauses that are not satisfied, we simply take an arbitrary link incident to them. As there are $k$ subgraphs $H_x$, and their obstructions are covered by four links each, this solution has a cost of $4k+t$.

%\begin{lemma}
%    An augmentation instance that comes from $I$ has an optimal solution $4k+t$. 
%\end{lemma}

%Suppose the MAX-3-SAT-(2,2) instance $I$ has an optimal solution $OPT_{SAT}=m-t$. And an optimal solution to the augmentation problem $OPT_{AUG}$ with cost $4k+t$.

Let us assume that there exists a PTAS for the $4$-Obstruction Covering problem. We will run the PTAS on $(G_{\I},4,L_{\I})$, obtaining a solution of size at most $(1+\varepsilon)OPT_{(G_{\I},4,L_{\I})} \le (1+\varepsilon)(4k+t)$, where $OPT_{(G_{\I},4,L_{\I})}$ is the size of the optimal solution for $(G_{\I},4,L_{\I})$. We will construct an assignment of values for the variables in $\I$ as follows: For each variable $x$, if only the links connecting the variable nodes corresponding to positive literals of $x$ to their clauses are chosen, then we set the value of $x$ to be true; if only the links connecting the variable nodes corresponding to negative literals of $x$ to their clauses are chosen, then we set the value of $x$ to be false; otherwise, we set the value of $x$ to make true the literal with most incident chosen links that connect them to clause nodes (breaking ties arbitrarily). As argued in the proof of Lemma~\ref{lem:3sat22ton-4}, if for some variable $x$ we pick links connecting clause nodes to variable nodes which are not in opposite positions (i.e. corresponding to both positive and negative literals of $x$), then we require at least five links to cover the obstructions appearing in such a subgraph $H_x$; furthermore, if four links connecting clause nodes to $H_x$ are chosen, then since $x_2$ and $x_5$ are yet to be covered and $x_2x_5$ is not a link, we need at least six links to cover those obstructions. Thus, at most $t + \varepsilon(4k+t)$ clauses of $\I$ are not satisfied.

Similarly to the proof of Lemma~\ref{lem:3SAT22apxhard}, we have that $m-t \ge \frac{m}{2}$. Putting everything together, we get that the number of satisfied clauses in $\I$ by the constructed assignment is at least \[m-t - \varepsilon(4k+t) = m-t - \varepsilon(3m-t) \ge m-t - 6\varepsilon(m-t) + \varepsilon t \ge (1-6\varepsilon)OPT_{\I},\] implying that there is a PTAS for MAX-3-SAT-(2,2). \end{proof}

%\subsection{ Proof of Corollary~\ref{cor:n-dobscovapxhard}}

Now we can prove Theorem~\ref{cor:n-dobscovapxhard}. Consider the same construction as in the proof of Theorem~\ref{thm:n-dobscovhard}. Since, by construction, the set of links are the same for both instances and the obstructions are in a $1$-to-$1$ correspondence, the reduction is directly approximation-preserving, hence proving the result.
\end{appendix}
\end{document}